\algnewcommand{\LineComment}[1]{\State \(\triangleright\) #1}
\definecolor{darkblue}{rgb}{0,0.08,0.45}
\newcommand{\AG}{\mathcal{A}}
\newcommand{\ignore}[1]{}
\newcommand{\e}{\mathbb{E}}
\newcommand{\p}{\mathbb{P}}
\def\eps{\varepsilon}
\theoremstyle{plain}
\newtheorem{theorem}{Theorem}
\newtheorem{lemma}[theorem]{Lemma}
\newtheorem{fact}[theorem]{Fact}
\newtheorem{definition}[theorem]{Definition}
\newtheorem{claim}[theorem]{Claim} 
\newtheorem*{conjecture*}{Conjecture}
\newtheorem{question}{Question}
\author[1]{Amir Abboud\footnote{Supported by an Alon scholarship and a research grant from the Center for New Scientists at the Weizmann Institute of Science.}}
\author[1]{Nathan Wallheimer}
\affil[1]{
Weizmann Institute of Science\\
\href{mailto:amir.abboud@weizmann.ac.il}{amir.abboud@weizmann.ac.il}, \href{mailto:nathanwallh@gmail.com}{nathanwallh@gmail.com}}
\title{Worst-Case to Expander-Case Reductions}
\begin{document}

\maketitle

\begin{abstract}
In recent years, the \emph{expander decomposition method} was used to develop many graph algorithms, resulting in major improvements to longstanding complexity barriers.
This powerful hammer has led the community to (1) believe that most problems are as easy on worst-case graphs as they are on expanders, and (2) suspect that expander decompositions are the key to breaking the remaining longstanding barriers in fine-grained complexity.

We set out to investigate the extent to which these two things are true (and for which problems).
Towards this end, we put forth the concept of \emph{worst-case to expander-case} self-reductions.
We design a collection of such reductions for fundamental graph problems, verifying belief (1) for them. The list includes $k$-Clique, $4$-Cycle, Maximum Cardinality Matching, Vertex-Cover, and Minimum Dominating Set.
Interestingly, for most (but not all) of these problems the proof is via a simple gadget reduction, \emph{not} via expander decompositions, showing that this hammer is effectively useless against the problem and contradicting (2).

\end{abstract}

\thispagestyle{empty}
\newpage
\setcounter{page}{1}

\section{Introduction}

One of the most effective techniques in modern graph algorithms has been the \emph{expander decomposition method}. 
To solve a difficult problem on a worst-case instance $G$ on $m$ edges, it roughly proceeds as follows.
\begin{enumerate}
\item Decompose $G$ into the disjoint union of \emph{expanders} $H_1,\ldots,H_k$ (also called \emph{clusters}) plus $o(m)$ \emph{outer edges}; such an expander decomposition can be computed in $m^{1+o(1)}$ time \cite{KVV04,OV11,OSV12,ST13,SW19,chuzhoy2019deterministic,LS21,Kaplan22}. 
\item Run some computation on each expander $H_i$, utilizing the special properties of expanders to gain a speed-up, e.g., because the search space is restricted on expanders or because some processes can be accelerated.
\item Run some computation on the $o(m)$ outer edges (e.g., recurse), utilizing the fact that there are not too many such edges.

\end{enumerate}
The last two steps are problem-specific and often involve ingenious techniques. To keep this work focused, we choose to restrict ourselves to the following definition of expanders (that is most popular in our context).
    \begin{definition}[Expander]
    \label{def:expander}
    A graph $G=(V,E)$ is called an $\phi$-\emph{expander} if its \emph{conductance} is:
    $$
    \phi(G):= \min_{S \subset V} 
    \frac{|E(S,V\setminus S)|}{\min(vol(S), vol(V \setminus S))} \geq \phi 
    $$
    where $vol(S)$ is the sum of degrees of all nodes in $S$.\footnote{If $G$ is a singleton then $\phi(G) := 1$.}
    \end{definition}
The expander decomposition method itself is common to several celebrated results breaking longstanding bounds for classical problems, e.g., \cite{ST14,KLOS14,NSW17,CK19,AKT21,YS22}, leading the community to believe the intuition that ``all graph problems are as easy on worst-case graphs as they are on expanders (up to sub-polynomial factors)''. In other words, if we can solve the problem fast on expanders then we can also solve it fast on worst-case graphs. Moreover, the expander-decomposition method is viewed as the hammer proving this intuition. Of course, it's unlikely that this intuition is true for \emph{all} problems; since there are problems for which the state of the art on expanders is better than it is on worst-case graphs.\footnote{One interesting example is the problem of computing a Gomory-Hu tree of a graph. In weighted graphs, the expander case can be solved in $m^{1+o(1)}$ time (follows from Theorem~1.4 in \cite{AKT22friendly}) but the worst-case bound is $\tilde{O}(n^2)$ \cite{AKLPST21}.}

In this work, we set out to investigate the extent to which this intuition is true. 
Namely, we are looking to answer the following motivating question:
\begin{question}
\label{q:q1}
Which problems are as easy on worst-case graphs as they are on expanders? 
\end{question}
\noindent
To this end, we formalize a new notion of \emph{worst-case to expander-case self-reductions}. If a problem admits such a reduction then the answer to this question is positive, meaning that the worst-case graphs of these problems \emph{are} expanders. Essentially, the expander decomposition method can be viewed as a special case of such reductions, assuming that its three steps can be computed in significantly faster time than the best upper bound for the problem. 

Since the question is irrelevant for easy problems that are already known to be solvable in near-linear time (that is, in $\tilde{O}(m)$ time where $\tilde{O}(\cdot)$ hides poly-logarithmic factors) in the worst-case, we are looking for difficult problems, for which the best known algorithms still run significantly slower than near-linear (say, in $\Omega(n^2)$ time). To find such problems we look at the state of the art in fine-grained complexity.

Fine-grained complexity, with its reductions-based approach, has been effective at mapping out the landscape of fundamental problems that are difficult. In the context of graph problems, there is a large number of tight conditional lower bounds that are based on the conjectured hardness of detecting a triangle in a graph (Triangle Detection) and the related Minimum-Weight Triangle, $k$-Clique, Shortest Cycle, etc. (see e.g., \cite{WW18,DBLP:conf/focs/AbboudW14,AGV15,ABW18,AR18,LWW18} and the survey \cite{williams2018some}). It is known that in order to make progress on any of a large number of problems, one must first come up with a groundbreaking algorithm for Triangle Detection (and its variants). 

Notably, most reductions from Triangle Detection to other problems, e.g. Dynamic Maximum Matching, are \emph{expansion preserving} in the sense that if the Triangle Detection instance is an expander, then so is the Dynamic Maximum Matching instance. The reason is that many of these reductions employ only a small number of local changes to the graph that do not reduce the expansion by much. Consequently, if the answer to the question for Triangle Detection is positive, meaning that the hardest instances are expanders, then it is also positive for many other Triangle-hard problems. 
Thus, it is most interesting to investigate Question~\ref{q:q1} for Triangle Detection and other problems without conditional lower bounds.

But if the prevalent intuition is indeed true (due to expander decompositions), and the answer is positive for Triangle Detection, then all we have to do in order to refute the ``Triangle conjecture'' is to come up with a faster algorithm on expanders.
This may appear much easier. After all, expanders are similar in many ways to random graphs and we can solve Triangle Detection on random $G(n,p)$ graphs in expected subquadratic time.\footnote{Basic probability shows that with high probability, the graph contains at least $ 0.1 n^3 p^3$ triangles and that the maximum degree is bounded from above by $10np \log n$.
Thus, we can sample triplets of vertices and check if they induce a triangle, while alternately running the $\Delta^2 n$ algorithm for graphs with bounded degree $\Delta$ (that checks for an edge between every pair of neighbors for every vertex). This algorithm terminates in expected $\tilde{O}(\min\{1/p^3,n^3p^2\})$ time and it balances to $\tilde{O}(n^{9/5})$ when $p = n^{-3/5}$.}

Moreover, three previous results on Triangle Detection exploit its easiness on pseudo-random graphs: a combinatorial mildly subcubic algorithm using the Szemerédi Regularity Lemma \cite{BW12}, fine-grained reductions that make the graph random-like by removing dense pieces to prove hardness for approximate distance oracles \cite{ABKZ22}, and an almost-optimal \emph{distributed} algorithm using expander decompositions \cite{CPS21}. Indeed, many researchers have been wondering whether the ``hammer'' of expander decompositions is the right tool to refute these conjectures.

\begin{question}
\label{q:q2}
Are expander decompositions the key towards resolving the open questions of fine-grained complexity?
\end{question}

Despite the major advancements in fine-grained complexity over the last decade, there are still many remaining gaps in the complexity of classical problems, for which achieving tight lower bounds has been notoriously difficult. For example, we have the problems of Maximum Cardinality Matching, $4$-Cycle Detection, and All-Pairs Max-Flow. For the latter problem, a cubic upper bound by Gomory and Hu from 1961 stood for almost 60 years until it was broken (for simple graphs) using the \emph{expander decomposition method} \cite{AKT21}. This makes us even more motivated to study Question~\ref{q:q2}.

Finally, we note that both motivating questions are relevant even for NP-hard problems, where it is desirable to optimize the \emph{exact} exponential time complexity. So far, the reductions-based approach of fine-grained complexity has only found limited success in this regime (see \cite{Cygan+16} and an interesting recent barrier \cite{KM22}), leaving a state of the art that is full of gaps (see \cite{woeginger2003exact}). It is natural to wonder if expander decompositions can play a role in this regime too.

\subsection{Our Contribution: Worst-Case to Expander-Case Self-Reductions}
\label{sec:contributions}

A \emph{worst-case to expander-case} self-reduction (WTER) is a fine-grained self-reduction that translates some graph problem $\mathcal{A}$ to the same problem on $\phi$-expanders, for some value $\phi$. 
Such reductions effectively answer Question \ref{q:q1}, and can indeed be achieved for many problems via the expander decomposition method.
However, if we take a step back, forget all the technology of expander decompositions, and simply ask if a problem admits a WTER: it is natural to try a more na\"ive gadget reduction that turns any graph into an expander.


\paragraph{Direct-WTERs}
The straightforward approach to design a WTER for some problem $\AG$ is by showing an efficient transformation that takes an instance graph $G$ and transforms it into a $\phi$-expander $G'$, such that the solution $\mathcal{A}(G)$ can be efficiently computed from $\mathcal{A}({G'})$. We will call such reductions \emph{Direct-WTERs}. Direct-WTERs are easy to construct and analyze, they do not rely on heavy algorithmic machinery, and they create only one expander instance. Moreover, if a problem admits a Direct-WTER then we can assume that the worst-case graphs to this problem are already expanders. It is therefore futile to invest time trying to solve the problem using the expander decomposition method, since applying expander decompositions on worst-case graphs (which are expanders), may as well simply return the graph itself. Effectively, if the problem admits a Direct-WTER then the answer to Question \ref{q:q2} is negative.

In Direct-WTERs, one should try to optimize the following parameters: 1) the conductance $\phi$ of the output graph $G'$, 2) the running time of transformation, and 3) the \emph{blowup} in the size of $G'$, denoted by $(N,M) = (|V(G')|,|E(G')|)$. We remark that computing $\mathcal{A}(G)$ from $\mathcal{A}(G')$ will always be trivial in our examples (e.g., when $\AG(G)$ is a linear function of $\AG(G')$), so we do not consider it another parameter. The ``gold standard'' for Direct-WTERs is to obtain $\phi = \Omega(1)$, near-linear running time and near-linear blowup (meaning that $N = \tilde{O}(n)$ and $M = \tilde{O}(m)$). If a problem admits such gold standard reduction and it can be solved on $\Omega(1)$-expanders in $O(a(n,m))$ time, for some polynomially bounded function $a(n,m)$ (but not linear), then we can also solve on worst-case graphs in $\tilde{O}(a(n,m))$ time.

However, for problems whose best algorithms run in exponential time, we have to be more stringent in the efficiency requirements: even a blowup of $N = 2n$ weakens the result significantly because it means that to get any major speed-up over $2^{\alpha n}$ time on worst-case graphs (for some $\alpha >0$), one needs to gain a major speed-up over $2^{\alpha n/2}$ in expanders. Namely, gold standard reductions do not necessarily answer Question \ref{q:q1} for exponential-time problems. We can thus speak about a ``platinum standard'', where in addition to the requirements of the gold standard, we also have $N=n + o( n)$. 
If a problem admits such platinum standard reduction and it can be solved in time $2^{\alpha (1-\eps) N } \cdot N^{O(1)}$ on $\Omega(1)$-expanders (for any $\eps > 0$), then we can solve this problem in time $2^{\alpha (1-\eps) (n+o(n)) }\cdot (n+o(n))^{O(1)} = 2^{n (\alpha (1-\eps) + o(1))} \cdot n^{O(1)}$ on worst-case graphs.

Notably, all the Direct-WTERs in our results achieve $\phi = \Omega(1)$ and near-linear running time (but not necessarily near-linear blowup), so we keep the definition clean by not considering them as parameters:
\begin{definition}[Direct-WTER]
\label{def:direct}
A Direct-WTER to problem $\AG$ with blowup $(N,M)$ is a randomized, near-linear time transformation that takes an input graph $G$ and outputs an $\Omega(1)$-expander graph $G'$ w.h.p., such that $|V(G')| = N$, $|E(G')| = M$, and $\AG(G)$ can be computed in constant time from $\AG(G')$.
\end{definition}

In Section~\ref{sec:apps} we present our results which include very efficient Direct-WTERs for Triangle Detection, Maximum Cardinality Matching, and Vertex Cover, proving that the answer for each of them to Question~\ref{q:q1} is positive and to Question~\ref{q:q2} is (unfortunately) negative.
For other problems, however, such as $4$-Cycle, the only Direct-WTERs we could come up with have a super-linear blowup. 
In such cases, we ask whether a (not direct) WTER that does exploit expander decompositions can be obtained.


\paragraph{ED-WTERs}
Before continuing with this discussion, we need to state formally what is an expander decomposition:
\begin{theorem}[Theorems 5.1 and 5.3 in~\cite{Kaplan22}]
\label{thm:ed}
Given a graph $G = (V,E)$ of $m$ edges and a parameter $\phi$, there is a randomized algorithm that with high probability finds a partitioning of $V$ into $V_1,V_2,\ldots,V_k$ such that for all $i \in [k]$, the induced subgraph $G[V_i]$ has conductance $\phi(G[V_i]) \geq \phi$, and $\sum_{i=1}^k \delta(V_i) = O( \phi m \log^2 m)$. The running time of the algorithm is $O(m \log^7 m + \frac{m \log^4 m }{ \phi})$.
\end{theorem}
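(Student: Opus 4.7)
The plan is to follow the now-standard recursive cut scheme of Saranurak--Wang (whose runtime Kaplan later tightened). The top-level routine takes the current subgraph $H$ and calls a primitive $\textsc{CutOrCertify}(H,\phi)$ that either certifies $H$ is a $\phi$-expander (in which case $H$ is output as one of the $V_i$), or returns a cut $(A,B)$ whose conductance is at most $O(\phi \log^2 m)$ and whose smaller side has non-trivial volume. If it certifies, we stop; otherwise we recurse on both sides after applying a trimming step described below.

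First, I would implement $\textsc{CutOrCertify}$ via the cut-matching game of Khandekar--Rao--Vazirani combined with a unit-flow solver. Each of the $O(\log^2 m)$ rounds of the game produces a matching demand between a random bisection of the vertex set; routing this demand with an approximate max-flow oracle on $H$ takes $\tilde{O}(m/\phi)$ time using the local-flow algorithm of Henzinger--Rao--Wang / Saranurak--Wang. At the end of the game, either the routed flows embed an expander into $H$ (certifying the $\phi$-expansion of $H$), or an unrouted demand witnesses a cut whose conductance is within a $\mathrm{polylog}(m)$ factor of $\phi$.

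Second, when the returned cut $(A,B)$ is badly unbalanced, simply recursing on the large side $A$ is unsafe: $A$ may contain a low-conductance cluster even though $H$ was ``almost'' an expander. The remedy is the \emph{trimming} procedure: place a unit of excess on each vertex of $A$ incident to $E(A,B)$ and try to route it to a uniform sink via one more unit-flow call on $A$; the vertices whose excess cannot be routed form a set $S$ that is peeled off, and the remaining $A' := A \setminus S$ is guaranteed to be a $\phi$-expander. The peeled volume is absorbed into the cut $E(A,B) \cup E(S, A')$, increasing the boundary by only a constant factor.

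Finally, for the correctness bound $\sum_i \delta(V_i) = O(\phi m \log^2 m)$ I would use the textbook charging argument: every cross-edge produced by a cut is charged to the smaller of the two sides, and because the smaller side has at most half the volume of its parent, a single vertex participates in at most $O(\log m)$ charges; combined with the per-cut guarantee of $O(\phi \log m)$ conductance this yields the stated bound. For the runtime, the recursion has depth $O(\log m)$, each level performs $O(\log^2 m)$ cut-matching rounds, and each round issues flow calls that amortize to a total of $\tilde{O}(m/\phi)$ work per level, producing the $m \log^7 m + m \log^4 m / \phi$ bound once all polylogs are collected. The main obstacle will be the unbalanced-cut case: without the trimming analysis the potential ``smaller side'' argument breaks, so the non-trivial step is verifying that the extra flow call in trimming both preserves $\phi$-expansion of $A'$ and keeps the peeled volume charged against $|E(A,B)|$, which is where the delicate excess/absorbing potential in Saranurak--Wang (and its refinement in \cite{Kaplan22}) does the heavy lifting.
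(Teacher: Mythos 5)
This theorem is not proved in the paper at all: it is imported as a black box from \cite{Kaplan22} (which builds on Saranurak--Wang), so there is no in-paper argument to compare yours against. Your sketch is a faithful outline of how the cited works establish it---cut-matching game as the cut-or-certify primitive, trimming to handle unbalanced cuts, and a charging argument over the $O(\log m)$ recursion levels---so the approach is the right one.

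One concrete accounting gap if you actually intend to reprove the stated bounds: you assert a per-cut conductance guarantee of $O(\phi\log^2 m)$ from the cut-matching step, and then in the final charging step you silently switch to ``$O(\phi\log m)$ per cut.'' With $O(\phi\log^2 m)$ per cut and recursion depth $O(\log m)$, the charge yields $\sum_i \delta(V_i) = O(\phi m \log^3 m)$, which is the original Saranurak--Wang bound, not the $O(\phi m \log^2 m)$ claimed in the theorem. Shaving that extra logarithm is precisely the refinement contributed by \cite{Kaplan22}, and your sketch does not reproduce it; as written, the proof would establish a weaker boundary bound (and similarly the $m\log^7 m + m\log^4 m/\phi$ running time requires their tightened analysis rather than the generic ``collect all polylogs'' step).
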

\noindent
Now, the way that an expander decomposition based reduction works can be defined (loosely) in the following way:
\begin{definition}[ED-WTER]
\label{def:ed-wter}
Problem $\AG$ admits an \emph{expander decomposition-WTER} with conductance $\phi(n)$ and running time $t(n,m)$ (abbreviated as ED-WTER), if there exists an algorithm with oracle access to $\AG(A)$ that runs in time $t(n,m)$, that:
\begin{enumerate}
\item Applies the algorithm from Theorem \ref{thm:ed} to decompose the graph into $\phi(n)$-expanders $G[V_1],\ldots,G[V_k]$.
\item Computes a list of $\Omega(\phi(n))$-expanders $G_1,G_2,\ldots,G_{\ell}$ for some $\ell$. Typically, $\ell=k$ and each $G_i$ is obtained from $G[V_i]$ by a simple procedure (possibly $G_i = G[V_i]$). 
\item Makes oracle calls to $\AG$ on $G_1,\ldots,G_\ell$, and receives the solutions $\AG(G_1),\ldots,\AG(G_\ell)$.
\item Does some computation on the decomposed graph, and together with $\AG(G_1),\ldots,\AG(G_\ell)$, it computes the solution $\AG(G)$.
\end{enumerate}
\end{definition}
ED-WTERs are not captured by Definition \ref{def:direct} mainly because they do not produce only one expander instance $G'$, but many: $G_1,G_2,\ldots,G_\ell$. In fact, many fine-grained reductions work like this. In Section \ref{sec:definition}, we provide a formal definition of fine-grained WTERs that captures both definitions, but for the sake of clarity we will use only Definitions \ref{def:direct} and \ref{def:ed-wter} throughout the paper.

There are three disadvantages to ED-WTERs compared to Direct-WTERs. 
First, they tend to be more complicated and use heavier machinery.
Second, unlike Direct-WTERs, they do not give a negative answer to Question~\ref{q:q2}; if someone discovers a breakthrough algorithm on expanders then (without a Direct-WTER) the way to solve the worst-case problem is to use expander decompositions.
Third, and perhaps most importantly, ED-WTERs cannot produce \emph{true} expanders in the sense that $\phi=\Omega(1)$ but are only limited to proving hardness for graphs with $\phi=(\log{n})^{-O(1)}$.
This is because of the log factors in the expander decompositions (that provably cannot be avoided~\cite{SW19,alev2017graph})
which means that given a graph with expansion $\phi=1/\log{n}$ we cannot expect a decomposition algorithm to decompose it further into true expanders with $\phi=\Omega(1)$.

From the viewpoint of Question~\ref{q:q1}, a WTER that achieves $\phi=(\log{n})^{-O(1)}$ (or even $\phi=\Omega(n^{-\eps})$ for $\eps \to 0$, as we do in this work) is partially satisfying.
On the one hand, it does not show that the problem remains hard on true expanders; it is conceivable that a problem can be solved in $O(2^{1/\phi}\cdot n)$ time which would be linear when $\phi=\Omega(1)$ but inefficient on the class of graphs produced by an ED-WTER.
On the other hand, worst-case graphs have $\phi=O(1/n)$ and graphs with larger $\phi$ are already expander-like. 
Indeed, essentially all breakthroughs using the expander-decompositions method were obtained by exploiting the structures of ``expanders'' with $\phi=\Omega(n^{-\eps})$.

Our results are presented in Section~\ref{sec:apps}, and they include WTERs of both types. Let us conclude this section with two remarks.

\begin{itemize}

    \item Our framework is general with respect to the definition of an expander (e.g. vertex vs. edge expansion, with or without demands, etc.), and it could even be applied more broadly to any graph with strong structural properties (e.g. a Szemer\'edi \emph{regular} graph).
    The specific reductions, however, are sensitive to the exact definition and require modifications to satisfy each definition. 

    \item The reductions we design in this work are \emph{randomized}. This does not affect the message from our results because the problems we consider appear to be just as hard for randomized algorithms as well. 
\end{itemize}

\subsection{Applications}
\label{sec:apps}

\begin{table}[h!]
\begin{small}
\centering
\resizebox{\columnwidth}{!}{
\begin{tabular}{|c|c|c|c|c|}
\hline 
Problem & Best Upper Bound & $\phi$-Expander Lower Bound & $\phi$  \\ \hline 
Triangle Detection (dense graphs) & $n^{\omega}$ & $n^{\omega}$ & $\Omega(1)$ \\ \hline
Triangle Detection (sparse graphs) & $m^{\frac{2\omega}{\omega+1}}$ & $m^{\frac{2\omega}{\omega+1}}$ & $\Omega(1)$ \\ \hline
$k$-Clique & $n^{\omega k/3}$ & $n^{\omega k/3}$ & $\Omega(1)$   \\ \hline 
$4$-Cycle ($m=\Theta(n^{1.5})$) & $n^{2}$ & $n^{4/3}$ & $\Omega(1)$   \\ \hline 
$4$-Cycle ($m=\Theta(n^{1.5})$) & $n^{2}$ & $n^2$ & $\Omega(n^{-\eps})$   \\ \hline 
Subgraph Isomorphism (without pendant vertices)& $n^{f(H)} $ & $n^{f(H)/2}$ & $\Omega(1)$   \\ \hline 
Subgraph Isomorphism (without pendant vertices)& $m^{g(H)}$ & $m^{g(H)}$ & $\Omega(1)$   \\ \hline 
Maximum Cardinality Matching (dense graphs)& $n^{\omega}$ & $n^{\omega}$ & $\Omega(1)$   \\ \hline 
Maximum Cardinality Matching (sparse graphs)& $m \cdot \sqrt{n}$ & $m \cdot \sqrt{n}$ & $\Omega(1)$   \\ \hline 
Vertex Cover (bounded degree) & $1.2125^n \cdot n^{O(1)}$ & $1.2125^{n}/n^{O(1)}$ & $\Omega(1)$    \\ \hline 
Vertex Cover (bounded degree, parameterized) & $1.2738^k \cdot n^{O(1)}$ & $1.2738^{k}/n^{O(1)}$ & $\Omega(1)$    \\ \hline 
Minimum Dominating Set & $1.4969^n \cdot n^{O(1)}$ & $1.4969^{n/(1+\eps)} \cdot n^{O(1)}$ & $\Omega(1)$   \\ \hline 
\end{tabular}
}
\caption{In this table we present our results with respect to the current best upper bounds for the problems. The third column contains the hardness results on $\phi$-expander. In particular, any major improvement to the running times in the third column on $\phi$-expanders will result in a major improvement to the running times of the first column on worst-case graphs. For clarity, we hide any poly-logarithmic factors. }
\label{table:results}
\end{small}
\end{table}

Let us now present the collections of WTERs designed in this paper. 


\paragraph{Triangle Detection} The longstanding upper bound for detecting a triangle in an $n$-node, $m$-edge graph is $O(\min \{n^{\omega},m^{\frac{2 \omega}{\omega+1}}\}) $ \cite{AYZ97} where $\omega<2.38$ is the fast matrix multiplication exponent \cite{AlmanW20}.
Even if $\omega=2$, the upper bound would only be $m^{4/3}$ on sparse graphs. There are different hardness assumptions of different strengths, stating that Triangle cannot be solved much faster, e.g. in $m^{1+o(1)}$ or $m^{4/3-\eps}$ time (see \cite{DBLP:conf/focs/AbboudW14}). The next theorem shows that any polynomial speed-up in the expander-case will break these assumptions in worst-case graphs.
\begin{theorem}[Direct-WTER for Triangle Detection]
Triangle Detection admits a Direct-WTER with blowup $(N,M) = (O(n),\tilde{O}(m))$.
\end{theorem}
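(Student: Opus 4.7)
The plan is to build $G'$ in two stages: a deterministic ``triplication'' that turns $G$ into a tripartite graph $G_0$ whose triangles encode those of $G$, followed by a ``shadow'' augmentation that inserts constant-degree bipartite expanders to boost conductance without introducing any new triangles. For triplication, take three fresh copies $V^{(1)}, V^{(2)}, V^{(3)}$ of $V$, and for every $\{u,v\} \in E(G)$ insert the six edges $\{u^{(i)}, v^{(j)}\}$ for all $i \neq j$ in $\{1,2,3\}$. The resulting $G_0$ is tripartite on $(V^{(1)}, V^{(2)}, V^{(3)})$ with $3n$ vertices and $6m$ edges; since the parts are independent sets, every triangle of $G_0$ is ``rainbow'' and each $G$-triangle yields exactly six rainbow triangles (one per ordering of the three parts), so $G$ has a triangle iff $G_0$ does. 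For the shadow stage, add fresh vertex sets $U_1, U_2, U_3$ of $n$ vertices each; sample a constant-degree random bipartite expander $B_i$ between $V^{(i)}$ and $U_i$ for each $i$; and sample two further bipartite expanders $H_{12}$ between $U_1$ and $U_2$ and $H_{23}$ between $U_2$ and $U_3$, deliberately \emph{omitting} an $H_{13}$ connection. Define $G'$ as the union of all of these graphs. Assuming WLOG $m \geq n$ (below this threshold brute-force already beats the expander-case bound), we have $|V(G')|=6n=O(n)$ and $|E(G')|=6m+O(n)=\tilde O(m)$, and the construction runs in near-linear randomized time.

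I then need to check that $G'$ has the same triangles as $G_0$, which follows from a case analysis on which clusters the three vertices of a putative $G'$-triangle lie in. A $U_k$-vertex has $V$-neighbors only in $V^{(k)}$, which is independent in $G_0$, so a triangle cannot contain exactly one $U$-vertex. A $V^{(i)}$-vertex has $U$-neighbors only in $U_i$, and $U_i$ has no intra-cluster edges (the $H$'s are bipartite between distinct $U_j$'s), so a triangle cannot contain exactly one $V$-vertex. A triangle lying entirely inside $U$ would need edges across all three of the pairs $(U_1, U_2)$, $(U_2, U_3)$, and $(U_1, U_3)$, but no $U_1$--$U_3$ edges exist in $G'$. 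Hence every triangle of $G'$ sits inside $G_0$, giving the triangle equivalence.

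The second and main claim is that $\phi(G') = \Omega(1)$ with high probability over the random bipartite expanders. I view $G'$ as a graph living over the six clusters $\{V^{(i)}\} \cup \{U_i\}$: within each of the five bipartite-expander pieces, Cheeger-type estimates give $\Omega(1)$ cross-edges relative to the volume the piece contributes; and any cut that respects cluster boundaries must cross either the $G_0$-triangle on the $V^{(i)}$'s or one of the ``spine'' bipartite expanders along $V^{(i)}$--$U_i$--$U_{i'}$--$V^{(i')}$, all of which contribute $\Omega(n)$ edges (and more when $G$ is dense). Summing these contributions over the $O(1)$ topological patterns of cluster-level cuts gives $|E_{G'}(S, \bar S)| = \Omega(\min(\mathrm{vol}(S), \mathrm{vol}(\bar S)))$ for every cut $S$.

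The chief delicacy is this last expansion argument, because the cluster quotient is only a path-plus-triangle rather than an expander itself, so no black-box expander-composition theorem applies. However, only $O(1)$ cluster-level cut patterns need be analyzed, and in each one the bipartite-expander and $G_0$-triangle contributions together dominate the volume, so the case analysis, though tedious, is routine.
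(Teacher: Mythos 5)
Your triplication step and the triangle-preservation case analysis are sound and match the paper's strategy (the paper makes $G$ $k$-partite in exactly this way and attaches its expansion gadget only to independent parts so that no new triangle can close). The fatal gap is in the conductance claim: your shadow expanders $B_i$, $H_{12}$, $H_{23}$ have \emph{constant} degree, so across any cut they can contribute only $O(n)$ edges, while the cuts you must repair can have volume $\Theta(m)=\omega(n)$ on both sides. Concretely, let $G$ be two $n/2$-cliques $A$ and $B$ joined by one edge, and take $S=A^{(1)}\cup A^{(2)}\cup A^{(3)}$ together with the $O(|A|)$ shadow vertices it touches. Then $\mathrm{vol}(S)$ and $\mathrm{vol}(\bar S)$ are both $\Theta(n^2)$, but $\delta_{G'}(S)=O(n)$ (a constant number of $G_0$-edges plus the constant-degree shadow edges), so $\phi(S)=O(1/n)$. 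This also shows the problem is not confined to exotic cuts: it is inherited directly from any sparse cut of $G$, and no cluster-level case analysis can save it. Your parenthetical ``(and more when $G$ is dense)'' for the spine expanders is exactly the unjustified step --- those pieces never supply more than $O(n)$ edges.

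The fix, which is the heart of the paper's construction, is to make the random edges into the expansion layer \emph{degree-proportional}: after padding all degrees up to $C\log n$ (e.g.\ with self-loops that are later removed), each vertex $v$ of the designated independent part samples $\deg_G(v)$ random neighbors in a layer $U$ of size $\Theta(n)$, so that for any set $S$ the expected number of random edges leaving $S$ is $\Omega(\mathrm{vol}_G(S))$, and a Chernoff-plus-union-bound over all cuts (which needs the $\log n$ minimum degree) makes this hold for every cut simultaneously. The paper attaches this layer to only one part $V_1$ and then needs an additional structural lemma: using the fact that every vertex has the same number of neighbors in each other part, it shows that any cut that is non-expanding inside $G_0$ must place an $\Omega(1/k^2)$ fraction of its volume in $S\cap V_1$, so the layer's cut edges dominate. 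Your idea of attaching shadows to all three parts could in principle remove the need for that lemma, but only if you replace the constant-degree bipartite expanders with degree-proportional random bipartite graphs and redo the union-bound argument; as written, the construction does not yield $\phi(G')=\Omega(1)$.
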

In fact, our result for Triangle Detection is merely a special case of a result for the more general problem of detecting $k$-clique in a graph, known simply as the $k$-Clique problem. The best known upper bound for $k$-Clique is $O(n^{\omega k/3})$ when $k$ is divisible by $3$. Otherwise the upper-bound is slightly larger. One of the strongest conjectures in fine-grained complexity (meaning least likely to be true) is that we cannot do better (see \cite{ABW18}). For clarity, let us assume that $k\geq3$ is constant. Then we can state our result in the following way: 
\begin{theorem}[Direct-WTER for $k$-Clique]
\label{thm:k-clique}
For all constants $k \geq 3$, there is a Direct-WTER with blowup $(N,M) = (O(n),\tilde{O}(m))$.
\end{theorem}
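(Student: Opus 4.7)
The plan is a direct gadget reduction: take $G$ and adjoin a fresh auxiliary expander structure designed so that (i) the combined graph is an $\Omega(1)$-expander, and (ii) every $k$-clique in the combined graph lies entirely inside $V(G)$. Concretely, I would introduce a set $U$ of $O(n)$ new vertices, attach each $v\in V(G)$ to a private pool of constantly many ``gateway'' vertices in $U$ via a matching, and overlay on $U$ a constant-degree random regular graph $R$ that is an $\Omega(1)$-expander. The final graph is $G' = G \cup \text{matching} \cup R$. WLOG $m \ge n$ (else $G$ is essentially empty and can be handled trivially), which makes $|V(G')| = n + O(n) = O(n)$ and $|E(G')| = m + O(n) = O(m)$.

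To verify $k$-clique preservation (for $k \ge 3$), I would show that every $k$-clique in $G'$ avoids $U$. The key observation is that $R$ is triangle-free (a random constant-degree regular graph has only $O(1)$ triangles in expectation, which can be removed individually without ruining expansion), and each $u \in U$ is matched to a single $V(G)$-vertex $v_u$, with no edge between $v_u$ and any other $U$-vertex. Hence for any $u \in U$, its neighbors in $G'$ are $\{v_u\} \cup N_R(u)$, and one checks by cases that no two such neighbors are adjacent in $G'$, so $u$ lies in no triangle — and therefore in no $k$-clique. For finer variants (say $k=3$ when $U$-side degree is slightly larger), one can additionally require that private pools be chosen by rejection sampling so that no $v$'s neighbors in $G$ share a $U$-gateway.

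The conductance analysis is the step I expect to be the main obstacle. The expander $R$ directly gives $\Omega(\mathrm{vol}(S\cap U))$ boundary edges for any cut, and the matching transfers cuts between $V(G)$ and $U$; together these handle cuts dominated by $U$-volume or by small $V(G)$-subsets. The hard case is cuts $S\subseteq V(G)$ where the volume is dominated by $G$'s \emph{internal} edges (for instance $G = K_{n/2}\sqcup K_{n/2}$, where $\mathrm{vol}_G(S)=\Theta(n^2)$ but $E_G(S,\bar S)$ is tiny): here the $O(n)$ edges coming from the gadget are not enough to give constant conductance. Overcoming this will require allowing the gadget to scale with $m$: I would widen the matching so that each $v$ has $\Theta(\deg_G(v))$ private gateways (keeping $|U|=O(n)$ by packing gateways densely and keeping $|E|=\tilde O(m)$ by the edge budget), so that the auxiliary structure's volume dominates $G$'s internal volume on both sides of every cut, after which the expansion of $R$ on $U$ plus the matching propagates $\Omega(1)$ conductance back to $V(G)$.

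Once the conductance lemma is in hand, packaging the three items — near-linear construction time, $(O(n),\tilde O(m))$ blowup, and $k$-clique $\Leftrightarrow k$-clique equivalence — into the form of Definition~\ref{def:direct} completes the proof.
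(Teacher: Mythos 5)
Your high-level plan (adjoin an auxiliary expander structure whose vertices cannot participate in any triangle, hence in any $k$-clique) is the right instinct, but the conductance fix you propose in the third paragraph collides irreparably with clique-preservation, and this is exactly the tension the paper's proof is built to resolve. To handle cuts whose volume is dominated by $G$'s internal edges you need each $v$ to send $\Theta(\deg_G(v))$ edges into $U$ (this matches the paper's Lemma~\ref{lem:basic}). But to guarantee that no vertex of $U$ lies in a triangle, any two vertices $u,v$ that are \emph{adjacent in $G$} must have disjoint gateway sets in $U$ --- otherwise a shared gateway $w$ gives the triangle $\{u,v,w\}$. If $G$ contains a clique on $t=\Theta(\sqrt{m})$ vertices, each of those vertices needs $\Theta(t)$ gateways and all $\binom{t}{2}$ pairs must have disjoint pools, forcing $|U|=\Omega(t^2)=\Omega(m)$. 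So ``packing gateways densely'' while keeping $|U|=O(n)$ is not possible: you either keep $N=O(n)$ and create spurious triangles, or you avoid them and inflate $N$ to $\Theta(m)$ (which is precisely the blowup the paper's $4$-Cycle Direct-WTER suffers and the $k$-Clique one is designed to avoid). Your rejection-sampling remark does not escape this counting obstruction.

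The paper's missing idea is a $k$-partition step performed \emph{before} attaching the expansion layer: replace $G$ by its $k$-partite blowup on parts $V_1,\dots,V_k$ (so the $k$-clique count is multiplied by $k!$), and attach the expansion layer $U$ only to the part $V_1$, with each $v\in V_1$ sampling $\max\{\deg_G(v),C\log n\}$ neighbors in $U$. Because $V_1$ is an independent set, every neighbor of a $U$-vertex lies in the independent set $V_1\cup$ (nothing else), so shared gateways never create a triangle and no $k$-clique touches $U$ --- disjointness of pools is simply not needed. The blowup ensures each vertex has the same number of neighbors in every other part, which is what replaces your ``propagate expansion from $U$ back to all of $V(G)$'' step: the paper shows that any cut $S$ with $\phi_G(S_V)$ small must place an $\Omega(1/k^2)$ fraction of its volume in $V_1$, where Lemma~\ref{lem:basic} then supplies $\Omega(vol(S))$ boundary edges into $U$, yielding conductance $\Omega(1/k^2)=\Omega(1)$ for constant $k$. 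Without some analogue of this regularity-across-parts property, your construction has no way to rule out a sparse cut of $G$ that barely touches the vertices carrying the auxiliary edges.
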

For completeness, we remark that if $k = \omega(1)$, then the blowup would be $(N,M) = (O(nk),\tilde{O}(mk^2))$ and the conductance $\phi = \Omega(1/k^2)$. 

Another problem that is closely related to Triangle Detection is $4$-Cycle, for which achieving tight bounds has been notoriously difficult. The longstanding upper bound is $O(\min\{n^2,m^{4/3}\})$ \cite{AYZ97}. Back in 1999, it had been conjectured that a subquadratic algorithm is not possible \cite{YZ97}, but only very recently a super-linear lower bound of $\Omega(m^{1.11})$ (assuming the hardness of Triangle Detection) was obtained \cite{ABKZ22}. 
We remark that the hard cases of $4$-Cycle are in the regime of $m = O(n^{1.5})$, as a simple counting argument shows that any graph with at least $n^{1.5}$ edges contains a $4$-Cycle. 

For $4$-Cycle we design both a Direct-WTER and also an ED-WTER. Nonetheless, both WTERs are relevant because each has a different drawback.
\begin{theorem}[Direct-WTER for $4$-Cycle]
\label{thm:4-cycle}
There is a Direct-WTER to $4$-Cycle with blowup $(N,M) = (\tilde{O}(m),\tilde{O}(m))$.
\end{theorem}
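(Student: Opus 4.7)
The plan is to construct $G'$ by attaching to every vertex of $G$ a bundle of length-$2$ pendants that land in a common high-girth hub expander. After preprocessing $G$ to discard isolated vertices (so $n \leq 2m$), for each $v \in V(G)$ I would introduce $k_v := \deg_G(v)$ pendants $v - p_v^i - q_v^i$ for $i \in [k_v]$, and install on $V_Q := \{q_v^i\}_{v,i}$ (which has exactly $2m$ vertices) a constant-degree $\Omega(1)$-expander $H$ of girth at least $5$; such a hub is producible in $\tilde{O}(m)$ randomized time, either by taking a random $d$-regular graph and deleting the $O(1)$ edges that participate in its short cycles, or by an explicit Ramanujan-type construction. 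Set $V(G') := V(G) \cup \{p_v^i\}_{v,i} \cup V_Q$ and $E(G') := E(G) \cup \bigcup_{v,i}\{(v, p_v^i), (p_v^i, q_v^i)\} \cup E(H)$; then $|V(G')| = n + 4m = O(m)$ and $|E(G')| = O(m)$, and the construction runs in $\tilde{O}(m)$ time.

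For correctness of the $4$-cycle reduction, since $E(G) \subseteq E(G')$ every $4$-cycle of $G$ persists. For the converse, the edges of $G'$ fall into four classes---V-V (inside $V(G)$), V-P (star), P-Q (path), and Q-Q (hub)---and every $p_v^i$ has degree exactly $2$, with its unique V-neighbor $v$ and unique Q-neighbor $q_v^i$. I would enumerate the cyclic sequences of four vertex-types admitting the edge-type transitions V$\to$\{V,P\}, P$\to$\{V,Q\}, Q$\to$\{P,Q\}; for every such sequence other than VVVV, realization as an actual $4$-cycle in $G'$ is forbidden. A cycle with exactly one P-vertex forces the two cycle-neighbors of that $p_v^i$ to be two distinct V- (or Q-) vertices that both must equal $p_v^i$'s unique V- (or Q-) neighbor; the mixed patterns VPVP, VPQP, and QPQP force two distinct P-vertices to share the same unique V- or Q-neighbor; and QQQQ is excluded by the girth of $H$. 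Hence the $4$-cycles of $G'$ coincide with those of $G$.

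For expansion, up to the constant factor lost by edge-subdividing the P-Q edges it suffices to show $\phi = \Omega(1)$ for the contracted auxiliary graph $G''$ on $V(G) \sqcup V_Q$ whose edges are $E(G)$, the attachments $\{(v, q_v^i)\}_{v,i}$, and $E(H)$. For a cut $S = S_V \sqcup S_Q$ with $vol(S) \leq vol(\bar S)$, we have $vol(S) = 2\, vol_G(S_V) + (d_H+1)|S_Q|$, and the cut lower-bounds to the attachment term $\sum_{v \in S_V}(k_v - a_v) + \sum_{v \notin S_V} a_v$ (where $a_v := |\{i : q_v^i \in S_Q\}|$) plus the internal hub-cut. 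A two-regime split gives the bound: when $vol_G(S_V)$ dominates $vol(S)$ the attachment term alone yields an $\Omega(vol(S))$ cut unless $\sum_{v \in S_V} a_v$ is close to $vol_G(S_V)$, which then pushes $|S_Q|$ into the other regime; when $(d_H+1)|S_Q|$ dominates, the $\Omega(1)$-conductance of $H$ delivers a hub-cut of $\Omega(d_H|S_Q|) = \Omega(vol(S))$. The main obstacle is the ``aligned'' cut $S_Q = \bigsqcup_{v \in S_V}\{q_v^i\}_i$, where every attachment edge is internal to $S$ and the attachment-cut vanishes; here the identity $|S_Q| = vol_G(S_V)$ balances the two sides' volumes and the argument must lean entirely on the hub's expansion, verifying that $\phi(H) \cdot d_H |S_Q|$ is a constant fraction of $(d_H+3)|S_Q|$. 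Handling this aligned case (and its volume-reversed counterpart) and stitching it to the two dominance regimes is the crux of the expansion analysis.
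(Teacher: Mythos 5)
Your plan is correct, but it takes a genuinely different route from the paper. The paper obtains Theorem~\ref{thm:4-cycle} as a special case of its Subgraph Isomorphism reduction (Theorem~\ref{thm:subgraph-isomorphism}): it runs the expansion-layer construction of Section~\ref{sub:basic} (each $v$ attaches $\deg_G(v)$ \emph{random} edges to a layer $U$ of size $O(n)$, plus deterministic bicliques), and then subdivides every new edge $k=4$ times so that no $C_4$ can use them; the conductance bound is inherited from Theorem~\ref{thm:basic}, whose proof is probabilistic (Chernoff plus a union bound over all cuts), and degrades by $O(k)$ under subdivision. You instead route each vertex's $\deg_G(v)$ pendant length-$2$ paths into a \emph{dedicated} set of endpoints carrying a fixed constant-degree, girth-$\geq 5$ hub expander; correctness follows from your type-pattern case analysis (degree-$2$ middle vertices with unique neighbors on each side, plus the hub's girth, which correctly must exclude $C_3$ as well as $C_4$), and the expansion argument is essentially deterministic given the hub: all mixing comes from $\phi(H)$, and the volume accounting you sketch does close the ``aligned cut'' case, since $vol(S)\leq vol(\bar S)$ forces $|S_Q|\leq \frac{d_H+3}{d_H+1}m$, so $2m-|S_Q|=\Omega(|S_Q|)$ for any $d_H\geq 3$ and the hub cut is $\Omega(vol(S))$. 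What each approach buys: yours avoids the union bound over exponentially many cuts and is effectively derandomizable (the only randomness is in building the hub, which explicit high-girth expanders remove), while the paper's single construction generalizes uniformly to every pattern $H$ without pendant vertices by tuning the subdivision length, and reuses Theorem~\ref{thm:basic} as a black box. Both pay the same $\tilde O(m)$ vertex blowup, which is the real drawback of any direct reduction here and is why the paper complements it with the ED-WTER of Theorem~\ref{thm:ed-wter}. The one step you should spell out when writing this up is the transfer of conductance from the contracted graph $G''$ to $G'$ across cuts that separate a $p_v^i$ from both of its neighbors; it is routine (the paper glosses over the analogous step for subdivisions) but it is where the final constant is lost.
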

The drawback of the Direct-WTER is a blowup in the number of vertices ($\tilde{O}(m)$ instead of $O(n)$), significantly weakening the result. In particular, worst-case graphs with $m = \Theta(n^{1.5})$ edges are transformed to expanders with $N = \Omega(n^{1.5})$ vertices. Therefore, in order to gain a speed-up of $\tilde{O}(n^{2-\eps})$ on worst-case graphs, one needs a major speed-up of $\tilde{O}(N^{\frac{4}{3}(1-\eps/2)}) = \tilde{O}(n^{2-\eps})$ on $\Omega(1)$-expanders. In search of WTERs that avoid this major drawback, we manage to obtain the next ED-WTER: 

\begin{theorem}[ED-WTER for $4$-Cycle]
\label{thm:ed-wter}
For any $\eps >0$, there is a ED-WTER for $4$-Cycle that runs in time $\tilde{O}(n^{\eps} m + n^{2-\eps/2})$, with conductance $\phi = n^{-\eps}$.
\end{theorem}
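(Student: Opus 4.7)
The plan is to follow the template of Definition~\ref{def:ed-wter}. First I would apply Theorem~\ref{thm:ed} with conductance parameter $\phi = n^{-\eps}$, partitioning $V$ into clusters $V_1, \dots, V_k$ with $\phi(G[V_i]) \geq \phi$ and an outer edge set $F$ of size $\tilde{O}(\phi m) = \tilde{O}(n^{-\eps} m)$; this costs $\tilde{O}(m/\phi) = \tilde{O}(n^\eps m)$ and accounts for the first term of the target bound. Setting $G_i := G[V_i]$ produces the required list of $\Omega(\phi)$-expanders, and calling the $4$-Cycle oracle on each catches every $4$-cycle entirely contained in a single cluster.

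The key structural observation for the cross-cluster case is that any $4$-cycle uses either $0$ or at least $2$ outer edges: a closed walk crossing cluster boundaries must do so an even number of times. So after the oracle calls, it remains to detect $4$-cycles with at least $2$ outer edges. I would do this by a variant of Yuster--Zwick's pair-marking procedure filtered by $F$: for each outer edge $(u,v) \in F$ and each neighbor $w$ of $v$, record the pair $\{u,w\}$ with witness $v$, and declare a $4$-cycle found whenever some pair is recorded with two distinct witnesses. A short case analysis on how the $\geq 2$ outer edges are distributed on the four edges of a candidate cycle $abcd$ shows that at least one of the diagonal pairs $(a,c)$, $(b,d)$ is witnessed twice via $2$-paths each containing an outer edge, so the procedure catches every such cycle.

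The main obstacle is keeping this cross-cluster step within the $\tilde{O}(n^{2-\eps/2})$ budget, since the naive marking cost $\sum_v f_v \deg(v)$ (with $f_v$ the outer-degree of $v$) can easily be $\Omega(n^2)$ even when $|F|$ is small. I would address this by a degree split at a threshold $\Delta = n^{\Theta(1)}$: the $O(m/\Delta)$ high-degree vertices can be enumerated pairwise, while low-degree markings are charged against $|F|\cdot \Delta$, and a few additional subcases handle $4$-cycles mixing high- and low-degree vertices. Combining with $|F| = \tilde{O}(n^{-\eps} m)$ and the hard regime $m = O(n^{1.5})$ (outside of which $4$-cycles exist by a simple counting argument), a balanced $\Delta$ delivers the $n^{\eps/2}$ saving over the classical $O(n^2)$ algorithm. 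Phases~1 and~2 are routine applications of Theorem~\ref{thm:ed} and the oracle definition; the substance of the proof lies in the filtered-marking together with the degree-split analysis.
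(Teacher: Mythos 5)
Your proposal matches the paper's proof essentially step for step: decompose with $\phi=n^{-\eps}$, call the oracle on each cluster, observe that any remaining $4$-cycle must use at least two outer edges, and detect these by marking $2$-paths that contain an outer edge, with a degree split at threshold $n^{(1+\eps)/2}$ that balances the $|F|\cdot\Delta$ marking cost against the high-degree work to land at $\tilde{O}(n^{2-\eps/2})$. The one place the paper is cleaner is the high-degree phase: instead of enumerating high-degree vertices pairwise, it spends $O(n)$ per high-degree vertex marking its neighbors-of-neighbors (a collision is forced after at most $n$ marks), which catches every $4$-cycle through \emph{any} high-degree vertex and thus eliminates the mixed high/low-degree subcases you defer; also note that your parity justification for the outer-edge count is not literally correct (a cycle can cross cluster boundaries an odd number of times $\geq 3$), though the conclusion you actually use --- zero or at least two outer edges --- is exactly the paper's claim and holds because a single outer edge cannot be closed up by intra-cluster edges alone.
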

This WTER suffers from the disadvantages of ED-WTERs discussed in Section~\ref{sec:contributions}. However, it provides much better hardness results, in the sense that any $\tilde{O}(N^{2-\delta})$ speed-up on $\tilde{\Omega}(n^{-\eps})$-expanders will imply a $\tilde{O}(n^{2-\delta})$ speed-up on worst-case graphs. An interesting open question to ask now is whether $4$-Cycle does become easy on $\Omega(1)$-expanders while it remains hard on $\Omega(n^{-\eps})$-expanders. 

\noindent
We now conclude this line of problems with a general result for the \emph{Subgraph Isomorphism} problem.
\paragraph{Subgraph Isomorphism}
The foregoing problems were special cases of the Subgraph Isomorphism problem that asks to detect a $k$-node pattern graph $H$ inside a host graph $G$ (not necessarily as an induced subgraph). The complexity of solving Subgraph Isomorphism can vary wildly depending on $H$. For example, while the best algorithms for $k$-Clique run in time $O(n^{\omega k/3})$, there are much more efficient algorithms for problems such as $k$-Cycle and $k$-Path, that run in $O(n^{\omega})$~\cite{alon1995color}. Thus, for every pattern $H$ we define a constant $f(H)$ (respectively, $g(H)$) that is the minimum number such that Subgraph Isomorphism (with respect to $H$) can be solved in ${O}(n^{f(H) + o(1)})$ time (respectively, ${O}(m^{g(H) + o(1)})$ time in sparse graphs). We will focus on a variant of Subgraph Isomorphism in which $H$ does not contain pendant vertices (vertices whose degree is one). We note that this variant is still general enough to capture the hard and interesting cases such as $k$-Clique and $k$-Cycle. Our result for this problem is stated by the next theorem:
\begin{theorem}[Direct-WTER for Subgraph Isomorphism without pendant vertices]
\label{thm:subgraph-isomorphism}
There is a Direct-WTER for Subgraph Isomorphism for patterns that do not contain pendant vertices. Assuming that $k$ is constant, the blowup is $(N,M) = (\tilde{O}(m),\tilde{O}(m))$.
\end{theorem}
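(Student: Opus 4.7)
The plan is to adapt the construction behind Theorem~\ref{thm:4-cycle} (the Direct-WTER for $4$-Cycle) to arbitrary patterns $H$ with minimum degree at least two, targeting the same blow-up $N=M=\tilde{O}(m)$.

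\textbf{Construction.} I build $G'$ by keeping $G$ intact as a subgraph and adding an auxiliary part of size $\tilde{O}(m)$: for each edge $e=\{u,v\}$ of $G$ introduce a gadget vertex $x_e$ connected ``thinly'' to $V(G)$ (e.g.\ by a single edge into $\{u,v\}$). Then overlay an $\Omega(1)$-expander $X$---either a random $d$-regular graph with $d=\Theta(\log m)$ or a deterministic near-linear-time construction---on the set of gadget vertices $\{x_e\}$. The resulting $G'$ has $N = n+m = \tilde{O}(m)$ vertices and $m + O(dm) = \tilde{O}(m)$ edges, and the whole transformation runs in near-linear time.

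\textbf{Expansion.} A random $d$-regular graph on $\Theta(m)$ vertices is an $\Omega(1)$-expander w.h.p. The original vertices of $G$ attach to this expander only through a bounded number of branch edges per gadget vertex, so a direct conductance computation using Definition~\ref{def:expander} shows $\phi(G')=\Omega(1)$ once $d$ is large enough compared to the maximum attachment multiplicity, and any low-conductance cut in $G'$ is dominated by a corresponding cut inside $X$.

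\textbf{Preservation of $H$ and the main obstacle.} The forward direction $H\subseteq G \Rightarrow H\subseteq G'$ is immediate because $G$ sits inside $G'$. The reverse is where the no-pendant hypothesis does its work: every vertex in an image of $H$ in $G'$ has degree at least two in the copy, and I will engineer the attachment of each $x_e$ so that any use of $x_e$ inside an $H$-image effectively forces the copy to ``shortcut'' back through the original edge $e \in E(G)$. The main obstacle, already present in the $4$-Cycle case, is patterns that contain a degree-two vertex (such as a cycle factor of $H$): an auxiliary $x_e$ could a priori realize such a vertex by pairing its single attachment edge with one expander edge. I plan to rule this out by (i) taking $X$ from a high-girth expander construction with girth exceeding $|V(H)|$, so that no small $H$-image can live mostly inside $X$; and (ii) keeping the attachment degree from each $x_e$ into $V(G)$ at most one, so that using $x_e$ as an internal vertex of an $H$-image must consume at least one edge of $X$, which combined with (i) contradicts the structure of $H$. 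The assumption that $k=|V(H)|$ is constant absorbs the multiplicative overhead of the girth requirement as well as of the amplification needed for the ``w.h.p.'' guarantee on $X$.
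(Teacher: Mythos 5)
There is a genuine gap in the step that preserves the answer. Your mechanism for ruling out spurious copies of $H$ --- high girth of $X$ plus attachment degree one per gadget vertex --- does not work. Consider $H=C_4$ and an edge $\{u,v\}\in E(G)$: if $x_e$ attaches to $u$, $x_{e'}$ attaches to $v$, and $\{x_e,x_{e'}\}\in E(X)$, then $u\,x_e\,x_{e'}\,v\,u$ is a $4$-cycle in $G'$ that need not correspond to any $4$-cycle in $G$. This copy uses only \emph{one} edge of $X$, so no girth condition on $X$ can exclude it, and each gadget vertex has degree exactly $2$ in the copy (one attachment edge plus one expander edge), so the no-pendant hypothesis is not violated either. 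More generally, a walk can enter the gadget layer through one attachment edge, travel along a few expander edges, and exit through another attachment edge, producing paths of length $\ge 3$ between arbitrary pairs of original vertices; for essentially every $H$ with a vertex of degree $2$ this manufactures spurious copies. The phrase ``I will engineer the attachment \dots so that any use of $x_e$ forces the copy to shortcut back through $e$'' is precisely the missing construction, and conditions (i)--(ii) as stated do not supply it.

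The paper's fix is different and is also what forces the $\tilde{O}(m)$ vertex blowup: it first attaches an expansion layer $U$ to $V(G)$ (each $v$ samples $\deg_G(v)$ neighbors in $U$, as in the generic construction of Section~\ref{sub:basic}), and then \emph{subdivides every new $U$-to-$V(G)$ edge $k$ times}. A copy of $H$ that touches a subdivision vertex $w$ would need $\deg(w)\ge 2$ inside the copy, but $w$ has degree exactly $2$ in $G'$, so the copy is forced to swallow the entire length-$(k+1)$ subdivided path --- more vertices than $H$ has, or else some subdivision vertex ends up with degree $1$ in the copy, contradicting the no-pendant assumption. Conductance is then preserved up to a factor $O(k)$ because subdivision keeps every cut value and multiplies volumes by at most $2k$. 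If you want to salvage your variant, you would need to replace the girth argument with a subdivision of the attachment edges (or an equivalent local obstruction), and you would also need to actually prove the conductance claim for ``$G$ plus an expander overlaid on edge-gadgets,'' which is currently only asserted.
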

We should note here that the Direct-WTER for $4$-Cycle is, in fact, simply an application of the Direct-WTER for Subgraph Isomorphism. For completeness, we remark that for non-constant $k$ the blowup is $(N,M) = (\tilde{O}(mk),\tilde{O}(mk))$ and the conductance is $\Omega(1/k)$. As we've noted in the case of $4$-Cycle, there is a significant drawback that comes from the blowup to the number of vertices, and that weakens the result significantly.

\paragraph{Maximum Cardinality Matching}
Next, we turn our attention to the fundamental problem that asks to find a set $M \subseteq E(G)$ of maximum cardinality, such that no two edges in $M$ intersect. 
In a recent breakthrough, an almost-linear time algorithm for Max-flow~\cite{MFlinear22} resulted in a $\tilde{O}(m)$ time algorithm for Maximum Matching in bipartite graphs.
However, the best-known upper bounds for general graphs are still far from linear. For sparse graphs, there is a longstanding $O(m \sqrt{n})$ upper bound by Micali and Vazirani~\cite{MV80}. 
For dense graphs, we can do slightly better using an $\tilde{O}(n^{\omega})$ algorithm by Mucha and Sankowski~\cite{MS04}. 
The Micali and Vazirani algorithm is part of a long list of algorithms that work by finding \emph{augmenting paths} that improve a non-maximum matching iteratively.
A well-known fact says that a matching is non-maximum if and only if there exists an augmenting path in the graph. 

An interesting line of work focused on the case where $G$ is a random graph drawn from the distribution $G(n,p)$. 
Motwani~\cite{motwani1994average} showed that if $p$ is at least $\ln n/ (n-1)$, then with high probability, every non-maximum matching admits a short augmenting path of length $O( \log n )$, 
and therefore the Micali and Vazirani algorithm terminates in $\tilde{O}(m)$ time. Using similar arguments, 
Bast et al.~\cite{bast2006matching} provided a simpler proof that the above holds for any $p \geq 33/n$.
At the heart of their proofs, both papers rely on various properties of random graphs, most notably on vertex-expansion of sufficiently large sets of vertices. 

A question that comes to mind is whether expansion (either vertex-expansion or other forms of expansion) alone is sufficient to make the problem easy, 
or if other properties of random graphs are necessary.  If expansion alone is enough, perhaps we can use the expander decomposition method to improve the state-of-the-art on worst-case graphs. We note that the distinction between vertex-expansion and conductance (which is the notion used in the expander decomposition) does not seem to be very important for their analysis, as they rely on the fact that after expanding two subgraphs of $G$ for $O(\log n )$ steps, the two must meet. This holds even in large-conductance graphs.

Alas, the next Direct-WTER shows that the expander decomposition method is useless against this problem.
\begin{theorem}[Direct-WTER for Maximum Cardinality Matching]
\label{thm:mcm}
There is a Direct-WTER for Maximum Cardinality Matching with blowup $(N,M) = (O(n),\tilde{O}(m))$.
\footnote{To further emphasize that the distinction between vertex-expansion and conductance is not very important, we note that it is possible to show that the graph output by this WTER is also a vertex-expander.}
\end{theorem}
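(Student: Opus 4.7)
The plan is a gadget-based construction. For each vertex $v \in V(G)$ I would attach a five-vertex gadget consisting of a $4$-cycle $a_v b_v c_v d_v$ (with the four edges $(a_v,b_v), (b_v,c_v), (c_v,d_v), (d_v,a_v)$) together with a shadow edge $(v, a_v)$, and then superimpose a random bipartite $d$-regular graph $H$ between $V$ and $A := \{a_v\}_{v \in V}$, with $d = \Theta(\log n + m/n)$. The resulting graph $G'$ has $|V(G')| = 5n = O(n)$ and $|E(G')| = m + 5n + dn = \tilde{O}(m+n)$, and can be constructed in near-linear time.

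To establish the matching invariant I would parametrize each gadget by two indicators: $x_v = 1$ iff $v$ is covered by a matching edge in $E(G) \cup E(H)$, and $y_v = 1$ iff $a_v$ is covered by a matching edge in $E(H)$. A direct case analysis on the remaining internal edges yields the per-gadget contribution $f(x_v,y_v) = i(x_v,y_v) + (x_v+y_v)/2$, where $i(x,y)$ is the maximum internal matching, with values $f(0,0)=2,\ f(1,0)=\tfrac{5}{2},\ f(0,1)=\tfrac{3}{2},\ f(1,1)=2$. Summing,
\[
|M'| \;=\; \sum_{v \in V} f(x_v,y_v) \;=\; 2n + \tfrac{1}{2}\sum_{v \in V}(x_v - y_v).
\]
Since each $G$-edge in $M'$ contributes $+2$ to $\sum_v x_v$ and $0$ to $\sum_v y_v$ while each $H$-edge contributes $+1$ to each, $\sum_v(x_v - y_v) = 2\,|M' \cap E(G)| \leq 2\mu(G)$, yielding $\mu(G') \leq 2n + \mu(G)$. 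Equality is realized by combining a maximum matching of $G$ with the canonical per-gadget matching $\{(a_v,b_v),(c_v,d_v)\}_{v \in V}$, so $\mu(G) = \mu(G') - 2n$ can be recovered in constant time.

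The main obstacle is the expansion analysis, where the critical design choice is placing $H$ as a bipartite expander between $V$ and $A$ rather than within $A$ alone: had $H$ been internal to $A$, the cut $S = A$ would have conductance $O(1/d)$, which is too small once $m \gg n$; the bipartite placement instead makes every $H$-edge incident to $A$ a cut edge of $S = A$, pinning its conductance at $1$. For a general cut $S \subseteq V(G')$, I would split $S$ into its core $S \cap (V \cup A)$ and its peripheral vertices in $\bigcup_v \{b_v, c_v, d_v\}$, using the $\Omega(1)$ internal conductance of each 4-cycle plus shadow to absorb the peripheral contribution. For the core I would invoke the standard fact that a random bipartite $d$-regular graph is an $\Omega(1)$-edge-expander with high probability for $d = \Omega(\log n)$, giving at least $\Omega\bigl(d \cdot \min(|S \cap V|,|\bar{S} \cap V|)\bigr)$ cross-edges from $H$; together with the total volume $\Theta(m + dn)$, the choice $d = \Theta(\log n + m/n)$ yields overall conductance $\Omega(1)$ across all densities of $G$. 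After discarding the (matching-irrelevant) isolated vertices of $G$, the transformation meets all requirements of Definition~\ref{def:direct}: blowup $(5n, \tilde{O}(m))$, conductance $\Omega(1)$ with high probability, and near-linear construction time.
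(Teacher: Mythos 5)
Your matching bookkeeping is correct and self-contained: the per-gadget function $f(x,y)=2+(x-y)/2$ and the identity $\sum_v(x_v-y_v)=2|M'\cap E(G)|$ do give $\mu(G')=\mu(G)+2n$. This is a more elaborate route to the same effect the paper gets by simply attaching a pendant ``twin'' to each vertex of the added layer, which forces those vertices to be matched internally and yields $\mu(G')=\mu(G)+|U|$ with a two-line argument. So the recovery of $\mu(G)$ is not the problem.

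The genuine gap is in the expansion analysis, and it is exactly the pitfall the paper's construction is designed to avoid. You superimpose a $d$-regular random bipartite graph with the \emph{uniform} degree $d=\Theta(\log n+m/n)$, but conductance is measured against $\min\{vol(S),vol(\bar S)\}$, not against the total volume $\Theta(m+dn)$, and the $G$-edges incident to a set need not be spread evenly. Concretely, let $G$ consist of a clique on $\sqrt{m}$ vertices joined by one edge to an $(m/n)$-regular graph on the remaining vertices (say $m=n^{1.5}$, so $d=\Theta(\sqrt n)$ and the clique has $n^{3/4}$ vertices). Take $S$ to be the clique together with its gadgets. Then $vol(S)=\Theta(m)=\Theta(n^{1.5})$ is dominated by internal clique edges, while every $H$-edge leaving $S$ is incident to one of the $O(n^{3/4})$ vertices of $S\cap(V\cup A)$, so $\delta(S)=O(n^{3/4}\cdot d)=O(n^{1.25})$; also $vol(\bar S)=\Theta(n^{1.5})$, so $\phi(S)=O(n^{-1/4})\to 0$. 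Your bound ``$\Omega(d\cdot\min(|S\cap V|,|\bar S\cap V|))$ cross-edges'' is true but useless here, because $d\cdot|S\cap V|\ll vol_G(S\cap V)$. The fix, which is the central idea of the paper's construction, is to make the number of random edges from each $v$ to the expansion layer \emph{proportional to $\deg_G(v)$} (the paper has $v$ sample $\deg_G(v)$ neighbors without replacement from a layer of $5n$ vertices, plus a deterministic degree-boosting step); then the random cross-edges are always a constant fraction of $vol(S)$ for every cut. As written, your construction only achieves $\phi=\Omega(1)$ for near-regular $G$, so the theorem as stated is not proved.
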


To answer the question about the properties that make random graphs easy for matching algorithms, we look closely at the analysis of Motwani~\cite{motwani1994average} and of Bast et al.~\cite{bast2006matching}. We find that they rely on the fact that with high probability, every pair of large, disjoint sets of vertices are connected by at least one edge.\footnote{See~\cite[Lemma 6.1]{motwani1994average} and~\cite[Lemma 3]{bast2006matching}.} In our reduction, we unavoidably introduce linear-sized sets of vertices that don't have an edge between them. Indeed, the graphs that are output by the reduction may not admit short augmenting paths if the original graph did not admit such. In particular, one cannot use the reduction as a way to speed-up algorithms on worst-case graphs. It is an interesting open question whether such approach, augmenting the graph to be random-like and ``shortcut'' augmenting paths, can be used to improve the running times on worst-case graphs.

We now move on to NP-hard problems, where we are interested in the exact (fine-grained) exponential time complexity.
\paragraph{Minimum Vertex Cover}
In the Minimum Vertex Cover problem, we wish to find a minimum-sized set of vertices such that every edge is incident to this set. This NP-Hard problem has received a lot of attention over the years, resulting in relatively fast algorithms (compared to other NP-hard problems). After a long line of works (e.g. \cite{beigel1999finding,FGK06,razgon2009faster}), the upper bound for Minimum Vertex Cover stands at $1.2125^n\cdot n^{O(1)}$ \cite{BEPR12} and it is unclear if it can be improved further, e.g. to $1.001^n\cdot n^{O(1)}$. The problem has also received a lot of attention from the \emph{parameterized complexity} community, and the current best bound for deciding if there is a vertex cover of size $\leq k$ is $1.2738^k\cdot n^{O(1)}$ \cite{chen2006improved}. In the next theorem we present WTERs that address both the exact complexity and the parameterized complexity of this problem on expanders.
\begin{theorem}[Direct-WTER for Minimum Vertex Cover]
\label{thm:mvc}
There is a Direct-WTER for Minimum Vertex Cover with blowup $(N,M) = (n +O(\log n + \Delta(G)) ,\tilde{O}(m))$, where $\Delta(G)$ is the maximum degree in $G$. Moreover, the size of minimum vertex cover in the output $G'$ is $K=k+ 5 \max\{ \Delta(G), C \log n\}$ for some fixed constant $C$, where $k$ is the size of the minimum vertex cover in $G$.
\end{theorem}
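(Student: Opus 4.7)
The plan is to build $G'$ by attaching to $G$ a ``forcing expander'' gadget $H$ on $t = \Theta(\max\{\Delta(G),\log n\})$ new vertices, designed so that $H$ contributes exactly $5t$ to every minimum vertex cover of $G'$ while simultaneously pushing the conductance of the combined graph to $\Omega(1)$ w.h.p. The choice $t\ge \Delta(G)$ ensures $H$'s internal degrees dominate the degrees contributed by $G$ in the overlay (crucial for the expansion argument), and $t\ge C\log n$ enables Chernoff plus union bound arguments over the $2^{O(n)}$ many cuts. After the standard Vertex Cover kernelization (remove a vertex of degree $>k$ into the cover, repeatedly), I add $5t$ new vertices $h_1,\ldots,h_{5t}$, install a dense expander structure on them (conceptually a clique), and connect each $v\in V(G)$ to $d=O(1)$ uniformly random members of $H$. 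The total new-vertex count is $5t=O(\log n+\Delta(G))$, and the total edge count is $O(m+t^2+n) = \tilde O(m)$ after the kernelization has controlled $\Delta$.

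For the conductance bound, I split any cut $(S,\bar S)$ with $\mathrm{vol}_{G'}(S)\le \mathrm{vol}_{G'}(\bar S)$ into two regimes based on $a:=|S\cap H|$. If both $a$ and $5t-a$ are $\Omega(t)$, then the dense gadget on $H$ already contributes $a(5t-a)=\Omega(t^2)$ cut edges, which dominates $\mathrm{vol}_{G'}(S) = O(at + |S\cap V(G)|\cdot d)$, giving $\Omega(1)$ conductance. Otherwise, $H$ is lopsided on the smaller side; here the random bipartite $H$--$G$ edges, by Chernoff and a union bound over $2^{n+5t}$ subsets (which converges precisely because $t\ge C\log n$), supply $\Omega(d\cdot|S\cap V(G)|)$ cut edges, again giving $\Omega(1)$ conductance.

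For the vertex-cover identity $\mathrm{MVC}(G')=k+5t$, the upper bound is immediate: an optimal cover of $G$ together with all of $H$ covers every $G$-edge (via $G$-side), every $H$-edge (since $H$ is in the cover), and every $H$--$G$ edge (since the $H$-endpoint is in the cover), with total size $k+5t$. For the lower bound, the dense gadget on $H$ forces any vertex cover to include at least $5t-1$ vertices from $H$, and the random $H$--$G$ connections ensure w.h.p.\ that excluding any single $h_i$ would force at least two extra $G$-vertices into the cover, recovering the ``$+1$''; so the optimal cover includes all of $H$ and induces a minimum vertex cover on $G$.

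The main obstacle is the simultaneous control of these three quantities under the tight budget $M=\tilde O(m)$: the bipartite $H$--$G$ edges must be dense enough for Chernoff concentration to yield $\Omega(1)$ conductance in the lopsided regime, yet sparse enough (and sufficiently well-distributed) that removing any $h_i$ from the cover forces more than one extra $G$-vertex in, and the internal edges on $H$ must be dense enough to force $\approx 5t$ vertices into the cover but sparse enough to keep $t^2 = \tilde O(m)$. Balancing these competing constraints, particularly when $\Delta(G)\gg\sqrt{m}$, is what forces the preprocessing step and the specific choice of the constants $5$ and $C$.
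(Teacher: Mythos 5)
There is a genuine gap in your expansion argument: connecting each $v\in V(G)$ to only $d=O(1)$ random vertices of $H$ cannot make $G'$ an $\Omega(1)$-expander. Take $S=S\cap V(G)$ to be a dense, poorly-expanding piece of $G$ (say one of two cliques joined by a single edge). Then $vol_{G'}(S)\geq vol_G(S)=\Theta(|S|^2)$, while the only new cut edges are the $O(d|S|)$ random edges into $H$, so $\phi_{G'}(S)=O(1/|S|)$. Your stated volume bound $vol_{G'}(S)=O(at+|S\cap V(G)|\cdot d)$ silently drops the $G$-internal volume, which is exactly the term that kills the argument. This is why the paper's construction has each $v$ sample $\max\{\deg_G(v),C\log n\}$ neighbors in the expansion layer: the per-vertex cut contribution must scale with the per-vertex volume, not be a constant. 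Relatedly, your clique on $H$ costs $\Theta(t^2)=\Theta(\Delta(G)^2)$ edges, which exceeds $\tilde O(m)$ for star-like graphs; the kernelization you invoke to control $\Delta$ is not available inside a Direct-WTER as defined (the theorem is stated for arbitrary $G$, with the blowup expressed in terms of $\Delta(G)$). The paper avoids internal density on $U$ entirely, boosting $U$'s degrees with $O(\log n)$-sized bi-cliques costing only $\tilde O(n)$ edges and letting the degree-proportional random bipartite edges carry the expansion (Lemma~\ref{lem:basic}).

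Your mechanism for pinning down $\mathrm{MVC}(G')$ is also fragile. Forcing $5t-1$ clique vertices and then recovering the ``$+1$'' via the claim that omitting any $h_i$ costs two extra $G$-vertices requires that \emph{no} minimum vertex cover of $G$ contains all but one of $h_i$'s random $G$-neighbors; since $G$ can have exponentially many minimum covers, a Chernoff-plus-union-bound over them does not obviously close, and if it fails for even one $h_i$ the offset becomes $5t-1$ and the reduction miscomputes $k$. The paper sidesteps this with pendant twins: attaching a degree-$1$ twin to each $u\in U$ guarantees deterministically (Fact~\ref{fact:twins}) that some minimum cover contains all of $U$, giving $\mathrm{MVC}(G')=\mathrm{MVC}(G)+|U|$ exactly, and Lemma~\ref{lem:twins} shows the twins cost only a factor $2$ in conductance.
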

At first sight, this WTER may appear unattractive because of the $\Delta(G)$ blowup in the number of vertices, which could be as large as $\Omega(n)$. However, by observing that the Vertex Cover problem becomes \emph{easier} (in the sense that the $1.2125^n$ bound can be broken) in graphs with large degrees \cite{BEPR12}, we conclude that the hard cases have $\Delta(G)=O(1)$. Therefore, assuming that $\Delta(G) = O(1)$, the blowup is actually $(N,M)=(n+O(\log n),\tilde{O}(m))$ which achieves the so-called platinum standard. 
Furthermore, since the size of the minimum vertex cover in $G'$ is $K = k+5 \max\{ \Delta(G), C \log n \} = k+ O(\log n)$, 
 then any speed-up of $(1.2738-\epsilon)^K \cdot N^{O(1)}$ on $\Omega(1)$-expanders will result in a speed-up of $(1.2738-\epsilon)^{k + O(\log n)} \cdot n^{O(1)} = (1.2738-\epsilon)^{k} \cdot n^{O(1)}$ on worst-case graphs. We see this as a proof of concept that our framework can extend even beyond the traditional notions of complexity and have implications also in parameterized complexity.

Finally, let us present another fundamental NP-Hard problem. %
\paragraph{Minimum Dominating Set}
In the Minimum Dominating Set problem, the goal is to find a set of vertices of minimum size, such that every vertex in the graph is either a neighbor of some vertex in the set or is in the set itself. The Minimum Dominating Set problem can be solved trivially in $2^n \cdot n^{O(1)}$ time. In 2004, a first $c^n \cdot n^{O(1)}$ algorithm with $c<2$ was discovered by Fomin~\cite{fomin2004exact}, and since then, the complexity was further improved to $1.4969^n \cdot n^{O(1)}$~\cite{van2011exact}. 
Unlike Minimum Vertex Cover, the Minimum Dominating Set problem is not fixed-parameter tractable, meaning that it is unlikely to be solved in $f(k) \cdot n^{O(1)}$ time, and in that sense it is considered harder. 

For Minimum Dominating Set we show a Direct-WTER with blowup $(N,M) = ((1+\eps)n,\tilde{O}(m))$ for any constant $\eps >0$, and constant conductance that depends on $\eps$.
Observe that for any significant improvement to the exponent in the running time on $o(1)$-expanders, we can choose $\eps$ to be small enough to get an improvement on worst-case graphs. 
Interestingly, this WTER requires a slightly different approach that is a combination of the techniques we used for $k$-Clique and Maximum Cardinality Matching, together with a subsampling trick. We find it interesting and see it as an indicator that our techniques can be applied to more open questions.
\begin{theorem}[Direct-WTER for Minimum Dominating Set]
\label{thm:mds}
There is a Direct-WTER for Minimum Dominating Set with blowup $(N,M) = (n + \eps n, \tilde{O}(m))$ and conductance $\phi(G') \geq \Omega(1)$, for any constant $\eps >0$.
\end{theorem}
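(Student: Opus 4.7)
\textbf{Proof proposal for Theorem~\ref{thm:mds}.}

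The plan is to construct $G'$ on vertex set $V \cup U$ with $|U|=\eps n$, by attaching auxiliary vertices whose contribution to the minimum dominating set is a predictable function of $|U|$ alone, and simultaneously boosting conductance via random expander edges. The construction combines three ingredients: structured gadgets on the new vertices (in the spirit of Theorem~\ref{thm:k-clique}), a random bounded-degree expander augmentation (in the spirit of Theorem~\ref{thm:mcm}), and a random subsampling of the bipartite edges between $V$ and $U$.

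First, I would organize the $\eps n$ new vertices of $U$ into small constant-size gadgets (for concreteness, matched pairs $\{u_i,u_i'\}$ joined by an edge, so that any dominating set must include at least one vertex per pair). This guarantees a baseline $|U|/2$ contribution to the MDS from the $U$-side that is insensitive to the structure on $V$. Next, to bring the conductance to $\Omega(1)$, I would superimpose a random $d$-regular expander on $V \cup U$, where $d=d(\eps)$ is a sufficiently large constant; by standard probabilistic arguments the resulting graph satisfies $\phi(G')=\Omega(1)$ w.h.p., and $G'$ only gains edges over $G$, so this step is expansion-only. The bipartite portion of this random graph is drawn so that each $v \in V$ has only $O(1)$ neighbors in $U$ and each $u \in U$ has only $O(1)$ neighbors in $V$; this is the subsampling step and is crucial for controlling the MDS.

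I would then establish $\text{MDS}(G') = \text{MDS}(G) + c\cdot|U|$ for a fixed constant $c$ determined by the gadget. The upper bound is direct: take an optimal dominating set of $G$, unchanged because no vertex has been removed, then add one vertex per gadget in $U$ to cover the remaining $U$-side. The lower bound is obtained by projecting any DS of $G'$ onto $V$ and $U$: the $V$-projection must dominate $G$ (since $G \subseteq G'[V]$), and the $U$-projection, together with any cross-domination allowed by the sparse bipartite edges, must dominate $U$. The subsampling ensures each $v \in V$ covers only $O(1)$ vertices of $U$, so the total savings from cross-domination are at most $O(\text{MDS}(G))$, negligible compared to the gadget count $|U|/2=\Theta(\eps n)$ once we adjust the offset carefully. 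Because $c\cdot|U|$ is a function we can compute in constant time from $\eps$ and $n$, $\text{MDS}(G)$ can be recovered from $\text{MDS}(G')$ by subtraction.

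\textbf{Main obstacle.} The hard part is the MDS lower bound, specifically showing that the randomly added expander edges do not allow a small dominating set of $G'$ to simultaneously cover many vertices of $V$ and $U$ and thereby shrink the MDS below $\text{MDS}(G) + c\cdot|U|$ in a way that depends on $G$ (which would destroy the ability to recover $\text{MDS}(G)$). The $k$-Clique style gadget provides a rigid per-gadget lower bound, but unlike detection of a fixed subgraph, domination is a global property and any new edge could in principle aid the adversary. Bounding this loss is exactly what the subsampling buys us: a Chernoff/union-bound argument shows that, with high probability, no set of $o(|U|)$ vertices in $V$ is adjacent to more than $o(|U|)$ vertices of $U$, so the gadget-based strategy is essentially optimal up to an additive term that is absorbed into the fixed offset.
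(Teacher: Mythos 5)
Your proposal has two genuine gaps, and both are at points where the paper's construction is forced to be quite different from what you describe.

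\textbf{The conductance step fails for dense graphs.} Superimposing a random $d$-regular expander with constant $d=d(\eps)$ on $V\cup U$ does not give $\phi(G')=\Omega(1)$ under Definition~\ref{def:expander}, because conductance is normalized by \emph{volume}, not by cardinality. If $G$ contains a set $S$ with $vol_G(S)=\Theta(n^2)$ and $\delta_G(S)=O(n)$ (a dense cluster weakly attached to the rest), your augmentation adds only $O(d|S|)=O(n)$ edges leaving $S$, so $\phi(S)$ remains $O(1/n)$. This is exactly why the paper's expansion-layer construction (Section~\ref{sub:basic}) has each vertex sample a number of edges to $U$ \emph{proportional to its degree}: the new cut edges must scale with $vol(S)$. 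Any fix along these lines reintroduces $\Theta(m)$ bipartite edges and destroys your ``each $v$ has $O(1)$ neighbors in $U$'' subsampling, which your MDS lower bound relies on.

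\textbf{The exact identity $MDS(G')=MDS(G)+c|U|$ does not survive the new edges.} Your lower-bound argument asserts that the $V$-projection of a dominating set of $G'$ must dominate $G$, but this is false once you add edges inside $V$ (the intra-$V$ part of your random expander) and edges from $U$ into $V$: a vertex of $V$ can now be dominated by a $U$-vertex or across a new intra-$V$ edge, so $MDS(G')$ can drop below $MDS(G)+c|U|$ by an amount that depends on $G$ and on the random choices. Your own accounting concedes a slack of $O(MDS(G))$, but a Direct-WTER needs $\AG(G)$ to be computable \emph{exactly} from $\AG(G')$; a $G$-dependent additive error cannot be ``absorbed into the fixed offset.'' The paper avoids this with a structurally different gadget: it never connects $U$ to $V$ at all. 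Instead it inserts an intermediate layer $R$ of \emph{copies} of a carefully chosen subset $Q\subseteq V$ (each $v_r\in R$ is adjacent exactly to $N_G(v)$, so $v_r$ can always be swapped for $v$ in any dominating set), attaches the expansion layer $U$ only to $R$, and hangs a pendant twin on each $u\in U$ so that $U$ is forced into the minimum dominating set and dominates $R$ for free; this yields $MDS(G')=MDS(G)+|U|$ exactly. The set $Q$ is built by random sampling plus a deterministic decomposition into $O(1/\eps)$-size connected pieces, precisely so that $R$ hits every sparse cut of $G$ while $|R|+2|U|\le \eps n$, which is where the subsampling idea actually enters (the copies in $R$ sample only $\eps\deg_G(v)+C\log n$ neighbors in $U$ to keep $|U|$ small). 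Your instinct to combine the $k$-Clique-style gadget, the matching-style twins, and subsampling is the right one, but the pieces must be assembled so that the auxiliary structure is provably \emph{useless} for dominating $V$, not merely ``sparse enough that the damage is small.''
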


\medskip
Our results only reveal the tip of the iceberg. There is a plethora of problems for which Questions \ref{q:q1} and \ref{q:q2} may be relevant, from open questions in fine-grained complexity to NP-hard problems. We are hopeful that our framework will help in understanding the role that expanders have in algorithms and beyond.

\subsection{Related Work}

An interesting line of research in fine-grained complexity it to show conditional lower bounds for ``realistic'' input classes such as planar graphs, bounded treewidth graphs, and so on. There has been some progress on designing reductions that apply to such restricted families as well (see e.g. \cite{AVW16,AD16,EvaldD16,ACK20}). Perhaps the next extension to this line of work should be random-like graphs, e.g. expanders? Indeed, a recent independent paper is concerned with the complexity of various problems on families of dynamic graphs, that also include expanders~\cite{henzinger2022fine}. Our work is not motivated by these kinds of questions, but our techniques may have implications for that line of work as well.

Another area of research, that somewhat resembles ours, is the research on worst-case to average-case reductions. These reductions show that a problem is as easy on worst-case inputs as it is on \emph{random} instances (e.g., taken from the uniform distribution). In recent years, much effort has gone into proving worst-case to average-case reductions for problems in P (see \cite{BRSV17, DLW20, HLS22, asadi2022worst}). However, it has only been accomplished for counting or algebraic problems but not decision problems, while we are interested in graph decision problems on pseudo-random instances. In fact, trying to show any worst-case to average-case reductions on these problems might as well fail, since they (e.g., Triangle Detection) typically do become easier on random graphs.

\section{Preliminaries}
Let $G = (V,E)$. For every $v \in V(G)$ define the neighborhood of $v$ to be $N(v) := \{ u \in V \mid \{u,v\} \in E(G) \}$.  
The degree of a vertex $v \in V(G)$ is denoted by $\deg_G(v) := |N(v)|$, and for $u \notin V(G)$ define $\deg_G(u) := 0$. We denote the maximum degree among the vertices of $G$ by $\Delta(G)$. 
For every $S \subseteq V$, denote the \emph{volume} of $S$ by $vol_G(S) = \sum\limits_{v \in S} \deg_G(v)$. We say that an edge $\{u,v\}$ is an \emph{internal} edge is both $u$ and $v$ are in $S$. Denote by $E_{G}(S,S')$ the set of edges between disjoint sets of vertices $S,S' \subseteq V(G)$ and by $e_G(S,S') := |E_G(S,S')|$. Denote by $\delta_G(S) := e_G(S,\bar{S})$ and note that $\delta(S) = \delta(\bar{S})$. The \emph{conductance} of $S$ is defined as $\phi_G(S):=\frac{\delta_G(S)}{\min\{vol_G(S),vol_G(\bar{S})\}}$ and for $S = \emptyset$ we define $\phi_G(S) := 1$. The conductance of $G$ is defined as $\phi(G) = \min_{S \subseteq V} \phi_G(S)$. The \emph{expansion} of a cut or a graph is simply its conductance.  We omit the subscripts when they are clear from the context. We denote by $G[S]$ the induced subgraph of $G$ on set of vertices $S$. Throughout that paper, we use the Chernoff bound, stated as:
\begin{fact}[Chernoff Bound]
Let $X$ be the sum of $n$ independent indicators, each taking the value $1$ with probability $0<p<1$, and $\mu$ be the expectation of $X$. Then for any $0\leq \delta \leq 1$, 
$
\p[|X - \mu| \geq \delta \mu] \leq \exp({-{\delta^2 \mu}/{3}}) .
$
\end{fact}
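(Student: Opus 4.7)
The plan is to prove the bound by the standard Chernoff-Bernstein approach: combine Markov's inequality applied to an exponential moment with the independence of the indicators to factorize the moment generating function, then optimize the free parameter.

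First I would handle the upper tail $\p[X \geq (1+\delta)\mu]$. For any $t>0$, Markov's inequality gives $\p[X \geq (1+\delta)\mu] = \p[e^{tX} \geq e^{t(1+\delta)\mu}] \leq e^{-t(1+\delta)\mu}\,\e[e^{tX}]$. Writing $X=\sum_{i=1}^n X_i$ with $\p[X_i=1]=p_i$ and using independence, $\e[e^{tX}] = \prod_i \e[e^{tX_i}] = \prod_i \bigl(1 + p_i(e^t-1)\bigr) \leq \prod_i \exp\bigl(p_i(e^t-1)\bigr) = \exp\bigl(\mu(e^t-1)\bigr)$, where the inequality uses $1+x \leq e^x$ and $\mu = \sum_i p_i$. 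Hence $\p[X \geq (1+\delta)\mu] \leq \exp\bigl(\mu(e^t-1) - t(1+\delta)\mu\bigr)$. Optimizing by setting $t=\ln(1+\delta)$ (which is nonnegative for $\delta \geq 0$), one obtains the classical bound $\p[X \geq (1+\delta)\mu] \leq \bigl(e^\delta/(1+\delta)^{1+\delta}\bigr)^\mu$. A standard calculus estimate, valid for $0\leq \delta\leq 1$, shows this quantity is at most $\exp(-\delta^2\mu/3)$; this follows from the Taylor expansion $(1+\delta)\ln(1+\delta)-\delta \geq \delta^2/3$ on $[0,1]$, which is the only computational step worth carrying out explicitly.

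For the lower tail $\p[X \leq (1-\delta)\mu]$, I would run the symmetric argument with $t<0$: Markov on $e^{tX}$ yields $\p[X \leq (1-\delta)\mu] \leq e^{-t(1-\delta)\mu}\,\e[e^{tX}] \leq \exp\bigl(\mu(e^t-1)-t(1-\delta)\mu\bigr)$, and choosing $t=\ln(1-\delta)$ gives $\bigl(e^{-\delta}/(1-\delta)^{1-\delta}\bigr)^\mu \leq \exp(-\delta^2\mu/2)$ on $0\leq \delta\leq 1$, again via a short Taylor estimate. Combining both tails yields the stated two-sided inequality, with the constant $1/3$ in the exponent absorbing the factor of $2$ that arises from the union bound (or by simply using the weaker $1/3$ constant in both tails).

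The only real obstacle is the elementary inequality $(1+\delta)\ln(1+\delta)-\delta \geq \delta^2/3$ on $[0,1]$ (and its lower-tail counterpart); everything else is a mechanical chain of Markov, independence, and the universal bound $1+x \leq e^x$. Since the fact is invoked as a black box throughout the paper, I would present it in one compact block with the optimization step carried out and the Taylor estimate cited rather than re-derived.
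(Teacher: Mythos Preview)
The paper does not prove this fact; it is stated as a standard black-box result and invoked without argument. Your proposal is the standard Chernoff--Bernstein proof via exponential moments and is correct in outline. One small caveat: your final sentence about ``the constant $1/3$ in the exponent absorbing the factor of $2$'' from the union bound is not quite right as written---summing two tails each bounded by $\exp(-\delta^2\mu/3)$ yields $2\exp(-\delta^2\mu/3)$, and the $2$ does not disappear into the exponent for general $\mu$. The paper's statement is itself slightly loose in omitting this factor of $2$ (a common abuse), and in every application the extra constant is immaterial, so this is a cosmetic point rather than a genuine gap.
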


\section{Technical Overview}

\label{sec:overview}

Recall that our goal when designing a Direct-WTER is to take a worst-case graph $G$ and turn it into an expander $G'$ without changing the answer to the problem uncontrollably. 
Consider the Triangle Detection problem as an example for now. The goal is to add edges to a graph in order to make it an expander while not introducing a triangle to the graph if the original graph was triangle-free.

Let us start by recalling the intuition that an expander graph should be \emph{mixing} \cite{oded_expander} in the sense that a short random walk from any vertex is equally likely to reach any other vertex. Keeping this in mind will help the reader follow the informal arguments in this section, but it is not necessary for understanding our actual proofs that only work with the definition of expander in Definition~\ref{def:expander}.

The first idea that comes to mind is to take an expander graph $X$ on $n:=|V(G)|$ nodes (e.g. a random sparse graph) and \emph{add it} to $G$, so that $G'=G \cup X$.
There are several issues with this idea. 
First, the resulting graph will not be an expander if $X$ is just an arbitrary expander.
For example, if $G$ is dense and $X$ is sparse, then the new edges of $X$ will not have much effect on random walks in the combined graph $G'$.
This issue can be resolved by using an expander $X$ in which each vertex $v$ has the same degree that it has in $G$; a random graph with specified degrees does the job.
The second issue is that the answer to the problem can change arbitrarily when taking the union with another graph $X$; e.g. an edge $\{u,v\} \in E(G)$ may become a triangle because $u,v$ have a common neighbor in the newly introduced $X$.
For some problems, e.g., the Subgraph Isomorphism problem for pattern $H$ without degree-$1$ vertices, this issue can be overcome by \emph{subdividing} the edges of $X$ into paths of length $|H|+1$ before adding them to $G$. 
This makes the new edges useless for forming any copy of $H$ (or a triangle in particular) because any subgraph of size $|H|$ (or $3$) that uses a new edge must have a leaf node of degree $1$. 
This does result in a Direct-WTER, but it is not as fine-grained as we would like: the number of nodes in $G'$ becomes $O(m)$ where $m:=|E(G)|$ because of the subdivided edges.
Our actual reductions use additional ideas and are actually quite different, but the intuition behind them is similar. 

Instead, the basic idea in our reductions is to add a small \emph{expansion layer} $U$ with $O(n)$ nodes and connect each node in $V(G)$ with random edges to $U$. That is, we add a random bipartite graph whose one side is $V(G)$, and the other side is $U$.
If a node $v \in V(G)$ has degree $d$ in $G$ then we add $d$ random edges to $U$.
Then, a random walk starting at $v$ will go with probability $1/2$ to the expansion layer $U$, and the next step would send it back to an essentially random vertex $w \in V(G)$; thus simulating a ``random edge'' from $v$ to $w$.
We prove that adding such an expansion layer to any graph that has a sufficiently large (logarithmic) minimum degree makes it an expander. We introduce some gadgets to artificially increase the degrees in our problems.

The advantage of adding an expansion layer over simply taking the union with a random graph is that it makes it easier to control the change in the solution of the problem.
This is accomplished with different gadgetry for each problem and with a different argument for why the gadgets do not harm the conductance of the graph by too much.
For example, in the WTER for maximum matching, we add a unique ``twin'' vertex to each node in the expansion layer $U$ and connect them by an edge. The maximum matching is forced to match every node in $U$ to its twin, thus making the newly added edges between $V(G)$ and $U$ unusable in a maximum matching. The WTER for Vertex Cover uses a similar idea. 
A different idea is used for Triangle (and $k$-Clique), where we connect the expansion layer $U$ only to a (sufficiently large) independent set $V_1$ in $G$. Consequently, the newly added edges cannot form any triangle. The WTER for Minimum Dominating Set combines ideas from both WTERs for Triangle Detection and Vertex Cover, together with a sampling trick. Notably, for the $4$-Cycle problem, we could not avoid the subdivision trick so we complement our result with an ED-WTER that avoids the blowup.

Designing ED-WTERs is fundamentally different. To reduce $4$-Cycle into expanders, we use the expander decomposition to partition $G$ into induced subgraphs with conductance $\phi = n^{-\eps}$ and a small set of outer edges. Then, assuming that we can check (in subquadratic time) whether there is a $4$-cycle inside each subgraph, the problem boils down to finding a $4$-cycle that uses one of the outer edges. Hence, we provide an algorithm that finds a $4$-cycle that uses one of the outer edges or outputs that there is no such cycle. Using the fact that the number of outer edges is small (namely, $O(n^{1.5-\eps})$), this algorithm runs in subquadratic time.

\section{The Reductions}
\label{sec:reduction}
Our main building block is a randomized algorithm that takes as an input an $n$-vertex graph $G$ where for every $v \in V(G)$, $\deg_G(v) \geq C \log n$ for some large enough constant $C$, and outputs a graph $G'$ such that: (1) $|E(G')| = O( |E(G)| )$ and $|V(G')|  =O( n )$, and (2) $\phi(G') = \Omega(1) $. We note that this algorithm applies even to graphs with self-loops. We later modify and use this algorithm in different ways to show all the above results.
\subsection{The Expansion Layer Construction}
\label{sub:basic}
Construct an expander graph $G'$ from $G$ as follows. 
\begin{enumerate}
    \item Add to $G$ an \emph{expansion layer} $U$ that consists of $5n$ vertices.
    \item For every vertex $v \in V(G)$, sample without replacement $\deg_{G}(v)$ vertices from the expansion layer $U$ and add all edges between them and $v$. Thus, the total number of vertices and edges added in this step is $O(n)$ and $O(|E(G)|)$, respectively.
    \item  Increase the degrees of the vertices of $U$ deterministically in the following way. Partition each of $V(G)$ and $U$ into $\frac{10n}{C \log n}$ parts of equal size (i.e, partition $V(G)$ into parts of size $\frac{C}{10} \log n$ and partition $U$ into parts of size $\frac{C}{2} \log n$). Then, take an arbitrary perfect matching between the parts in the partitions of $V(G)$ and $U$, and add a bi-clique between every matched pair of parts (i.e. add all possible edges between every matched pair). Thus, the total number of edges added in this step is $O( \frac{n}{ \log n} \cdot \log^2 n ) = {O}(n \log n)$. Note that after this step, the degrees of all vertices in $U$ are at least $\frac{C}{10} \log n$ and the degrees of all vertices in $V(G)$ are at least $\frac{3}{2} \cdot C \log n $ (as we assume that $\deg_G(v) \geq C \log n$ and then add an additional $\frac{C}{2} \log n$ edges).
\end{enumerate}
See Figure \ref{fig:alg}. Observe that $|E(G')| = O( |E(G)| )$, $|V(G')| = O(n)$ and that the reduction is randomized and runs in linear time. 

\begin{figure}[h]
\centering
\includegraphics[scale=0.8]{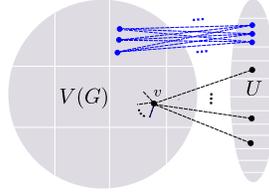}
\caption{The basic construction of the algorithm. The blue edges represent the deterministically added bi-cliques. The black edges represent the edges of $G$ and the new random edges. For each vertex $v$ we sample $\deg_G(v)$ neighbors in $U$ and add edges between them.}
\label{fig:alg}
\end{figure}

\begin{theorem}
\label{thm:basic}
The conductance of $G'$ is $\phi(G') \geq 0.01$ with probability at least $1-O(\frac{1}{n})$. 
\end{theorem}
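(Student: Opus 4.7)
The plan is to fix an arbitrary non-empty $S\subseteq V(G')$ with $\mathrm{vol}_{G'}(S)\le\mathrm{vol}_{G'}(\bar S)$, decompose it as $S=S_1\cup S_2$ with $S_1\subseteq V(G)$ and $S_2\subseteq U$, and argue that $|E_{G'}(S,\bar S)|\ge 0.01\,\mathrm{vol}_{G'}(S)$ with failure probability small enough to survive a union bound over all $2^{O(n)}$ candidate cuts. The crossing edges split into three sources: $G$-edges between $S_1$ and $T_1:=V(G)\setminus S_1$, random bipartite edges $X_1+X_2$ (where $X_1$ counts edges between $S_1$ and $T_2:=U\setminus S_2$ and $X_2$ counts edges between $T_1$ and $S_2$), and biclique edges across matched parts that the cut splits. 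Writing $a=\mathrm{vol}_G(S_1)$, $\bar a=2m-a$, $b=|S_2|$, and $\bar b=5n-b$, linearity of expectation (each without-replacement sample made by $v$ hits any fixed $u\in U$ with marginal probability $1/(5n)$) gives
\[
\mathbb{E}[X_1]=\frac{a\bar b}{5n},\qquad \mathbb{E}[X_2]=\frac{\bar a b}{5n}.
\]
Sampling without replacement is negatively associated, so Chernoff gives $X_i\ge\tfrac{1}{2}\mathbb{E}[X_i]$ with failure probability $\exp(-\Omega(\mathbb{E}[X_i]))$.

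Next I would split $\mathrm{vol}_{G'}(S)$ into its ``random-plus-$G$'' piece $V_{\mathrm{nb}}:=2a+\sum_{u\in S_2}\deg_{G'}^{\mathrm{rand}}(u)$ and its deterministic biclique piece $V_b:=|S_1|\cdot\tfrac{C}{2}\log n+|S_2|\cdot\tfrac{C}{10}\log n$, and case on which dominates. In the $V_{\mathrm{nb}}$-dominated regime one has $\mathrm{vol}_{G'}(S)\le 2V_{\mathrm{nb}}$; combining the identity $V_{\mathrm{nb}}=3a-X_1+X_2$ (which follows from $\sum_{u\in S_2}\deg_{G'}^{\mathrm{rand}}(u)=(a-X_1)+X_2$) with the fact that $b+\bar b=5n$ forces at least one of $\bar b\ge 5n/2$ or $b\ge 5n/2$ (and hence one of $\mathbb{E}[X_1]\ge a/2$ or $\mathbb{E}[X_2]\ge\bar a/2$) lets a short arithmetic calculation conclude that $X_1+X_2$ is a constant fraction of $V_{\mathrm{nb}}$, and therefore of $\mathrm{vol}_{G'}(S)$, with high probability. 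In the $V_b$-dominated regime I would argue that any matched pair $(P_V,P_U)$ which is split in a non-degenerate way contributes a constant fraction of its biclique volume to the cut directly, leaving only the ``almost fully contained'' configurations to worry about. Those configurations push $|S_2|$ high, which under the volume inequality still leaves $\bar b=\Omega(n)$, while the minimum-degree hypothesis $\deg_G(v)\ge C\log n$ gives $a\ge\Omega(|S_1|\log n)$, so $\mathbb{E}[X_1]=a\bar b/(5n)=\Omega(V_b)$ and Chernoff finishes.

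The final step is a union bound. Every vertex of $G'$ has degree at least $\tfrac{C}{10}\log n$ from biclique edges alone, so $\mathrm{vol}_{G'}(S)\ge\tfrac{C}{10}|S|\log n$; combined with $\binom{6n}{s}\le\exp(s\log(6en))$, the per-cut Chernoff failure bound $\exp(-\Omega(C|S|\log n))$ dominates the number of cuts of size $s$ term by term for $C$ a sufficiently large absolute constant. Summing over $1\le s\le 6n$ yields a total failure probability of $O(1/n)$.

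The main obstacle I anticipate is the $V_b$-dominated case, specifically the ``almost-full matched pair'' configuration: a pair $(P_V,P_U)$ with $P_V\subseteq S_1$ and all but a few vertices of $P_U$ inside $S_2$ contributes $\Theta(C^2\log^2 n)$ to $V_b$ but only $\Theta(C\log n)$ biclique edges to the cut, defeating any per-pair biclique conductance bound. The resolution sketched above uses the global volume hypothesis to convert this failure mode into a lower bound on $\bar b$ and then relies on the random edges emanating from $S_1$ (whose cardinality inherits the minimum-degree hypothesis) to absorb the biclique contribution to $\mathrm{vol}_{G'}(S)$. Carefully tracking the constants through this exchange, so that all three contributions compose to $\phi(G')\ge 0.01$ rather than some smaller explicit constant, is where most of the bookkeeping in the proof will live.
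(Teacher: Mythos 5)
Your proposal is correct in its overall architecture and shares the paper's skeleton: a union bound over all $2^{O(n)}$ cuts, concentration for the without-replacement samples (you invoke negative association where the paper uses a coupling/stochastic-domination argument conditioned on an auxiliary event -- yours is arguably cleaner on this point), and the minimum-degree hypothesis plus the deterministic biclique degrees to push the per-cut failure probability below $n^{-\Omega(C|S|)}$. Where you genuinely diverge is in the normalization and the case split. The paper normalizes by $|S_U|\le |U|/2$ rather than $\mathrm{vol}(S)\le\mathrm{vol}(\bar S)$, and splits on whether $\mathrm{vol}(S_V)<\mathrm{vol}(S_U)/2$: in that case the bound $\delta(S)/\mathrm{vol}(S)\ge 1/3$ is \emph{deterministic} (most of $S_U$'s volume must cross to $V(G)\setminus S_V$), and otherwise only your $X_1$ (the paper's ``nice'' edges $S_V$-to-$U\setminus S_U$) is ever needed, with the guarantee $|S_U|\le 2.5n$ making each sample nice with probability $\ge 0.3$. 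This avoids your $X_2$ term entirely and collapses your two-regime analysis (random-dominated vs.\ biclique-dominated volume) into one clean probabilistic case; the relation $\mathrm{vol}(S)=O(\mathrm{vol}_G(S_V))$ and $\mathrm{vol}_G(S_V)=\Omega(|S|\,C\log n)$ then falls out of the case hypothesis.

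One concrete sub-claim in your sketch is false as stated: in the $V_b$-dominated regime you assert that the volume inequality ``still leaves $\bar b=\Omega(n)$.'' It does not -- a cut with $S_2=U$ (so $\bar b=0$) and $S_1$ of small $G$-volume satisfies $\mathrm{vol}(S)\le\mathrm{vol}(\bar S)$ whenever $m$ is moderately larger than $n\log n$, and there $\mathbb{E}[X_1]=0$. Such cuts are rescued not by $X_1$ but by $X_2$: when $\bar b=o(n)$ essentially all of $\mathrm{vol}_G(V(G)\setminus S_1)$ crosses into $S_2$, and the volume normalization forces $\bar a\gtrsim a$, so the cut is expanding (in fact nearly deterministically). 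So the step is repairable within your own framework, but the case analysis must be rerouted through $X_2$ rather than through a lower bound on $\bar b$; as written, the argument for the biclique-dominated regime has a hole. If you want to avoid this bookkeeping altogether, the paper's choice of normalizing by $|S_U|\le|U|/2$ and splitting on $\mathrm{vol}(S_V)$ versus $\mathrm{vol}(S_U)$ is the cleaner route.
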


Before proving this result, let us introduce some notation and give an overview. Throughout the section, we omit the subscripts from $\deg_{G'}(\cdot),vol_{G'}(\cdot),E_{G'}(\cdot,\cdot),e_{G'}(\cdot,\cdot),\delta_{G'}(\cdot)$ and $\phi_{G'}(\cdot)$. For any cut $S \subseteq V(G')$,
we denote by $S_V$ and $S_U$ the parts $S \cap V(G)$ and $S \cap U$, respectively.

To prove the theorem, we will focus on the complement probability. That is, we show that $\p[ \phi(G') < 0.01] = O(\frac{1}{n})$. 
We will apply the union bound over all possible cuts. More precisely, we show that for any cut $S \subseteq V(G')$, the \emph{failure probability} $\p[\phi(S) < 0.01]$ is sufficiently small so that by summing the failure probabilities of all possible cuts we still get a value much smaller than $1$.

For any cut $(S,\bar{S})$, assume w.l.o.g. that $S$ is the part for which the inequality $|S_U| \leq |U|/2$ holds (as it must hold either for $S$ or $\bar{S}$). Observe that in order to show that $\p[\phi(S) < 0.01]$ is small it suffices to show that $\p[\frac{\delta(S)}{vol(S)}<0.01]$ is small, since:
\[
\phi(S) = \frac{\delta(S)}{\min\{vol(S),vol(\bar{S})\}} = \max\left\{\frac{\delta(S)}{vol(S)},\frac{\delta(S)}{vol(\bar{S})}\right\}
\]
and therefore $\p[\phi(S) <  0.01] \leq \p[\frac{\delta(S)}{vol(S)}<0.01]$. We thus prove the next lemma:

\begin{lemma}
\label{lem:basic}
For any cut $S \subseteq V(G')$ such that $|S_U| \leq |U|/2$, we have:
\[
\p\left[ \frac{\delta(S)}{vol(S)} < 0.01 \right] \leq n^{- \frac{C}{800} \cdot |S| }.
\]
\end{lemma}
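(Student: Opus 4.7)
The plan is to analyze a single random quantity—the total number of random edges crossing the cut—and exploit the deterministic bi-clique structure to convert a Chernoff bound on it into the conductance lower bound. Set $s=|S_V|$, $t=|S_U|$, $D=\mathrm{vol}_G(S_V)=\sum_{v\in S_V}\deg_G(v)$, $T_V=s\cdot\tfrac{C}{2}\log n$, $T_U=t\cdot\tfrac{C}{10}\log n$, $T=T_V+T_U$, and let $Z_{S\bar S}, Z_{\bar S S}, Z_{SS}$ count the random edges with endpoints in the obvious places, with $R:=Z_{S\bar S}+Z_{\bar S S}$ the total random cross-cut edges. Summing degrees separately over original $G$-edges, random edges, and bi-clique edges yields the decomposition
\[
vol(S)=3D+T+Z_{\bar S S}-Z_{S\bar S}\le 3D+T+R,\qquad \delta(S)\ge R+(T-2B),
\]
where $B=\sum_i a_i b_i$ counts bi-clique edges internal to $S$, with $a_i=|S\cap P_V^{(i)}|$ and $b_i=|S\cap P_U^{(i)}|$ for the $i$-th matched pair. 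Since $a_i\le\tfrac{C}{10}\log n$ and $b_i\le\tfrac{C}{2}\log n$, we have $B\le\min(T_V,T_U)$, hence the \emph{deterministic bi-clique margin} $T-2B\ge|T_V-T_U|$, so $\delta(S)\ge R+|T_V-T_U|$.

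The heart of the argument---and, I expect, the main obstacle---is to show that the single event $\{R\ge D/4\}$ already forces $\delta(S)/vol(S)\ge 0.01$. Combining the two bounds above reduces the desired inequality to $99R+100|T_V-T_U|\ge 3D+T_V+T_U$, and plugging $R=D/4$ further reduces it to $21.75\,D+100|T_V-T_U|\ge T_V+T_U$. Since $D\ge sC\log n$ one has $T_V\le D/2$, and a short case split on whether $T_V\ge T_U$ or $T_V<T_U$ verifies the inequality (in both cases it reduces to $21.75D\ge 2\min(T_V,T_U)$). The bi-clique margin is essential precisely in the regime $T_U\gg D$ (when $|S_U|$ dominates $|S_V|$): without it $vol(S)$ would swamp any lower bound one could place on the random cut.

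It only remains to bound $\mathbb{P}[R<D/4]$. The $d_v$ random samples are independent across $v$ and negatively associated within each $v$, so a standard Chernoff bound as in the Fact of Section~2 gives $\mathbb{P}[R\le\mathbb{E}[R]/2]\le\exp(-\mathbb{E}[R]/12)$. The hypothesis $|S_U|\le|U|/2$ yields $\mathbb{E}[Z_{S\bar S}]\ge D/2$ unconditionally; when additionally $s\le n/2$, we have $\bar D\ge (n/2)C\log n$, giving $\mathbb{E}[Z_{\bar S S}]=\bar D\cdot t/|U|\ge T_U$ and hence $\mathbb{E}[R]\ge D/2+T_U$. Therefore in this subcase
\[
\mathbb{P}[R<D/4]\le\exp\!\bigl(-(D/2+T_U)/12\bigr)\le n^{-sC/24}\cdot n^{-tC/120}=n^{-C(5s+t)/120}\le n^{-C(s+t)/800},
\]
where the last step is the elementary inequality $800(5s+t)\ge 120(s+t)$. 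In the remaining subcase $s>n/2$, the constraint $t\le 5n/2$ forces $s+t\le 7s$, so $\exp(-D/24)\le n^{-sC/24}\le n^{-C(s+t)/800}$ follows. Either way $\mathbb{P}[\delta(S)/vol(S)<0.01]\le n^{-C|S|/800}$, as claimed.
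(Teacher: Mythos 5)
Your proof is correct, and it takes a genuinely different route from the paper's. The paper splits on the event $A=\{vol(S_V)<vol(S_U)/2\}$: on $A$ it argues deterministically that $\delta(S)/vol(S)\geq 1/3$, and on $\bar A$ it lower-bounds only the $S_V$-to-$(U\setminus S_U)$ random edges, which requires a Bayes argument to show the per-sample success probability $0.3$ survives conditioning on $\bar A$, followed by a stochastic-domination step before Chernoff applies. You avoid conditioning entirely: the exact volume identity $vol(S)=3D+T+Z_{\bar SS}-Z_{S\bar S}$ together with the bi-clique margin $T-2B\geq |T_V-T_U|$ absorbs the large-$|S_U|$ regime deterministically, and one unconditional Chernoff bound on $R$ finishes. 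Two features are worth flagging. First, you genuinely need both directions of random cross edges: the term $\e[Z_{\bar SS}]\geq T_U$ (when $|S_V|\leq n/2$) is what makes the exponent scale with $|S_U|$ when $|S_U|\gg|S_V|$; the paper instead extracts that scaling from the event $\bar A$ itself, which forces $vol_G(S_V)=\Omega(|S_U|\log n)$ (its Equation~\ref{eq:volg}). Second, your Chernoff application is to a sum of negatively associated indicators with heterogeneous success probabilities, which is standard but strictly stronger than the Fact stated in Section~2 (independent indicators with a common $p$); since the paper needs a comparable extension via coupling to a binomial, this is not a gap, but it deserves an explicit justification (e.g., negative association of without-replacement samples plus the usual MGF proof). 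Overall your single-inequality accounting is cleaner and unifies the paper's two cases, at the cost of a somewhat more involved deterministic bookkeeping step.
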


\begin{proof}
\begin{figure}[b]
    \centering
    \includegraphics[scale=0.8]{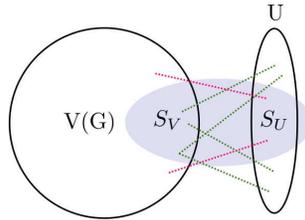}
    \caption{A depiction of a graph $G$ with expansion layer $U$. The gray area denotes a set $S$. In the event $A$ in our proof, we lower bound the number of pink edges, and in the complement event $\bar{A}$, we lower bound the number of green edges.}
    \label{fig:my_label}
\end{figure}

We split into cases by using the law of total probability. Denote by $A$ the event that ${ vol(S_V) < \frac{vol(S_U)}{2}}$. Then:
\begin{equation*}
\p\left[ \frac{\delta(S)}{vol(S)} <0.01 \right] = \p\left[ \frac{\delta(S)}{vol(S)} < 0.01 \middle| A \right] \cdot \p[A] + \p\left[\frac{\delta(S)}{vol(S)}  < 0.01  \middle|\bar{A} \right]\cdot \p[\bar{A}] .
\end{equation*}
We claim that in the event of $A$, the value of $\frac{\delta(S)}{vol(S)}$ must be at least $1/3$, and therefore the first summand equals $0$.
To see why, first observe that:
\begin{equation}
\label{eq:su}
vol(S_U) = e(S_U,S_V) + e(S_U,V(G) \setminus S_V) \leq vol(S_V) + \delta(S) .
\end{equation}
Therefore, 
$
vol(S) = vol(S_V) + vol(S_U) \leq 2 vol(S_V) + \delta(S), 
$
and hence 
\[
\frac{\delta(S)}{vol(S)} \geq \frac{\delta(S)}{2 vol(S_V) + \delta(S)} .
\]

Now, since we are assuming the event $A$ we also have $2 vol(S_V) < vol(S_U)$. Therefore, together with Equation \ref{eq:su} we get $vol(S_V) \leq \delta(S)$. Hence:
\[
\frac{\delta(S)}{2 vol(S_V) + \delta(S)} \geq \frac{\delta(S)}{3 \delta(S)} = \frac{1}{3} > 0.01 .
\]

We thus conclude:
\begin{equation*}
\begin{split}
&\p\left[ \frac{\delta(S)}{vol(S)}  < 0.01 \right] = 0 \cdot \p[A] + \p\left[\frac{\delta(S)}{vol(S)}  < 0.01  \middle| \bar{A} \right]\cdot \p[\bar{A}] \leq 
\p\left[\frac{\delta(S)}{vol(S)} < 0.01  \middle| \bar{A} \right] .
\end{split}
\end{equation*}
For the rest of the proof, we deal with bounding the right term; i.e. the failure probability in the event $\bar{A}$ where $vol(S_V) \geq \frac{vol(S_U)}{2}$. First, let us refer to $S_V$-to-$(U \setminus S_U)$ edges as \emph{nice} edges. Note that nice edges belong to $\delta(S)$.

We show that with high probability, in the event of $\bar{A}$, the number of nice edges is large compared to $vol(S)$. Consider the $\deg_G(v)$ edges sampled by some $v \in S_V$ during the construction of $G'$. Denote by $X_v$ the number of edges that got sampled by $v$ and are nice, and let $X = \sum_{v \in S_V} X_v$ be the total number of nice edges that got sampled by the vertices of $S_V$.
 We note that the total number of edges sampled by the vertices of $S_V$ (either nice or not) is $\sum_{v \in S_V} \deg_G(v) = vol_G(S_V)$. It will be beneficial for us to view the sampling of edges as a sequential process. That is, every $v \in V(G)$ samples $\deg_G(v)$ vertices from $U$, one by one and without replacement. Since the sampling is done without replacement, then after each step, the number of vertices in $U$ that hadn't been sampled yet is reduced by one. We call those vertices \emph{valid} vertices. 
Note that an edge that got sampled by $v$ is nice if the corresponding vertex that got sampled by $v$ is in $U \setminus S_U$. We claim that the probability that a sampled edge is nice is at least $0.1$, even when conditioned on $\bar{A}$. This is formalized by the next claim.
\begin{claim}
Let by $u_i \in U$ be the $i^{th}$ vertex that got sampled by $v$. Then, 
$
\p[ u_i \notin S_U \mid \bar{A} ] \geq 0.3 .
$
\end{claim}
\begin{proof}
First consider the above probability but without the condition of $\bar{A}$ (i.e, the probability $\p[u_i \notin S_U]$). The number of valid vertices in $U \setminus S_U$ when we sample $u_i$ is at least $|U| - |S_U| - (i-1)$. 
Recall that we assume that $|S_U| \leq 2.5n$. Since $i \leq \deg_G(v) \leq n$ and $|U| = 5n$ then this value is at least $5n - 2.5n - n = 1.5n$. Hence, 
$\p[ u_i \notin S_U] \geq \frac{1.5 n}{5n} = 0.3 $.

To prove the claim under the condition of $\bar{A}$, we use Bayes formula:
\begin{equation*}
\begin{split}
\p[ u_i\in U \setminus S_U \mid \bar{A} ] = \frac{\p[\bar{A} \mid u_i \in U \setminus S_U ] \cdot \p[ u_i \in U \setminus S_U]}{\p[\bar{A}]} \geq 0.3 \cdot \frac{\p[\bar{A} \mid u_i \in U \setminus S_U ]}{ \p[ \bar{A}] } .
\end{split}
\end{equation*}
Recall that $\bar{A}$ is the event $vol(S_U) \leq 2 vol(S_V)$, therefore $\bar{A}$ is more likely when conditioned on $u_i \in U \setminus S_U$ because it means that one sampled edge does not contribute to $vol(S_U)$.
Therefore, 
$
{\p[\bar{A} \mid u_i \in U \setminus S_U ]}\geq { \p[ \bar{A}] } $.

\end{proof}

Consider now a (separate) random process that consists of $vol_G(S_V)$ independent Bernoulli trials, each of which has a success probability $0.3$. The distribution of the total number of successes $Y$ is the \emph{binomial distribution} with parameters $vol_G(S_V)$ and $0.3$. We want to relate $Y$ to $X$ so that we can apply the strong concentration bounds associated with the binomial distribution on $X$. The problem is that $X$ is not distributed binomially because there are dependencies that arise from the fact that every vertex $v \in S_V$ samples vertices from $U$ \emph{without replacement}. As noted previously, even though samples made by the same vertex are dependent, we are still guaranteed to have a success probability of $0.3$ per sample, even when conditioned on any previous event. Thus, it follows from a coupling argument that $X$ \emph{stochastically dominates} $Y$. That is, for any $k$ we have:  
$
\p[X \geq k | \bar{A}] \geq \p[ Y \geq k ], 
$
and equivalently, we also have 
$\p[X \leq k | \bar{A}] \leq \p[ Y \leq k ]$. 
Therefore, since $\e[Y] = 0.3 \cdot vol_G(S_V)$ then by the Chernoff bound we have for any $0<\epsilon <1$:
\begin{equation*}
\begin{split}
&\p\left[ X \leq (1-\epsilon) \cdot 0.3 \cdot vol_G(S_V) | \bar{A}\right] \leq 
\p[ Y \leq  (1-\epsilon) \cdot 0.3 \cdot vol_G(S_V) ] \leq 
\exp\left(\frac{-\epsilon^2 \cdot \e[Y]}{3}\right) .
\end{split}
\end{equation*}
In particular, it implies (by setting $\epsilon = 0.7$) that:
\begin{equation}
\label{eq:chernoff}
\p\left[ X \leq 0.09 \cdot vol_G(S_V) | \bar{A} \right] \leq \exp\left(-\frac{49 vol_G(S_V)}{300}\right) \leq \exp\left(-\frac{ vol_G(S_V)}{10}\right).
\end{equation}
We've shown that $X$ and therefore $\delta(S)$ is likely to be a constant fraction of $vol_G(S_V)$. It remains to show that $vol_G(S_V)$ is at least some constant fraction of $ |S| \cdot C \log n$, which would imply that the probability in Equation \ref{eq:chernoff} is smaller than $n^{-\Omega(|S|)}$.  Observe that since we assume the event $\bar{A}$ then $vol(S_V) \geq vol(S_U)/2$. Thus, we get 
$
vol(S_V) = \frac{1}{2} \cdot ( vol(S_V) + vol(S_V) ) \geq  
\frac{1}{4} \cdot vol(S_U) + \frac{1}{2} vol(S_V),
$
and by the lower bounds on the degrees, we have:
$
vol(S_V) \geq \frac{C}{40} \cdot |S_U| \cdot \log{n} + \frac{3C}{4} |S_V| \cdot \log{n} .
$

Next, we show that $vol(S_V)$ is upper bounded by some constant times $vol_G(S_V) + |S| \log n$. Observe that:
\begin{equation}
\label{eq:blah}
vol(S_V) \leq 2 vol_G(S_V) + |S_V| \cdot \frac{C}{2} \log n ,
\end{equation}
because every edge that we do not count in $2 vol_G(S_V)$ must come from the deterministically added edges.  Thus,
\begin{equation*}
\frac{C}{40} \cdot |S_U| \cdot \log{n} + \frac{3C}{4} |S_V| \cdot \log{n} \leq vol(S_V) \leq 2 vol_G(S_V) + \frac{C}{2} |S_V| \cdot \log n ,
\end{equation*}
and therefore, we get:
\begin{equation}
\label{eq:volg}
vol_G(S_V) \geq 
(\frac{C}{40} \cdot |S_U| \cdot \log{n} + \frac{C}{4} |S_V| \cdot \log{n})/2
\geq 
\frac{|S| \cdot C \log{n}}{80}.
\end{equation}
Now, from Equations \ref{eq:chernoff} and \ref{eq:volg} we get:
\begin{equation}
\label{eq:chernoff2}
\begin{split}
&\p\left[ X \leq 0.09 \cdot vol_G(S_V) \middle| \bar{A}\right] \leq \exp\left(-\frac{vol_G(S_V)}{10}\right) \leq 
\exp\left(-\frac{\frac{C}{80} \cdot |S| \log n}{10}\right)  = n^{- \frac{C}{800} \cdot |S| }.
\end{split}
\end{equation}
Let us now finish the proof of the lemma. Since we're in the event $\bar{A}$ then
\begin{equation}
\label{eq:blah2}
vol(S) = vol(S_V) + vol(S_U) \leq 3 vol(S_V),
\end{equation}
and since $|S_V| \cdot C \log n \leq vol_G(S_V)$ (by assumption on the degrees) then together with Equation \ref{eq:blah} we have 
$
vol(S_V) \leq 2 vol_G(S_V) + |S_V| \cdot \frac{C}{2} \log n \leq 3 vol_G(S_V) .
$
Therefore, by the fact that $\delta(S) \geq X$ and by Equations \ref{eq:blah2} and \ref{eq:chernoff2}, we get:
\begin{equation*}
\begin{split}
& \p\left[ \frac{\delta(S)}{vol(S)}  < 0.01 \middle| \bar{A} \right] \leq  
 \p\left[ \frac{\delta(S)}{3 vol(S_V)}  < 0.01 \middle| \bar{A} \right] \leq 
 \p\left[ \frac{\delta(S)}{9 vol_G(S_V)}  < 0.01 \middle| \bar{A} \right] \leq \\
 &\p\left[ \frac{X}{9 vol_G(S_V)}  < 0.01 \middle| \bar{A} \right] =
 \p\left[ X  < 0.09 \cdot vol_G(S_V)\middle| \bar{A} \right]  \leq 
n^{- \frac{C}{800} \cdot |S| }
\end{split}
\end{equation*}
\end{proof}

\begin{proof}[Proof of Theorem \ref{thm:basic}]
Fix $C$ to be large enough so that the failure probability in Lemma \ref{lem:basic} is less than $n^{-2|S|}$ ($C = 1600$ will do).
Since the number of different cuts that consist of $k$ vertices is $O(n^k)$, we can apply the union-bound over all cuts such that $|S_U| \leq |U|/2$. In more detail, we sum over $k=1,2, \ldots,3.5 n$ (any cut larger than that must have $|S_U| > |U|/2$) the number of cuts of size $k$ times the failure probability of such cuts. 
Formally, for any non-empty cut $(S,\bar{S})$ with $|S_U| \leq |U|/2$, denote by $B_S$ the event that $\phi(S) < 0.01$. By Lemma \ref{lem:basic} and the union bound, we have:
\begin{equation*}
\begin{split}
&\p[\phi(G) < 0.01] = \p\left[ \bigcup_{\substack{S \subseteq V(G'),\\|S_U| \leq 2.5n}} B_S \right] \leq 
\sum_{\substack{S\subseteq V(G'),\\|S_U| \leq |U|/2}} \p[B_S] \leq 
\sum_{\substack{S\subseteq V(G'),\\|S_U| \leq |U|/2}} n^{-2 |S|} 
\end{split}
\end{equation*}
Note that for any cut $S$ with $|S_U| \leq |U|/2$ we have $|S| \leq 3.5n$. Therefore: 
\begin{equation*}
\begin{split}
&\sum_{\substack{S\subseteq V(G'),\\|S_U| \leq |U|/2}} n^{-2 |S|} =
\sum_{k=1}^{3.5n} \sum_{\substack{|S| = k,\\|S_U| \leq |U|/2}} \cdot n^{-2k} \leq 
\sum_{k=1}^{3.5n} \binom{6n}{k} \cdot n^{-2k} \leq 
\sum_{k=1}^{3.5n} (6n)^k \cdot n^{-2k} = O\left(\frac{1}{n}\right)
\end{split}
\end{equation*}
\end{proof}

Let us now explain why we can assume that the degrees in $G$ are at least $C \log n$. If for some $v \in V(G)$ we have $deg_G(v) < C \log n$, then we can add $C \log n - deg_G(v)$ self-loops to $v$ in a preprocessing step. After constructing the expander $G'$ we can remove all those self-loops from $G'$. Notice that by removing self-loops from an expander we only increase the expansion of the graph.

\subsection{Direct-WTERs for Maximum Cardinality Matching and Minimum Vertex Cover}

We are now ready to show our first WTERs for the problems of computing the Maximum Cardinality Matching (abbreviated as MCM) and Minimum Vertex Cover (abbreviated as MVC) of a graph $G$. We use $MCM(G)$ and $MVC(G)$ to denote their values, respectively.

Nai\"vely, we could try to simply apply Algorithm \ref{sub:basic} on $G$, resulting in an expander graph $G'$. However, this approach will fail 
because there is no obvious way to compute $MCM(G)$ ($MVC(G)$) from $MCM(G')$ ($MVC(G')$). 
Therefore, we employ a simple trick that helps us control the values of the MCM and MVC. The trick is to add \emph{twin vertices} to a selected subset of the vertices of the graph. Formally, we define a twin of a vertex $u$ to be a degree-$1$ vertex $u'$ that is adjacent to $u$ (that is, $u'$ is a pendant vertex). By the next fact, we can add twins to the graph to force a selected subset of edges or vertices to be inside the MCM or MVC, respectively.
\begin{fact}
\label{fact:twins}
If a graph contains a twin vertex $u'$ of $u$, then there exists an MCM that contains the edge $\{u,u'\}$, and there exists an MVC that contains the vertex $u$.
\end{fact}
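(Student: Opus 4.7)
The plan is to prove both parts by a one-line exchange argument that exploits the defining property of a twin: by the definition just given, $u'$ has degree exactly one in $G$ and its unique neighbor is $u$. Because $u'$ is incident only to the edge $\{u,u'\}$, it behaves very restrictively: in any matching it is either matched to $u$ or unmatched, and in any vertex cover the only edge whose coverage could conceivably rely on $u'$ is $\{u,u'\}$ itself. These two observations drive both claims.

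For the matching claim, fix any maximum cardinality matching $M$ of $G$. If $\{u,u'\}\in M$ there is nothing to prove. Otherwise the twin property forces $u'$ to be unmatched in $M$, since its only incident edge is $\{u,u'\}$. If $u$ were also unmatched, then $M \cup \{\{u,u'\}\}$ would be a strictly larger matching, contradicting that $M$ is an MCM; so $u$ is matched in $M$ by some edge $\{u,v\}$. Define
\[
M' := (M\setminus \{\{u,v\}\}) \cup \{\{u,u'\}\}.
\]
The only edge of $M$ incident to $u$ has been removed, and $u'$ has no other incident edges in $G$, so $M'$ is again a matching. It has the same cardinality as $M$, hence $M'$ is an MCM containing $\{u,u'\}$.

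For the vertex cover claim, fix any minimum vertex cover $C$ of $G$. The edge $\{u,u'\}$ must be covered, so $u\in C$ or $u'\in C$. If $u\in C$ we are done; otherwise set
\[
C' := (C\setminus\{u'\})\cup\{u\},
\]
which has the same cardinality as $C$. To check that $C'$ is a vertex cover, note that any edge whose coverage by $C$ could have used $u'$ must be incident to $u'$, and the twin property makes that edge uniquely $\{u,u'\}$; it is now covered by $u\in C'$, while every other edge of $G$ is still covered by $C\setminus\{u'\}\subseteq C'$. Hence $C'$ is an MVC containing $u$. There is no real obstacle here: the entire content is that a twin vertex contributes only one edge to $G$, making $u'$ interchangeable with $u$ for both problems via a trivial local swap.
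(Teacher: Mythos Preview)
Your argument is correct. The paper states this as a \emph{Fact} without proof, so there is no ``paper's own proof'' to compare against; your standard exchange argument is exactly the kind of justification the paper implicitly relies on.
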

We note that we need to be careful that by adding twins to an expander graph, we do not reduce the expansion by much. Formally, we prove this in the next lemma:
\begin{lemma}
\label{lem:twins}
Let $H$ be a connected graph and let $A \subseteq V(H)$ be arbitrary. Let $H'$ be a graph constructed from $H$ by adding a twin vertex $u' \in V(H')$ and an edge $\{u,u'\} \in E(H')$ for every $u \in A$. Then, $\phi(H') \geq \phi(H)/2$.
\end{lemma}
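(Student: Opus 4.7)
The plan is to take an arbitrary cut $(S',\bar{S'})$ in $H'$, project it to the cut $(S,\bar S) := (S' \cap V(H),\; \bar{S'} \cap V(H))$ in $H$, and carefully track how the added twin vertices and edges perturb the numerator and denominator of the conductance ratio. For each $u \in A$ with twin $u'$, classify the edge $\{u,u'\}$ as internal to $S'$, internal to $\bar{S'}$, or crossing the cut. Letting $x$ be the number of twin edges internal to $S'$, letting $c$ be the total number of crossing twin edges, and letting $c_S$ be the number of crossing twin edges whose $S'$-endpoint lies in $S'$, a direct count gives
\[
|E_{H'}(S',\bar{S'})| \;=\; |E_H(S,\bar S)| + c, \qquad \mathrm{vol}_{H'}(S') \;=\; \mathrm{vol}_H(S) + 2x + c_S,
\]
since each internal twin edge contributes $2$ to $\mathrm{vol}_{H'}(S')$ while each crossing twin edge contributes $1$; a symmetric identity holds for $\bar{S'}$.

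I first dispose of the degenerate cases. If $S = \emptyset$, then $S'$ consists entirely of pendant twins whose unique edges cross to their originals in $V(H) \subseteq \bar{S'}$, so every edge incident to $S'$ is a cut edge and $\phi_{H'}(S') = 1 \geq \phi(H)/2$; the case $\bar S = \emptyset$ is handled by the analogous observation on the smaller side. Otherwise both $S$ and $\bar S$ are nonempty, and the definition of $\phi(H)$ yields
\[
|E_H(S,\bar S)| \;\geq\; \phi(H) \cdot \min(\mathrm{vol}_H(S), \mathrm{vol}_H(\bar S)).
\]

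I then case-split on which side has smaller volume in $H'$ and plug the two identities into the conductance ratio. The key quantitative bound is $x \leq |A \cap S| \leq |S| \leq \mathrm{vol}_H(S)$, using connectivity of $H$ to conclude that every vertex in $V(H)$ has degree at least $1$ in $H$. Combining this with the elementary observation that, for $a \leq b$ and $t \geq 0$, the ratio $\tfrac{a + t}{b + t}$ is nondecreasing in $t$, the crossing-edge contribution $c$ appears equally in the numerator and in the denominator of the $S'$-side bound and only helps, while the internal-twin inflation $2x$ in the denominator is absorbed into a factor of at most $2$ against $\mathrm{vol}_H(S)$; this is where the factor $\phi(H)/2$ comes from.

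The main obstacle is the case where the twin placement flips which side is ``smaller'': it is possible that $\mathrm{vol}_H(S) \leq \mathrm{vol}_H(\bar S)$ while $\mathrm{vol}_{H'}(S') > \mathrm{vol}_{H'}(\bar{S'})$, so the denominator that must be bounded is $\mathrm{vol}_{H'}(\bar{S'})$ even though the $\phi(H)$ bound gives cut size in terms of $\mathrm{vol}_H(S)$. I plan to handle this by arguing that a large twin-inflation of $\mathrm{vol}_{H'}(\bar{S'})$ must be accompanied either by a proportional number of crossing twin edges (which add to the cut) or by many twins sitting on the $\bar{S'}$ side together with their originals (which keeps $\mathrm{vol}_{H'}(\bar{S'})$ within a constant factor of $\mathrm{vol}_H(\bar S)$), so in either subcase the $\phi(H)/2$ bound is preserved. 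I expect this bookkeeping between the four sub-cases (smaller side in $H$ vs.\ smaller side in $H'$) to be the only technically delicate step.
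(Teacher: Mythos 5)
Your accounting is correct and is essentially a more careful version of the paper's one-paragraph argument (the paper simply bounds $\delta_{H'}(S)\geq\delta_H(S\cap V(H))$ and $vol_{H'}(S)\leq vol_H(S\cap V(H))+|S|$). One simplification first: the ``side flip'' case analysis you anticipate as the delicate step is unnecessary. Since $\frac{\delta}{\min(v_1,v_2)}=\max\bigl(\frac{\delta}{v_1},\frac{\delta}{v_2}\bigr)$, you can bound $\delta_{H'}(S')/vol_{H'}(S')$ and $\delta_{H'}(S')/vol_{H'}(\bar{S'})$ separately by the corresponding ratios in $H$ and take the maximum at the end; you never need to know which side is smaller in which graph. (Also, your $c_S$ equals $c$: every crossing edge contributes exactly $1$ to $vol_{H'}(S')$ from whichever endpoint lies in $S'$.)

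The genuine gap is the constant. From $vol_{H'}(S')=vol_H(S)+2x+c$ and $x\leq|S\cap A|\leq vol_H(S)$ you only get $vol_{H'}(S')\leq 3\,vol_H(S)+c$, and then the monotonicity of $t\mapsto\frac{a+t}{b+t}$ yields $\phi_{H'}(S')\geq\phi(H)/3$, not $\phi(H)/2$; to absorb $2x$ into a single extra copy of $vol_H(S)$ you would need $x\leq vol_H(S)/2$, which connectivity does not give when $H$ has degree-one vertices. This cannot be repaired: the lemma as stated is false. Take $H$ to be the path $a\text{--}b\text{--}c$ (so $\phi(H)=1$) and $A=\{a\}$; then $H'$ is the path $a'\text{--}a\text{--}b\text{--}c$ and the cut $S'=\{a',a\}$ has $\delta_{H'}(S')=1$ and $vol_{H'}(S')=vol_{H'}(\bar{S'})=3$, so $\phi(H')=1/3<\phi(H)/2$. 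The paper's own proof has the same flaw (its step $|S|\leq vol_H(S)$ fails when $S$ contains twins, which have degree $0$ in $H$), so you should not feel obligated to reach the stated constant; prove the bound $\phi(H')\geq\phi(H)/3$ (which your calculation delivers and which the example shows is tight), or note that the constant $2$ is recovered when $H$ has minimum degree at least $2$ --- in the paper's applications $H$ has minimum degree $\Omega(\log n)$, so either version suffices, as only $\phi(H')=\Omega(\phi(H))$ is ever used.
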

\begin{proof}
Let $S \subseteq V(H')$. Observe that $\delta_{H'}(S) \geq \delta_H(S)$. Additionally, since every vertex in $S$ has at most one twin then $\deg_{H'}(v) \leq \deg_{H}(v) + 1$ for every $v \in S$, and therefore $vol_{H'}(S) \leq |S| + vol_H(S)$. Since $H$ is connected, then $|S| \leq vol_H(S)$ and therefore 
$vol_{H'}(S) \leq |S| + vol_G(S) \leq 2 vol_H(S).$ 
Thus, since the above holds for $\bar{S}$ as well, we get:
\[
\phi_{H'}(S) = \frac{\delta_{H'}(S)}{\min\{vol_{H'}(S),vol_{H'}(\bar{S})\}} \geq \frac{\delta_H(S)}{2 \min \{vol_H(S),vol_H(\bar{S}) \} } = \frac{\phi_H(S)}{2}, 
\]
and therefore $\phi(H') \geq \frac{\phi(H)}{2}$.
\end{proof}
\noindent
We are now ready to show a Direct-WTER for MCM.
\begin{proof}[Proof of Theorem \ref{thm:mcm} (Direct-WTER for MCM)]
Apply Algorithm \ref{sub:basic} on $G$ to get an $\Omega(1)$-expander $\hat{G}$. Then add a twin vertex $u'$ for every $u \in U$, where $U$ is the expansion layer in $\hat{G}$. Denote the resulting graph by $G'$. See Figure \ref{fig:mcm}. Note that the number of twins added is $|U| = 5n$ and that the number of added edges is $O(m + n\log n)$. Hence, the blowup is $(N,M) = (O(n),\tilde{O}(m))$. Moreover, by Lemma \ref{lem:twins} we have $\phi(G') = \Omega(1)$.
\begin{figure}[h!]
    \centering
    \includegraphics[scale=0.8]{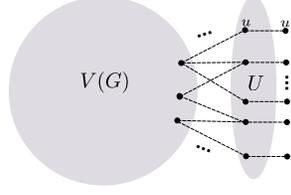}
    \caption{The graph $G'$ obtained from our construction for MCM and MVC.}
    \label{fig:mcm}
\end{figure}

\noindent
Next, we prove that computing $MCM(G)$ from $MCM(G')$ is trivial.
\begin{claim}
The size of the maximum cardinality matching in $G'$ is 
$MCM(G') = MCM(G) + 5n$.
\end{claim}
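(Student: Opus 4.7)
The plan is to prove $MCM(G') = MCM(G) + 5n$ by two matching inequalities, using the fact that the $5n$ twin edges form a perfect forced matching on $U \cup \{u' : u \in U\}$.

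For the lower bound $MCM(G') \geq MCM(G) + 5n$, I would take any MCM $M$ of $G$ and simply augment it with the set $T := \{\{u, u'\} : u \in U\}$ of twin edges. The twin edges in $T$ are pairwise vertex-disjoint (each $u'$ is a distinct pendant), and they are vertex-disjoint from every edge of $G$ (since $U$ and the twins are fresh vertices outside $V(G)$). Hence $M \cup T$ is a valid matching in $G'$ of size $|M| + 5n = MCM(G) + 5n$.

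For the upper bound $MCM(G') \leq MCM(G) + 5n$, I would first argue that there exists an MCM $M^*$ of $G'$ containing every edge in $T$. This is the iterated version of Fact~\ref{fact:twins}: given any MCM that omits some twin edge $\{u,u'\}$, the pendant $u'$ is necessarily unmatched (its only neighbor is $u$), so either $u$ is also unmatched (and we can add $\{u,u'\}$, contradicting maximality) or $u$ is matched by some edge $e$, in which case we swap $e$ for $\{u,u'\}$; this local swap preserves the matching size and, crucially, does not touch any other pair $\{v,v'\}$ with $v \in U\setminus\{u\}$. Iterating across all $u \in U$ produces the desired $M^*$. Now, since $M^*$ saturates every vertex of $U$ and every twin, the remaining edges $M^* \setminus T$ are a matching supported on $V(G)$. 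By construction of $G'$, every edge of $G'$ incident to $V(G)$ either lies in $E(G)$ or has its other endpoint in $U$ (both the random sampled edges and the deterministic bicliques go between $V(G)$ and $U$). Therefore $M^* \setminus T \subseteq E(G)$, so $|M^*| - 5n \leq MCM(G)$.

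The two inequalities combine to give the claim. I do not foresee a real obstacle here: the only mild subtlety is the iterated application of the twin-swap argument to all $5n$ twins at once, which works because the swap at vertex $u$ is local and leaves the other twin pairs untouched. Everything else follows immediately from the construction, which introduces no edges inside $V(G)$ beyond those of $G$ itself.
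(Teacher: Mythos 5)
Your proof is correct and follows essentially the same route as the paper's: both rest on the existence of a maximum matching of $G'$ containing all $5n$ twin edges (Fact~\ref{fact:twins}), after which the remaining edges must form a matching of $G$ and the count follows. Your write-up merely makes explicit the two inequalities and the iterated twin-swap that the paper leaves implicit.
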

\begin{proof}
Denote by $X \subseteq E(G')$ the set of all edges incident to twins. By Fact \ref{fact:twins}, there exists in $G'$ a maximum matching $M$ that contains $X$. Notice that $M$ does not contain any $U $-to-$V(G)$ edges and therefore $|M| =  |M \cap E(G)| + |X| = |M \cap E(G)| +5n$. Moreover, $M \cap E(G)$ must be a maximum matching of $G$ because if there was a larger matching we could take it with $X$ to obtain a larger matching for $G'$. Thus, 
$|M| = |M \cap E(G)| + 5n = MCM(G) +  5n$.
\end{proof}
We've shown a transformation from $G$ to a $\Omega(1)$-expander $G'$ with blowup $(N,M) = (O(n),\tilde{O}(m))$, such that $MCM(G)$ can be computed efficiently from $MCM(G')$.
\end{proof}

Next, we show a Direct-WTER for MVC. Since the best upper bounds for MVC are exponential then we aim to keep the blowup in the number of vertices small. 
To this end, we slightly modify Algorithm \ref{sub:basic} to avoid adding an expansion layer of size $5n$.
\paragraph{Modifications to Algorithm \ref{sub:basic}.} Instead of adding an expansion layer of size $5n$, we add an expansion layer of size $5\max\{\Delta(G),C \log n\}$. Notice that $U$ is sufficiently large to allow every vertex in $V(G)$ to sample $\max\{ \deg_{G}(v), C \log n \} $ vertices from $U$ without replacement. 
We claim that Theorem \ref{thm:basic} holds under this modification. To see why, observe that throughout the construction and analysis we only needed $U$ to be large enough so that: (1) every vertex $v \in V(G)$ can sample without replacement $\max\{ \deg(v), C \log n \}$ vertices from $U$, and (2) $|U| - |S_U| - \max\{\deg(v),C\log n\} \geq 1.5 |U|$ which indeed holds as long as $|S_U| \leq 2.5 |U|$. 

\begin{proof}[Proof of Theorem \ref{thm:mvc} (Direct-WTER for MVC)]
Apply the modified Algorithm \ref{sub:basic} on $G$ and add a twin vertex $u'$ for every $u \in U$. Denote the resulting graph by $G'$.
The main difference between the previous construction and this one is that we now have
$
|U| = 5 \max\{ \Delta(G), C \log n \}
$
and therefore, the blowup in $G'$ is $(N,M) = (n + O(\log n + \Delta(G)), \tilde{O}(m))$. Moreover, 
\begin{claim}
The size of the minimum vertex cover in $G'$ is:
\[
MVC(G') = MVC(G) + 5\max\{\Delta(G), C \log n\}
\]
\end{claim}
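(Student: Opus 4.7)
The plan is to prove the equality by the two standard inequalities. For the upper bound, I will exhibit a specific vertex cover of size $MVC(G) + 5\max\{\Delta(G), C\log n\}$. Take $C^\star = C^\star_G \cup U$, where $C^\star_G$ is a minimum vertex cover of $G$. I claim $C^\star$ covers every edge of $G'$: the original edges of $G$ are covered by $C^\star_G$; the random edges and bi-clique edges between $V(G)$ and $U$ are covered by $U \subseteq C^\star$; and the twin edges $\{u,u'\}$ are covered since $u \in U \subseteq C^\star$. This gives $MVC(G') \leq MVC(G) + |U| = MVC(G) + 5\max\{\Delta(G), C\log n\}$.

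For the matching lower bound, let $C$ be any minimum vertex cover of $G'$. The key observation is that the edge sets can be split into two groups whose endpoints live in disjoint vertex sets: the twin edges $\{u,u'\}_{u\in U}$ have endpoints in $U \cup U'$, while the edges of $G$ have endpoints in $V(G)$. These three sets $V(G), U, U'$ are pairwise disjoint, so the contributions to $|C|$ from each can be bounded separately. First, the twin edges form a perfect matching of size $|U|$ on $U \cup U'$, so $C$ must contain at least one endpoint of each, yielding $|C \cap (U \cup U')| \geq |U|$. Second, only vertices of $V(G)$ are incident to edges of $G$ in $G'$, so $C \cap V(G)$ must itself be a vertex cover of $G$, hence $|C \cap V(G)| \geq MVC(G)$. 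Summing these disjoint contributions gives $|C| \geq MVC(G) + |U| = MVC(G) + 5\max\{\Delta(G), C\log n\}$.

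Combining the two directions yields $MVC(G') = MVC(G) + 5\max\{\Delta(G), C\log n\}$, as claimed. There is no real obstacle here — the twin gadget was designed precisely to enforce that the added vertices of $U$ must be paid for once each in any cover (via Fact \ref{fact:twins}), and to keep this payment decoupled from the MVC cost of the original $G$ by placing twin edges and $G$-edges on disjoint vertex sets. The only thing to double-check is that adding the twin vertices does not create new edges among $V(G)$ that could let a different vertex cover cheat; since twins have degree exactly one and are only attached to $U$, no such interaction occurs, and the two lower bound contributions genuinely add.
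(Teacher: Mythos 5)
Your proof is correct and matches the paper's argument in substance: both hinge on the twin gadget forcing $|U|$ vertices into any cover while keeping that cost decoupled from the cost of covering $E(G)$, since every added edge is incident to $U$ and every twin edge lives on $U\cup U'$, disjoint from $V(G)$. The only difference is cosmetic and lies in the lower bound: the paper first normalizes an optimal cover to contain $U$ via Fact~\ref{fact:twins} and then argues by exchange that its restriction to $V(G)$ is a minimum cover of $G$, whereas you bound the contributions from the twin-edge matching on $U\cup U'$ and from $V(G)$ separately and add them; both routes are valid.
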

\begin{proof}
By Fact \ref{fact:twins}, there exists a minimum vertex cover $D$ of $G'$ that contains $U$. Therefore, $D$ is a disjoint union of $U$ and $D_V = D \cap V(G)$. 
Note that $D_V$ must cover every edge of $G$. Moreover, $D_V$ must be a minimum vertex cover in $G$ because any smaller vertex cover $\tilde{D}$ can be added to $U$ to obtain a smaller vertex cover of $G'$. Thus:
\[
MVC(G') = |D_V| + |U | = MVC(G) + 5 \max \{\Delta(G),C \log n\}
\]
\end{proof}
Hence, we've shown a transformation from $G$ to a $\Omega(1)$-expander $G'$ with blowup $(N,M) = (n+O(\log n + \Delta(G)),\tilde{O}(m))$, such that $MVC(G)$ can be computed efficiently from $MVC(G')$.
\end{proof}

\subsection{Direct-WTER for $k$-Clique}
The construction for $k$-Clique can be described roughly in the following way. We apply a simple transformation that makes the graph $k$-partite, and such that every vertex has the same number of neighbors in every part (except the one it belongs to). This property is crucial as it implies that any sparse cut in the graph must contain vertices from all parts. 
 We apply Algorithm \ref{sub:basic} to only one of the parts and obtain a graph $G'$ (we later describe this step in more detail). Since any sparse cut in the $k$-partite graph must contain vertices from the part that has an expansion layer, there are no sparse cuts in $G'$. Specifically, the graph $G'$ has conductance $\Omega(1/k^2)$ and the number of $k$-Cliques in $G'$ is $k!$ times the number of $k$-cliques in $G$. 

\begin{proof}[Proof of Theorem \ref{thm:k-clique} (Direct-WTER for $k$-Clique)]
We would like to assume that $G$ is $k$-partite with parts $V_1,V_2,\ldots,V_k$, such for every vertex $v \in V_i$, $v$ has the same number of neighbors in every part $V_j$ for $j \neq i$. We can take $k$ copies $V_1,\ldots,V_k$ of $V(G)$ such that $V_i = \{v_i \mid v \in V(G)\}$. Now, for every edge $\{u,v\} \in E(G)$ we create $\binom{k}{2}$ copies: $\{ \{u_i,v_j\} \mid i \neq j\}$. Observe that the resulting graph is $k$-partite, the number of vertices in this graph is $kn$, and the number of edges is $k \cdot m^2$. Most importantly, the number of $k$-cliques in this graph is $k!$ times the number of $k$-cliques in the original graph. For clarity, we henceforth assume that $G$ has this $k$-partite form from the beginning, but we state our results as if it had not (e.g., we state that the blowup is $(O(nk),\tilde{O}(mk^2))$). 

We wish to add an expansion layer to $G$ without introducing any $k$-clique. Hence, we cannot na\"ively apply Algorithm \ref{sub:basic} on $G$ because, with some probability, some vertices from the expansion layer will participate in a $k$-clique. Instead, we add an expansion layer that is adjacent only to $V_1$, but with respect to the degrees of the vertices of $V_1$ in $G$. 
That is, we apply Algorithm \ref{sub:basic} in the following way: for every $v \in V_1$, $v$ will sample $\max\{\deg_G(v), C \log n \}$ edges to the expansion layer $U$. 
Denote the resulting graph by $G'$. See Figure \ref{fig:k-clique}.

\begin{figure}[h!]
    \centering
    \includegraphics[scale=0.8]{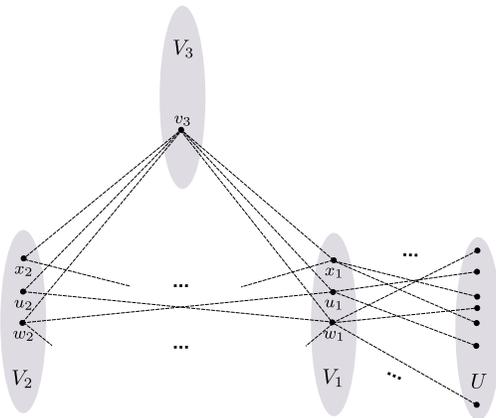}
    \caption{The construction of $k$-Clique for $k=3$. In this example, we can see that $v_3$ participates in the triangles $u_1,w_2,v_3$ and $w_1,u_2,v_3$.}
    \label{fig:k-clique}
\end{figure}

\begin{lemma}
\label{lem:k-clique}
With probability $1-O(\frac{1}{n})$, the conductance of $G'$ is $\phi(G') = \Omega(1/k^2)$.
\end{lemma}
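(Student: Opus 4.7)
The plan is to apply Theorem~\ref{thm:basic} to the subgraph $H := G'[V_1 \cup U]$ and then argue deterministically, conditional on $\phi(H) \geq 0.01$, that $\phi(G') \geq \Omega(1/k^2)$. By construction, $H$ is precisely the graph produced by running the algorithm of Section~\ref{sub:basic} on $V_1$ with sampling parameter $\max\{\deg_G(v), C\log n\}$; the degree-lower-bound hypothesis is thus satisfied, so Theorem~\ref{thm:basic} yields $\phi(H) \geq 0.01$ with probability $1 - O(1/n)$.

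Fix an arbitrary cut $S \subseteq V(G')$ and assume WLOG $vol_{G'}(S) \leq vol_{G'}(\bar{S})$. Write $S_W = S \cap (V_1 \cup U)$ and $S_i = S \cap V_i$, and analogously for $\bar{S}$. Every edge of $G'$ lies either in $H$ (both endpoints in $V_1 \cup U$) or in the $k$-partite graph $G$ (no endpoint in $U$), so we have the clean decomposition $\delta_{G'}(S) = \delta_H(S_W) + \delta_G(S)$ and $vol_{G'}(S) = vol_H(S_W) + vol_G(S)$. We will combine two lower bounds on $\delta_{G'}(S)$: $(i)$ the $H$-expansion bound $\delta_H(S_W) \geq 0.01\min\{vol_H(S_W), vol_H(\bar{S}_W)\}$; and $(ii)$ a $k$-partite balanced-degree bound derived from the fact that each $v \in V_j$ has exactly $\deg_G(v)/(k-1)$ neighbors in every other part $V_i$. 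Specifically, counting edges from $S \setminus V_1$ into $\bar{S}_1$ gives $\delta_G(S) \geq vol_G(S \setminus V_1)/(k-1) - vol_G(S_1)$.

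The argument splits into two cases according to whether $vol_H(S_W) \geq vol_{G'}(S)/k^2$. In the expander-dominated case, in the well-behaved subcase $vol_H(S_W) \leq vol_H(\bar{S}_W)$, bound $(i)$ alone yields $\delta_{G'}(S) \geq 0.01\, vol_{G'}(S)/k^2$, matching the target. In the partite-dominated case, the inequality $vol_H(S_1) \geq vol_G(S_1)$ (since every $v \in V_1$ contributes at least $\deg_G(v)$ sampled edges in $H$) forces $vol_G(S_1) \leq vol_H(S_W) < vol_{G'}(S)/k^2$, so $vol_G(S \setminus V_1) \geq vol_{G'}(S)/2$ and bound $(ii)$ gives $\delta_G(S) = \Omega(vol_{G'}(S)/k)$, comfortably above the target. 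The main obstacle I anticipate is the residual subcase of the expander-dominated regime in which $vol_H(S_W) > vol_H(\bar{S}_W)$ and $vol_H(\bar{S}_W) < vol_{G'}(S)/k^2$, where bound $(i)$ degenerates because $vol_H(\bar{S}_W)$ is tiny. The fix is to observe that this situation forces $vol_G(\bar{S}_1) \leq vol_H(\bar{S}_W) < vol_{G'}(S)/k^2$ while $vol_G(\bar{S}) \geq (1 - o(1))\,vol_{G'}(S)$ (since $vol_{G'}(\bar{S}) \geq vol_{G'}(S)$ and $vol_H(\bar{S}_W)$ is negligible), so a mirror-image application of bound $(ii)$ — now counting edges from $\bar{S} \setminus V_1$ toward $S_1$ — yields $\delta_G(S) = \Omega(vol_{G'}(S)/k)$, again more than sufficient. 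A short union-bound then packages the single probabilistic step ($\phi(H) \geq 0.01$) into the $1 - O(1/n)$ probability claimed by the lemma.
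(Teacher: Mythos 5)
Your proposal is correct, and it takes a genuinely different route from the paper's. The paper splits on whether the cut restricted to $V(G)$ is already expanding in the $k$-partite graph: if $\phi_G(S_V)\geq 0.1/k^2$ it transfers that expansion to $G'$ directly (its Lemma~\ref{lem:first-case}, losing a factor $3$), and otherwise it uses a ``heavy pair of parts'' counting argument (Lemma~\ref{lem:second-case}) to show that $V_1$ must capture an $\Omega(1/k^2)$ fraction of $vol_G(S_V)$, at which point the expansion layer supplies the cut edges via Lemma~\ref{lem:basic}. You instead split on where the \emph{volume} of $S$ sits, exploiting the exact decomposition $E(G')=E(H)\sqcup E(G)$ (valid because $V_1$ is independent in the $k$-partite graph, so $H=G'[V_1\cup U]$ contains precisely the new edges), and you replace the heavy-pair lemma with the sharper direct bound $\delta_G(S)\geq vol_G(S\setminus V_1)/(k-1)-vol_G(S_1)$ coming from the balanced-degree property, applied once to $S$ and once in mirror to $\bar S$ to handle the degenerate subcase where $vol_H(\bar S_W)$ is tiny. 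Your version buys a more modular argument --- the randomness is consumed in a single conditioning event $\phi(H)\geq 0.01$ via Theorem~\ref{thm:basic}, after which every cut is analyzed deterministically --- and it even yields the stronger bound $\Omega(1/k)$ in the partite-dominated cases; the paper's route is somewhat more transparent about \emph{why} $V_1$ must receive volume when the $G$-cut is sparse. I verified the two arithmetic pressure points: in your Case B, $vol_G(S\setminus V_1)\geq(1-2/k^2)vol_{G'}(S)$ and hence $\delta_G(S)\geq\bigl(\tfrac{1-2/k^2}{k-1}-\tfrac{1}{k^2}\bigr)vol_{G'}(S)\geq\tfrac{4}{9k}vol_{G'}(S)$ for $k\geq 3$, and the mirror computation in the residual subcase is identical; also the inequality $vol_G(S_1)\leq vol_H(S_1)$ holds because each $v\in V_1$ samples $\max\{\deg_G(v),C\log n\}\geq\deg_G(v)$ edges into $U$. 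One small point worth making explicit in a final write-up: the residual subcase cannot produce a vacuous conclusion when $S_V=\emptyset$ (where $\delta_G(S)=0$), because $vol_H(\bar S_W)\geq vol_H(V_1)$ then contradicts $vol_H(\bar S_W)<vol_{G'}(S)/k^2\leq vol_H(U)/k^2$; your derived lower bound is consistent with this, but stating it avoids any appearance of a gap.
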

\begin{proof}[Proof of Lemma \ref{lem:k-clique}]
We begin by introducing some notation. Let $S \subseteq V(G')$. Let $S_i := S \cap V_i$, $S_U := S \cap U$ and $S_V := S \cap V(G)$. We omit subscripts of $G'$. Assume w.l.o.g. that $|S_U| \leq |U|/2$ (otherwise take $\bar{S}$). To show that $\phi(S) = \Omega(1/k^2)$, we consider two cases:
\begin{enumerate}[label=(\arabic*)]
\item $\phi_{G}(S_V)$ is large. Then we show that $\phi(S)$ is also large with probability $1$.
\item $\phi_{G}(S_V)$ is small. Then we show that $vol_{G}(S_1)$ is at least some constant (that depends on $k$) fraction of $vol_{G}(S_V)$, and by Lemma \ref{lem:basic}, there are many cross edges between $S_1$ and $U$. Therefore $\phi(S)$ is large. 
\end{enumerate}
The first case is formalized by the next lemma.
\begin{lemma}
\label{lem:first-case}
For any cut $S \subseteq V(G')$, if $\phi_{G}(S_V) \geq p$ for some $0 < p <1$ then $\phi(S) \geq \frac{p}{3}$.
\end{lemma}
\begin{proof}
By the assumption that $\phi_{G}(S_V) \geq p$, we have:
\begin{equation*}
\begin{split}
&\phi(S) \geq \frac{\delta(S)}{vol(S)} = \frac{\delta_G(S_V) + \delta(S_U)}{vol(S)} \geq \frac{p \cdot vol_G(S_V) + e(V_1 \setminus S_1,S_U) + e(S_1,U \setminus S_U)}{vol(S)}.
\end{split}
\end{equation*}
Next, we upper bound $vol(S)$. Note that every edge counted by $vol(S)$ that does not belong to $G$ must be incident to either $S_1$ or $S_U$. We can thus write 
$
vol(S) \leq vol_{G}(S_V) + 2e(S_1,S_U) +  e(V_1 \setminus S_1,S_U) + e(S_1,U\setminus S_U) ,
$
and therefore:
\begin{equation*}
\begin{split}
&\frac{p \cdot vol_G(S_V) + e(V_1 \setminus S_1,S_U) + e(S_1,U \setminus S_U)}{vol(S)} \geq 
\frac{p \cdot vol_G(S_V) + e(V_1 \setminus S_1,S_U) + e(S_1,U \setminus S_U)}{vol_{G}(S_V) + 2e(S_1,S_U) + e(V_1 \setminus S_1,S_U) + e(S_1,U\setminus S_U)  } \geq \\
&\frac{p \cdot vol_G(S_V)}{vol_{G}(S_V) + 2e(S_1,S_U)}.
\end{split}
\end{equation*}
Since $e(S_1,S_U)$ consists of at most $vol_G(S_1)$ randomly sampled edges and at most $|S_1| \cdot C \log n$ deterministically added edges, we can say that $e(S_1,S_U) \leq 2vol_G(S_1) \leq vol_G(S_V)$. Hence we get:
\[
\frac{p \cdot vol_G(S_V)}{vol_{G}(S_V) + 2e(S_1,S_U)} \geq \frac{p \cdot vol_G(S_V)}{3 vol_G(S_V)} \geq \frac{p}{3}.
\]
\end{proof}
Thus, if $\phi_{G}(S_V) \geq \frac{0.1}{k^2}$ then by the above lemma we have $\phi(S) = \Omega(1/k^2)$ and we are done. Otherwise, we are in the second case, which is formalized by the next lemma:
\begin{lemma}
\label{lem:second-case}
If $\phi_{G}(S_V) < \frac{0.1}{k^2}$ then $vol_{G}(S_1) \geq \frac{0.3}{k^2} \cdot vol_{G}(S_V)$.
\end{lemma}

\begin{proof}
Let us define \emph{internal edges} in $S_V$ to be edges whose both endpoints are in $S_V$. It is easy to see that there are $({vol_{G}(S_V) - \delta_{G}(S_V)})/{2}$ internal edges because every such edge is counted twice. Now, since $\phi_{G}(S_V) < \frac{ 0.1}{k^2}$ then $\delta_{G}(S_V) < \frac{0.1}{k^2} \cdot vol_{G}(S_V)$. Therefore, there are at least:
\[
\frac{vol_{G}(S_V) - \delta_{G}(S_V)}{2} \geq \frac{(1-\frac{0.1}{k^2})vol_G(S_V)}{2} \geq 0.4 \cdot vol_G(S_V)
\]
internal edges in $S_V$. Since every internal edge belongs to one of the $\binom{k}{2}$ possible pairs of $S_1,S_2,\ldots,S_k$, there exists a ``heavy'' pair that contains at least:
\begin{equation}
\label{eq:heavy}
\frac{0.4 vol_{G}(S_V)}{ \binom{k}{2}} \geq \frac{0.4}{k^2} \cdot vol_{G}(S_V) 
\end{equation}
internal edges. If $S_1$ is incident to this pair, then we are done.
Otherwise, denote the heavy pair by $\{S_i,S_j\}$. Namely,
 $
 e(S_i,S_j) \geq 0.4 vol_{G}(S_V)/k^2.
 $
Since for every vertex $v \in S_i$, there is the same number of neighbors in $V_j$ and $V_1$, we have 
$
e(S_i,V_1) \geq 0.4 vol_G(S_V)/k^2.
$
Therefore:
\[
e(S_i,S_1) \geq e(S_i,V_i) - \delta_G(S_V) \geq  \frac{0.4}{k^2} \cdot vol_G(S_V) -  \frac{0.1}{k^2} \cdot vol_{G}(S_V) = \frac{0.3}{k^2} \cdot vol_G(S_V)
\]
\end{proof}
By Lemma \ref{lem:basic}, $S_1 \cup S_U$ must be an expanding cut in the subgraph $G_R := G'[S_1 \cup S_U]$, meaning that $\delta_{G_R}(S_1 \cup S_U) = \Omega(1)$. 
Note that by Lemma \ref{lem:second-case}, $vol_G(S_1)$ and therefore $vol_{G_R}(S_1)$ is a good represenative for $vol_G(S_V)$ (up to $1/k^2$).  
Hence, 
$
\delta(S) \geq \delta_{G_R}(S) = \Omega\left( vol_{G_R}(S_1) + vol_{G_R}(S_U) \right) = \Omega\left(\frac{vol(S_V) + vol(S_U)}{k^2} \right),
$
and therefore $\phi(S) = \Omega(1/k^2)$.
\end{proof}
\end{proof}

\subsection{Direct-WTER for Subgraph Isomorphism}
Fix $H$ to be a $k$-node pattern graph without pendant vertices.
\begin{proof}[Proof of Theorem \ref{thm:subgraph-isomorphism}]
Apply Algorithm \ref{sub:basic} on $G$ and then subdivide every $U$-to-$V(G)$ edge $k$ times. Denote by $G'$ be the resulting graph.
\begin{claim}
$H$ is a subgraph of $G$ if and only if it is a subgraph of $G'$.
\end{claim}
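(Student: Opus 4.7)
The plan is to prove both directions by tracking where a copy of $H$ can sit inside $G'$. The forward direction is immediate: since Algorithm \ref{sub:basic} only adds edges incident to the new layer $U$, and the subdivision step only subdivides the newly added $V(G)$-to-$U$ edges, every edge of $G$ is preserved verbatim in $G'$. Hence $G$ is a subgraph of $G'$, and any copy of $H$ in $G$ gives a copy of $H$ in $G'$.

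For the backward direction, I would suppose that $f\colon V(H)\to V(G')$ is an injection realizing $H$ as a subgraph of $G'$, and argue that the image of $f$ lies entirely inside $V(G)$, so that every edge used in the copy is an original edge of $G$. The argument rests on two observations: (i) each vertex of $H$ has degree at least $2$ in $H$, since $H$ has no pendant vertices; and (ii) after subdividing each $V(G)$-to-$U$ edge $k$ times, every internal vertex of such a subdivided path has degree exactly $2$ in $G'$, with its only two neighbors being the adjacent vertices on that path.

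The key step is to rule out any internal subdivision vertex from appearing in the image. If $f$ maps some $x\in V(H)$ to an internal vertex $w_i$ of a path $v=w_0,w_1,\dots,w_k,w_{k+1}=u$ (with $v\in V(G)$ and $u\in U$), then $\deg_H(x)\geq 2$ together with $\deg_{G'}(w_i)=2$ forces both $w_{i-1}$ and $w_{i+1}$ to lie in the image. Iterating this deduction on $w_{i-1}$ and $w_{i+1}$ (which are themselves internal unless they coincide with the endpoints) propagates the constraint all the way out to $v$ and $u$, so the entire $(k{+}2)$-vertex path must be contained in the image of $f$. This contradicts $|V(H)|=k$. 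A symmetric observation finishes off vertices of $U$: every neighbor in $G'$ of a vertex $u\in U$ is an internal subdivision vertex, so if $u$ lay in the image, then at least two such internal vertices would too, which we just excluded.

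Once these exclusions are in place, the image of $f$ lies in $V(G)$, and the edges witnessing the copy are edges of $G'$ with both endpoints in $V(G)$, hence edges of $G$. Thus $H$ is a subgraph of $G$. The one place that could be fiddly is making the ``propagate along the whole path'' step precise — I would phrase it as: let $I\subseteq\{0,1,\dots,k+1\}$ be the set of indices whose corresponding path vertex lies in the image; the argument above shows $I$ is closed under taking neighbors in $\{0,\dots,k+1\}$ whenever it contains an internal index, so if $I$ meets $\{1,\dots,k\}$ then $I=\{0,1,\dots,k+1\}$, giving the $k+2>k$ contradiction. Beyond this, the proof is bookkeeping.
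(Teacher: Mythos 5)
Your proposal is correct and follows essentially the same route as the paper: the paper's one-line argument that a copy of $H$ cannot use a subdivision edge (because $H$ has no pendant vertices and only $k$ vertices) is exactly your degree-$2$ propagation along the subdivided path, which you have simply written out in full. The extra detail (ruling out vertices of $U$, and noting that all remaining $G'$-edges inside $V(G)$ are original edges of $G$) is sound bookkeeping that the paper leaves implicit.
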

\begin{proof}
Since $H$ does not contain pendant vertices and $|V(H)| < k+1$, then any copy of $H$ in $G'$ does not use an edge that was added in the subdivision step. Therefore it is also a copy of $H$ in $G$.
\end{proof}
\begin{claim}
The expansion of $G'$ is $\phi(G') = \Omega(\frac{1}{k}) = \Omega(1)$.
\end{claim}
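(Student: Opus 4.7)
The plan is to exploit the fact that, by Theorem~\ref{thm:basic}, the graph $\hat{G}$ obtained before the subdivision step is an $\Omega(1)$-expander, and to show that subdividing the $V(G)$-to-$U$ edges into length-$(k+1)$ paths costs only a factor of $O(k)$ in conductance. Fix any cut $(S,\bar{S})$ of $G'$ and write $S' = S \cap V(\hat{G})$ for the original vertices in $S$ and $T_S$ for the set of subdivision vertices in $S$. Because the degrees of the original vertices are unchanged by subdivision and each subdivision vertex has degree~$2$,
\[
vol_{G'}(S) = vol_{\hat{G}}(S') + 2\lvert T_S\rvert,
\]
so the problem reduces to bounding $\lvert T_S\rvert$ and $\delta_{G'}(S)$ against quantities that live in $\hat{G}$.

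To control these quantities, I partition the subdivided edges of $\hat{G}$ into five classes $P_1,\dots,P_5$: $P_1$ consists of edges whose endpoints lie on opposite sides of $S'$; $P_2$ (resp.\ $P_3$) consists of edges with both endpoints in $S'$ (resp.\ $\bar{S'}$) and all interior vertices on the same side; and $P_4$ (resp.\ $P_5$) consists of the remaining edges with both endpoints in $S'$ (resp.\ $\bar{S'}$) but at least one interior vertex on the opposite side. Any path associated with $P_1$ has at least one crossing edge, while any path in $P_4 \cup P_5$ has at least two (the interior must ``enter'' and ``leave'' the opposite side), giving
\[
\delta_{G'}(S) \geq \delta_{\hat{G}}(S') + 2(\lvert P_4\rvert + \lvert P_5\rvert).
\]
Since each path has at most $k$ interior vertices, $\lvert T_S\rvert \leq k(\lvert P_1\rvert + \lvert P_2\rvert + \lvert P_4\rvert + \lvert P_5\rvert)$. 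I then bound the four terms separately using $\lvert P_1\rvert \leq \delta_{\hat{G}}(S') \leq \delta_{G'}(S)$, $\lvert P_5\rvert \leq \tfrac{1}{2}\delta_{G'}(S)$ (from the display above), and $2(\lvert P_2\rvert + \lvert P_4\rvert) \leq vol_{\hat{G}}(S')$ (each such edge contributes $2$ to $vol_{\hat{G}}(S')$ from its two endpoints). Substituting back,
\[
vol_{G'}(S) \leq (k+1)\, vol_{\hat{G}}(S') + 3k\, \delta_{G'}(S).
\]

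The final step is a case split on whether $S'$ or $\bar{S'}$ is the smaller side of the cut in $\hat{G}$. By Theorem~\ref{thm:basic}, the smaller side has $\hat{G}$-volume at most $100\,\delta_{\hat{G}}(S') \leq 100\,\delta_{G'}(S)$. Applying the displayed inequality to $S$ in the first case, or symmetrically to $\bar{S}$ in the second case (using $\delta_{G'}(S)=\delta_{G'}(\bar{S})$), I obtain $\min(vol_{G'}(S), vol_{G'}(\bar{S})) = O(k) \cdot \delta_{G'}(S)$, and therefore $\phi_{G'}(S) = \Omega(1/k)$. The main obstacle is controlling the ``silent'' paths in $P_2$, whose $2k$ units of interior degree contribute to $vol_{G'}(S)$ but nothing to $\delta_{G'}(S)$; it is precisely the expansion of $\hat{G}$ that caps $\lvert P_2\rvert$ via $vol_{\hat{G}}(S')$ and prevents them from destroying the conductance.
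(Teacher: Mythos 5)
Your proof is correct and follows the same overall strategy as the paper: reduce the conductance of $G'$ to that of the pre-subdivision graph $\hat{G}$ via Theorem~\ref{thm:basic}, showing that the subdivision costs only a factor $O(k)$. The paper's version is a two-line argument asserting $\delta_{G'}(S)\geq\delta_{\hat{G}}(\hat{S})$ and $vol_{G'}(S)\leq 2k\cdot vol_{\hat{G}}(\hat{S})$; the latter inequality is not literally valid for cuts $S$ containing interior vertices of paths whose endpoints both lie outside $S$, and your five-class decomposition (in particular the $P_4,P_5$ paths contributing two crossing edges each, and the final case split on which side of $\hat{S}$ is smaller) is exactly what is needed to handle all cuts rigorously. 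So your write-up is a more careful rendering of the intended argument rather than a different route.
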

\begin{proof}
For any cut $S \subseteq V(G')$ denote by $\hat{S}$ the corresponding cut in $\hat{G}$, where $\hat{G}$ is the graph obtained from $G$ before the subdivision. More precisely, $\hat{S}$ consists of $S \cap V(G)$ and $S \cap U$.
By Theorem \ref{thm:basic}, we have $\phi_{\hat{G}}(S) = \Omega(1)$. Now, observe that $\delta(S) \geq \delta_{\hat{G}}(\hat{S})$ and that $vol(S) \leq 2k \cdot vol_{\hat{G}}(S) $. Therefore:
\[
\phi(S) \geq \frac{\delta_{\hat{G}}(\hat{S})}{2k \cdot \min\{ vol_{\hat{G}}(\hat{S}),vol_{\hat{G}}(\hat{S})\}} = \Omega\left(\frac{1}{k}\right)
\]
\end{proof}
\end{proof}

\subsection{ED-WTER for $4$-Cycle}
In this WTER we use the expander decomposition from Theorem \ref{thm:ed} to obtain a WTER for $4$-Cycle with conductance $n^{-\eps}$, for any constant $0<\eps < 0.5$, and running time $\tilde{O}(n^{2-0.5\eps} + n^{1.5+\eps})$. Observe that the number of outer edges in such decomposition is $\tilde{O}(n^{-\eps} m) = \tilde{O}(n^{1.5-\eps})$. We will call the vertices that are incident to outer edges \emph{portals}. 
We also assume that we have access to an oracle for $4$-Cycle, such that when given any $n^{-\eps}$-expander $X$ it answers ``YES'' if and only if $X$ contains a $4$-Cycle.
Note that if a single call to the oracle takes $O(n_i^{2-\delta})$ time for some $\delta>0$, where $n_i$ is the number of nodes in the $i^{th}$ expander and $\sum_i{n_i}=n$, then the total time for all calls will also be $\sum_{i}n_i^{2-\delta} =  O(n^{2-\delta})$; therefore, this WTER proves that any polynomial speedup on expanders implies a polynomial speedup in worst-case graphs.

\begin{proof}[Proof of Theorem \ref{thm:ed-wter} (ED-WTER for $4$-Cycle)]
Decomopse $G$ into expanders by using the expander decomposition of Theorem \ref{thm:ed} with parameter $\phi = n^{-\eps}$.
Next, query the oracle on every expander in the decomposition, and if at some point the oracle answers ``YES'' we are done.
Otherwise, we need to check if there is a $4$-Cycle that uses one of the $\tilde{O}(n^{1.5-\eps})$ outer edges. 

To this end, we use a ``high degree - low degree'' trick. First, we check whether there are $4$-cycles that contain any high-degree vertex in the graph (not necessarily a portal). We define a high-degree vertex to be a vertex of degree at least $n^{0.5(1+\eps)}$. Denote by $A \subseteq V(G)$ the set of all high-degree vertices in the graph. A simple counting argument shows that $ |A| = O( n^{1.5}/n^{0.5(1+\eps)}) = O(n^{1-0.5 \eps})$. Now, for every $v \in A$ we can check in $O(n)$ time whether $v$ participates in any $4$-Cycle in the following way: we mark every neighbor-of-a-neighbor of $v$, until we either find a collision between two marked nodes (that is, a $4$-Cycle) or until we marked all neighbors-of-a-neighbor. Note that in any case, we spend at most $O(n)$ time because after marking $n$ nodes, we must find a collision.  Hence, we can check in $O(|A| \cdot n ) = O(n^{2-0.5 \eps})$ time whether there is a vertex in $A$ that participates in a $4$-Cycle.

Next, we need to check if there is a $4$-Cycle that consists only of low-degree vertices and at least one outer edge. We denote by $P$ the set of all portals whose degree is less than
$n^{0.5(1+\eps)}$. For every vertex $v \in P$, we mark the endpoints of every $2$-path whose center is $v$, and that consists of at least one outer edge. If we denote by $b(v)$ the number of outer edges incident to $v$, then the number of such paths (and the time it takes to mark their endpoints) is at most :
\[
\sum_{v \in B} \deg(v) \cdot b(v) \leq n^{0.5(1+\eps)} \sum_{v \in B} b(v) = O(n^{0.5(1+\eps)} \cdot n^{1.5-\eps}) = O(n^{2-0.5\eps}).
\]
If we've encountered a collision between the two endpoints of different $2$-paths, we've found a $4$-Cycle, and we are done. Otherwise, we claim the graph is $4$-Cycle free. To see why, observe that any $4$-Cycle that contains an outer edge must contain at least two outer edges. Therefore, there is a pair of distinct portals $u,v$ that cover those two edges. Any such $4$-Cycle can be broken into two $2$-paths that pass through $u$ and $v$, each consisting of at least one outer edge. Therefore, this $4$-Cycle must be detected by the algorithm. 

The running time of this reduction is therefore $\tilde{O}( n^{1.5+\eps} + n^{2-0.5\eps})$.
\end{proof}

\subsection{Direct-WTER for Minimum Dominating Set}
Our Direct-WTER for Minimum Dominating Set (abbreviated as MDS) uses twins trick. That is, by adding a unique neighbor $u'$ to some vertex $u$, we can more easily control the size of the dominating set in the graph. Formally, we have:
\begin{fact}
\label{fact:domtwin}
If $u'$ is a twin of $u$, there exists a minimum dominating set that contains $u$.
\end{fact}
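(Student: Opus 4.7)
The plan is a straightforward swapping argument in the same spirit as Fact~\ref{fact:twins}. Start with any minimum dominating set $D^*$ of the graph. If $u \in D^*$ we are already done, so the only interesting case is $u \notin D^*$. In that case I would first observe that because $u'$ is a pendant whose unique neighbor is $u$, the vertex $u'$ has to be dominated by either itself or $u$; since $u \notin D^*$, this forces $u' \in D^*$.

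Next I would define $D := (D^* \setminus \{u'\}) \cup \{u\}$, which has the same cardinality as $D^*$ and contains $u$. To verify that $D$ is still a dominating set, the key point is to identify which vertices might have lost their dominator. The only vertices that $u'$ could possibly dominate in $D^*$ are $u'$ itself and its unique neighbor $u$. But in $D$ both of these are dominated by $u$, and every other vertex $v \notin \{u,u'\}$ is still dominated by exactly the same element of $D^* \cap D$ that covered it before. So $D$ dominates $V(G)$, has size $|D^*|$, and contains $u$, which is all that is needed.

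I do not foresee a real obstacle here: the entire argument hinges on the fact that $u'$'s neighborhood is $\{u\}$, which strictly limits the damage done by removing $u'$ from the dominating set, and this pendant structure is exactly what the definition of a twin provides. The only care needed is the case split on whether $u \in D^*$ and the verification that the swap preserves domination for $u$ and $u'$ themselves.
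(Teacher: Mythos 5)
Your swap argument is correct and is exactly the standard justification the paper leaves implicit (the fact is stated without proof): since $u'$ is pendant with unique neighbor $u$, either $u\in D^*$ already or $u'\in D^*$, and exchanging $u'$ for $u$ preserves both cardinality and domination. Nothing is missing.
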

However, if we apply the same construction that worked for MCM (and MVC); adding an expansion layer $U$ and a twin to every $u \in U$, the vertices of $U$ will have to be in a minimum dominating set and therefore will also dominate $V(G)$. Essentially, the size of the minimum dominating set in $G'$ will become $MDS(G') := |U|$, losing all information about $MDS(G)$. To deal with this problem, we add an intermediate layer $R$ that consists of independent vertices that are ``copies'' of vertices in $G$. That is, for each $v_r \in R$ we have some $v \in V(G)$ and $v_r$ is adjacent in $G$ only to the neighborhood of $v$. Hence, $v_r$ doesn't dominate any vertices in $G$ that cannot be dominated by $v$. 

The vertices of $G$ that will have copies in $R$ are chosen by a randomized algorithm, such that every sparse cut in $G$ will have some copies in $R$, while keeping the size of $R$ less than $\eps n$. Then, we add an expansion layer $U$ to $R$, and to control the size of the dominating set we add twins to $U$. Note that $R$ becomes useless to any minimum dominating set because we can always replace copies with their originals. Let us now present the construction in full detail.
\begin{proof}[Proof of Theorem \ref{thm:mds} (Direct-WTER for Minimum Dominating Set)]
Let $0 < \eps$ be a constant and assume w.l.o.g. that $\eps$ is sufficiently small (say, $\eps \leq 0.01$). We construct the graph $G'$ using the following algorithm. The first four steps deal with sampling vertices from $G$ that hit all sparse cuts, and the last steps take copies of the sampled vertices and make the graph expander by adding an expansion layer.
\begin{enumerate}[label=(\arabic*)]
\item Let $Q = \emptyset$. For every $v \in V(G)$, add $v$ to $Q$ with probability $10\eps$, independently from other vertices. 
\item For every vertex $v \in V(G) \setminus Q$, we say that $v$ is \emph{bad} if $|N(v) \cap Q| < \eps \deg(v)$. Add to $Q$ all the bad vertices that have degree at least $\log (1/\eps)/\eps$.
\item Add vertices to $Q$ according to the next deterministic procedure. Decompose $G$ into connected, edge-disjoint subgraphs of size (where size is the number of vertices) in range $[1/\eps,3/\eps]$. From every subgraph in the decomposition pick one arbitrary vertex to $Q$. We can indeed get such decomposition using the next algorithm: 
\paragraph{Decomposition into connected, edge-disjoint subgraphs.} Take any rooted spanning tree of $G$. Let $v$ be the deepest node of degree greater than $2$. Every child of $v$ emanates a simple path to some leaf node. Divide such paths into components of size $1/\eps$ and a small remainder of size less than $1/\eps$ that is attached to $v$. Then, take an arbitrary union of remainders to obtain components of size in range $[1/\eps,2/\eps)$. After removing all the edges of the constructed components from the tree, the subtree of $v$ must be of size strictly less than $1/\eps$. At this stage we can treat $v$ as a vertex of degree at most $2$ with a remainder subtree of size less than $1/\eps$. We continue this process iteratively until the deepest vertex of degree greater than $2$ is the root. Then we might remain with a small connected component of size less than $1/\eps$. We can add this remainder to any other adjacent or incident component, so there will be one component whose size is at most $3/\eps$ instead of $2/\eps$.
\item Add to the graph a set of new vertices $R = \{v_r \mid v \in Q\}$, and for every $v_r \in R$ add all edges between $N_G(v)$ and $v_r$.
\item Add an expansion layer $U$ adjacent to $R$, with the following modifications to Algorithm \ref{sub:basic}. Denote by $\Delta_R := \max\{ \deg_G(v) \mid v_r \in R \}$.
\paragraph{Modifications to Algorithm \ref{sub:basic}.}
The modifications are similar to the ones we did in the constructions for MVC and $k$-Clique. For every $v_r \in R$, we wish to sample $deg_G(v) + C \log n$ edges from $v_r$ to $U$. 
Since we can not afford to add an expansion layer of size $|U| = 5(\Delta_R +  C\log n)$ (because $\Delta_R$ can be as large as $n$), we add an expansion layer of size $|U| = 5(\eps\Delta_R + C \log n)$. Then, every vertex $v_r \in R$ will sample $\eps \deg_G(v) + C \log n$ neighbors from $U$ (instead of $\deg_G(v) + C \log n$). Notice that we replaced the degree-increasing step by sampling $C \log n$ more neighbors from $U$. Thus, by Lemma \ref{lem:basic}, the induced subgraph on $R \cup U$ is an $\Omega(1)$-expander with high probability.
\item To control the size of the dominating set, for every $u \in U$ add a twin vertex $u'$.
\end{enumerate}
\begin{figure}[h!]
\centering
\includegraphics[scale=0.8]{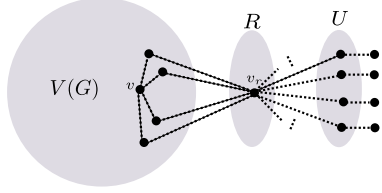}
\caption{The proposed WTER for dominating set. The vertices of $R$ are copies of vertices of $G$, thus sharing the same neighborhood in $G$.}
\label{fig:domset}
\end{figure}
Denote the resulting graph by $G'$. See Figure \ref{fig:domset}. 
Let us start by showing that the blowup is small. Formally, 
\begin{claim}
\label{claim:domset-blowup}
With high probability, $|V(G')| \leq n + 26\eps n$.
\end{claim}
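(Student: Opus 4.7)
The plan is to decompose the blowup as $|V(G')| - n = |R| + |U| + |\{u' : u \in U\}| = |Q| + 2|U|$, and then bound each summand separately. For $|U|$, the bound is essentially by inspection: the construction sets $|U| = 5(\eps \Delta_R + C \log n) \leq 5\eps n + 5C \log n$, using the trivial bound $\Delta_R \leq n$. Therefore $2|U| \leq 10\eps n + 10C \log n$, which is at most $11\eps n$ for $n$ large enough (compared to $1/\eps$).

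For $|Q|$, I would account for the three sources separately. Step (1) contributes a random subset $Q_1$ with $\e[|Q_1|] = 10\eps n$, and a standard Chernoff bound gives $|Q_1| \leq 11\eps n$ with probability $1 - \exp(-\Omega(\eps n))$. Step (3) contributes at most $\eps n$ vertices, since the edge-disjoint decomposition has components of size at least $1/\eps$, hence at most $\eps n$ components, and one representative is picked from each. Step (2), the bad-vertex step, is the main obstacle: I would argue that for every fixed $v$ with $\deg(v) \geq \log(1/\eps)/\eps$, the count $|N(v) \cap Q_1|$ is a sum of $\deg(v)$ independent Bernoulli$(10\eps)$ indicators with mean $10\eps \deg(v)$, so by Chernoff (with $\delta = 0.9$) the probability that $|N(v) \cap Q_1| < \eps \deg(v)$ is at most $\exp(-0.81 \cdot 10\eps \deg(v)/3) \leq \exp(-2.7 \log(1/\eps)) = \eps^{2.7}$. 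Summing over all high-degree candidates, the expected number of bad vertices is at most $\eps^{2.7} n \leq \eps^2 n$, and Markov's inequality yields at most $\eps n$ bad high-degree vertices with probability at least $1 - \eps$.

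Summing the three contributions gives $|Q| \leq 11\eps n + \eps n + \eps n = 13\eps n$, and combined with the earlier bound on $2|U|$ I obtain $|V(G')| - n \leq 13\eps n + 11\eps n < 26\eps n$, as claimed. The only delicate point is the high-probability guarantee for step (2): Markov alone gives constant failure probability $\eps$, rather than the $1/\mathrm{poly}(n)$ that one might hope for. I expect this to be acceptable as the other failure events (the Chernoff event in step (1), and the expander-layer conductance from Theorem~\ref{thm:basic}) already drive the overall failure probability, and the paper's use of ``with high probability'' tolerates an additive $O(\eps)$ term. If a stronger bound were needed, I would refine the argument by applying Chernoff to the sum of bad-vertex indicators via a bounded-differences or edge-exposure martingale (each flip of a single coin $Y_u$ in step (1) changes the bad-count by at most $1 + \deg(u)$, but restricting attention to vertices of bounded degree via a dyadic decomposition could salvage a tail bound).
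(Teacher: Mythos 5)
Your proof follows essentially the same route as the paper's: the same decomposition $|Q|+2|U|$, the same Chernoff bound for step~(1), the same trivial count for step~(3), the same per-vertex Chernoff bound showing each high-degree vertex is bad with probability $\exp(-\Omega(\eps\deg(v)))\le\mathrm{poly}(\eps)$, and the same arithmetic landing under $26\eps n$. The one point you correctly flag --- that the bad-vertex count in step~(2) is only controlled in expectation, so Markov gives failure probability $\eps$ rather than $1-1/\mathrm{poly}(n)$ --- is present in the paper too, and the paper's fix is simpler than the martingale argument you sketch: since whether $|Q|$ exceeds the stated bound is checkable in linear time, one simply reruns the sampling of steps (1)--(2) until it succeeds; each trial succeeds with constant probability, so $O(\log n)$ repetitions suffice with high probability and the reduction stays near-linear time. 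Your bounded-differences alternative would face exactly the difficulty you anticipate (a single coin flip at a high-degree vertex $u$ can change the bad count by $\deg(u)$), so the resampling route is the one to prefer.
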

\begin{proof}
Observe that the blowup in the number of vertices is $|R| + 2|U|$. Let us bound the size of $Q$ (and hence $R$). By the Chernoff bound, the number of sampled vertices in step (1) is at most $11\eps n$ with high probability. 
In addition, since for every $v \in V(G)$ we expect an $10 \eps$-fraction of its neighborhood to be hit by the sample, then by the Chernoff bound again, the probability that $v \in V(G)$ is bad is at most $\exp(-\eps \deg_G(v))$. 
Hence, the expected number of bad vertices whose degrees are at least $\log (1/\eps) / \eps$ (that we add to $Q$ in step (2)) is at most $n \cdot \exp(-\eps \cdot \log (1/\eps)/\eps) = \eps n$. 
Finally, the number of vertices added to $Q$ in step (3) is clearly at most $\eps n$. Overall, we have $|R| = |Q| \leq 14 \eps n$. We note that although the analysis of step (2) works in expectation, we can repeat this process to obtain the bound with high probability.

Next, let us bound $|U|$. By our modification to Algorithm \ref{sub:basic}, $|U| \leq 5(\eps \Delta_R + C \log n) \leq 6 \eps n$. Hence, the total blowup it at most $|R| + 2|U| \leq 14 \eps n + 12 \eps n = 26 \eps n$.
\end{proof}
\noindent
Therefore, to obtain a blowup of $\eps n$ we can apply the algorithm with $\eps' = \eps /26$. 
Now, let us show that computing $MDS(G)$ from $MDS(G')$ is trivial. Formally,
\begin{claim}
\label{claim:domset-size}
The size of the minimum dominating set in $G'$ is $MDS(G') = MDS(G) + |U|$.
\end{claim}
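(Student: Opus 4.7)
The plan is to establish $MDS(G') = MDS(G) + |U|$ by proving the two inequalities separately. The upper bound is the routine direction: I would exhibit the candidate $D \cup U$, where $D$ is any minimum dominating set of $G$, and argue that it dominates every vertex of $G'$. Checking the four types of vertices is straightforward: $V(G)$ is dominated by $D$, the set $U$ is in the solution, each twin $u' \in U'$ is covered by its unique neighbor $u \in U$, and each copy $v_r \in R$ is covered by $U$ because $v_r$ sampled at least one neighbor in $U$ during the expansion-layer step.

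The lower bound is the substantive direction. I would start from an arbitrary minimum dominating set $D'$ of $G'$ and, invoking Fact \ref{fact:domtwin} for each twin pair $(u,u')$, assume $U \subseteq D'$. Setting $D'' = D' \setminus U$, the task reduces to showing $|D''| \geq MDS(G)$. A quick minimality argument first lets me assume $D'' \cap U' = \emptyset$: any twin $u' \in U'$ has its only neighbor $u$ already inside $D'$, so including $u'$ would contradict the minimality of $D'$. Hence $D'' \subseteq V(G) \cup R$, and I define $D_G \subseteq V(G)$ by keeping $D'' \cap V(G)$ as-is and replacing each copy $v_r \in D'' \cap R$ with its source vertex $v \in V(G)$. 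This yields $|D_G| \leq |D''|$.

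The remaining step is to verify that $D_G$ is a dominating set of $G$. For any $v \in V(G)$, its neighbors in $G'$ all lie in $V(G) \cup R$ (since $U$ has no edges to $V(G)$ by construction), so $D'$ must dominate $v$ through an element lying in $D'' \cap (\{v\} \cup V(G) \cup R)$. Unpacking the three possibilities, and using the defining property $N_{G'}(v) \cap R = \{w_r : w \in N_G(v) \cap Q\}$, translates each case into $v$ being dominated by $D_G$ in $G$. I expect the only subtle point to be verifying that swapping $v_r$ for $v$ preserves domination inside $V(G)$; this works precisely because $v_r$'s neighborhood in $V(G)$ equals $N_G(v)$, which is contained in the closed neighborhood of $v$ in $G$, so the replacement dominates a superset of what $v_r$ contributed. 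Combining both inequalities then gives $MDS(G') = MDS(G) + |U|$, which is what the claim asserts.
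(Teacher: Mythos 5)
Your proof is correct and follows essentially the same route as the paper's: force $U$ into a minimum dominating set via the twin fact, replace each copy $v_r \in R$ by its original $v$ (which dominates a superset within $V(G)$), and observe that $U$ has no edges into $V(G)$, so $D \cap V(G)$ must dominate $G$. You are somewhat more explicit than the paper in separating the two inequalities and in ruling out twins of $U$ from the solution, but the underlying argument is identical.
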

\begin{proof}
Note that by Fact \ref{fact:domtwin}, there exists a minimum dominating set $D$ (of $G'$) that contains $U$. In particular, $R$ is completely dominated by $U$ in this dominating set.
We can thus assume that $D$ does not contain any vertex from $R$ because we can always replace such vertices with their originals in $V(G)$. Namely, if $v_r \in D$, then we can remove it and add $v$ instead, and the resulting set will still be a dominating set of the same cardinality. Hence, since no vertex of $G$ is dominated by $U \cup R$, $D \cap V(G)$ must be a minimum dominating set for $G$. Hence, the cardinality of $D$ is $|D| = MDS(G) + |U|$.
\end{proof}
Now, let us prove that the conductance of $G'$ is $\Omega(1)$ (where the constant depends on $\eps$). Broadly speaking, we show that for any cut $S \subseteq V(G)$, if $S$ is not expanding inside $G$ (that is, $\delta_G(S) <<  vol_G(S)$) then there are many edges between $S$ and $R$, and since $R$ is adjacent to $U$ then we get the expansion we want, either between $S$ and $R$ or between $R$ and $U$. The first part of this claim is formalized by the next lemma:
\begin{lemma}
\label{lem:domset}
There exists a constant $c_\eps$ such that for every cut $S \subseteq V(G)$, either $\delta_G(S)  \geq c_\eps vol_G(S)$ or $e(S,R) \geq c_\eps vol_G(S)$.
\end{lemma}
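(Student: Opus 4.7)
The plan is to bound $e(S,R)$ using the two equivalent forms of the identity
\[
e(S,R) \;=\; \sum_{v \in Q} |N_G(v) \cap S| \;=\; \sum_{u \in S} |N_G(u) \cap Q|,
\]
and analyze a case split based on whether high-degree or low-degree vertices dominate $vol_G(S)$. Let $S_H = \{u \in S : \deg_G(u) \geq \log(1/\eps)/\eps\}$ and $S_L = S \setminus S_H$. Throughout, I may assume $\delta_G(S) < c_\eps \cdot vol_G(S)$, since otherwise the first alternative of the lemma already holds. The constant $c_\eps$ will be chosen small enough (polynomially in $\eps$) at the end to absorb the slack coming from $\delta_G(S)$.

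Case A, high-degree dominates ($vol_G(S_H) \geq vol_G(S)/2$): I split further by membership in $Q$. If a constant fraction of $vol_G(S_H)$ lies in $S_H \setminus Q$, then by step (2) every such vertex $u$ satisfies $|N_G(u) \cap Q| \geq \eps \deg_G(u)$, since otherwise $u$ would be ``bad'' of sufficiently large degree and added to $Q$. Summing yields $e(S,R) \geq \eps \cdot vol_G(S_H \setminus Q) = \Omega(\eps \cdot vol_G(S))$. In the complementary sub-case a constant fraction of $vol_G(S_H)$ lies in $S_H \cap Q$; using the other form of the identity,
$e(S,R) \geq \sum_{v \in S_H \cap Q} |N_G(v) \cap S| \geq vol_G(S_H \cap Q) - \delta_G(S)= \Omega(vol_G(S)) - c_\eps \cdot vol_G(S)$,
which is still $\Omega(vol_G(S))$ for small enough $c_\eps$.

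Case B, low-degree dominates ($vol_G(S_L) > vol_G(S)/2$): here step (2) gives nothing directly, so the contribution must come from steps (1) and (3). Step (1) puts each vertex into $Q$ independently with probability $10\eps$, giving $\e\bigl[\sum_{v \in Q_1} |N_G(v) \cap S|\bigr] = 10\eps \cdot vol_G(S)$. Since each vertex of $S_L$ has degree at most $\log(1/\eps)/\eps$, a large $vol_G(S_L)$ forces a proportionally large $|S_L|$, and for $|S_L|$ above an $\Omega_\eps(\log n)$ threshold a Chernoff bound followed by a union bound over cuts of that size yields the desired lower bound with high probability. For smaller $|S_L|$ (where a union bound over step (1) is not affordable), the deterministic step (3) takes over: the spanning-tree decomposition into subtrees of size at most $3/\eps$ places a $Q$-vertex inside every subtree, so the subtrees intersecting $S$ each contribute a $Q$-vertex that lies within a tree path of length $\leq 3/\eps$ from $S$; combined with the many internal edges of $G[S]$ guaranteed by the low-conductance assumption, this forces $e(S,R)$ to be $\Omega_\eps(vol_G(S))$.

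The main obstacle is Case B, and specifically the transition between the ``large $S_L$'' regime (handled by Chernoff plus union bound) and the ``small $S_L$'' regime (handled by step (3)). The crux of the small-$|S_L|$ argument is to exploit that $\delta_G(S) < c_\eps vol_G(S)$ forces a majority of the spanning-tree edges incident to $S$ to be internal to $S$ and hence to lie within subtrees that intersect $S$ in many vertices; the $Q$-vertex chosen from each such subtree is then either in $S$ (and contributes $\Omega(1)$ to $\sum_{v \in Q \cap S}|N_G(v)\cap S|$ via its tree neighbor in $S$) or just outside $S$ (and contributes directly to $\sum_{v \in Q}|N_G(v)\cap S|$). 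Picking $c_\eps$ to be, say, $\eps^{O(1)}/\log(1/\eps)$ is enough to make all the slack terms negligible and complete the proof.
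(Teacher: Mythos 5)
Your Case A is correct and matches the paper's treatment of high-degree vertices: vertices of $S$ in $Q$ contribute all their internal edges to $e(S,R)$ via their copies in $R$, and high-degree vertices not in $Q$ must (after step (2)) have an $\eps$-fraction of their neighbors in $Q$, each such neighbor yielding either a $\delta_G(S)$ edge or an $S$-to-$R$ edge. The genuine gap is in Case B, specifically the ``large $|S_L|$'' branch. A Chernoff bound on the step-(1) coins gives a per-cut failure probability of only $e^{-\Omega(\eps\, vol_G(S_L)/\mathrm{polylog}(1/\eps))}$, and since low-degree vertices have degree $O_\eps(1)$ this is $e^{-O_\eps(|S_L|)}$ --- nowhere near the $n^{-\Omega(|S_L|)}$ needed to union-bound against the $\binom{n}{|S_L|}$ cuts of that size, for any constant $\eps$. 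Concretely, if $S$ is a disjoint union of $t$ cliques each on $\Theta(\log(1/\eps)/\eps)$ vertices (so $\delta_G(S)=0$ and all of $S$ is low-degree), the probability that step (1) misses $S$ entirely is $(1-10\eps)^{|S|}=e^{-\Theta(\eps|S|)}$, which is far too large; there is no $\Omega_\eps(\log n)$ threshold above which the union bound becomes affordable. This is precisely why the construction includes the deterministic steps (2) and (3), and why the paper's proof of this lemma uses no probability at all.

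The repair is that your ``small $|S_L|$'' argument --- the step-(3) decomposition --- actually covers \emph{all} low-degree vertices regardless of $|S_L|$, so the probabilistic branch and the troublesome transition between regimes can simply be deleted. Each subgraph $H_i$ of the decomposition that meets $S$ contributes one guaranteed edge to $\delta_G(S)\cup E(S,R)$: if $V(H_i)\subseteq S$, the designated vertex $v^i\in Q$ has a neighbor in $H_i\subseteq S$, giving an $S$-to-$R$ edge incident to $v^i_r$; if $S$ meets $V(H_i)$ but does not contain it, connectivity of $H_i$ forces an edge of $H_i$ across the cut, contributing to $\delta_G(S)$. (Note this corrects a misattribution in your sketch: when $v^i\notin S$ you cannot claim $v^i$ has a neighbor in $S$, so the contribution in that case lands in $\delta_G(S)$, not $e(S,R)$.) Since each $H_i$ contains at most $3/\eps$ vertices, it accounts for at most $(3/\eps)\cdot\log(1/\eps)/\eps$ edge-endpoints of low-degree vertices of $S$, which yields the deterministic bound with $c_\eps=\Omega(\eps^2/\log(1/\eps))$, matching the paper.
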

\begin{proof}
Consider any cut $S \subseteq V(G)$. 
Observe that every vertex $v \in S$ that had been added to $Q$ (and hence $v_r \in R$) is good for us, because every neighbor of $v$ that is not in $S$ contributes an edge to $\delta_G(S)$, and every neighbor that is in $S$ has an edge to $v_r$ (and therefore contributes to $e(S,R)$). Thus, all the edges that are incident to $v$ have been taken care of.

Now, focus on the case where $v \in S$ is some vertex of degree at least $\log(1/\eps)/\eps$ that had not been added to $Q$. Then, at least $\eps$-fraction of its neighbors have been added to $Q$ in step (1), because otherwise, $v$ would have been considered ``bad'' and added to $Q$. Therefore, at least an $\eps$-fraction of $v$'s neighbors contribute their edges either to $\delta_G(S)$ or to $e(S,R)$, depending on whether they belong to $S$ or not, respectively.

Let us now deal with vertices of degree less than $\log (1/\eps)/\eps$ in $S$ and their incident edges. Denote by $H_1 , H_2 , \ldots, H_{k}$ the subgraphs of $G$ obtained from the decomposition in step (3). For every $i$ there is some $v^i \in V(H_i)$ that had been added to $Q$. Since $H_i$ is connected, then there is at least one edge incident to $v^i$ in $H_i$. 

Now, assume that $H_i$ contains low-degree vertices (that is, vertices of degree less than $\log(1/\eps)/\eps$) and that $S$ contains $V(H_i)$, for some $i \in [k]$. Then since $S$ also contains $v^i$ there is at least one edge between $V(H_i)$ and $R$. We charge all the edges incident to low-degree vertices in $H_i$ to this edge of $v^i_r$. Since there are at most $3/\eps$ vertices in $H_i$, we charged the edge by at most $3/\eps \cdot \log(1/\eps)/\eps = 3 \log(1/\eps)/\eps^2$ edges incident to low-degree vertices. If $S$ only intersects $V(H_i)$ but does not contain it, then since $H_i$ is connected there is some edge in $H_i$ that is in $\delta_G(S)$. The same charging argument works for this edge as well.

To summarize the proof, we showed that at least an $\eps$-fraction of the edges that are incident to high-degree vertices in $S$ are either present in $\delta_G(S)$ or in $e(S,R)$. We also showed that for every $3\log(1/\eps)/\eps^2$ edges that are incident to low-degree vertices in $S$, there is a unique edge either in $\delta_G(S)$ or in $e(S,R)$. Hence, let $c_\eps = \frac{1}{2} \cdot \eps^2/(3\log(1/\eps))$. Then either $\delta_G(S) \geq c_\eps vol_G(S)$ or $e(S,R) \geq c_\eps vol_G(S)$.
\end{proof}
Now we are ready to prove our main lemma. For any cut $S \subseteq V(G')$, we denote by $S_V,S_R,S_U$ the parts of $S$ in $V(G),R$ and $U$, respectively.
\begin{lemma}
\label{lem:domsetx}
With high probability, for any cut $S \subseteq V(G')$ we have $\phi(S) = \Omega(1)$.
\end{lemma}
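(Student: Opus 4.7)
The plan is to fix an arbitrary cut $S \subseteq V(G')$ and show $\phi(S) = \Omega(1)$ by combining Lemma~\ref{lem:domset} with the expansion of the gadget $H^* := G'[R \cup U \cup U']$. By step~(5) of the construction and Lemma~\ref{lem:basic}, $G'[R \cup U]$ is an $\Omega(1)$-expander w.h.p., and Lemma~\ref{lem:twins} shows that adding the twin layer $U'$ only halves the conductance, so $\phi(H^*) = \Omega(1)$. By symmetry assume $vol(S) \leq vol(\bar{S})$, and write $S_V := S \cap V(G)$, $T := S \cap (R \cup U \cup U')$, $A := vol_G(S_V)$, $B := vol_{H^*}(T)$, and $W := vol_{H^*}(R \cup U \cup U')$.

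The first step is to bound $vol(S)$ in terms of $A$ and $B$ via two structural observations: (i) $e(S_V, R) \leq A$, because each cross edge incident to $v \in V(G)$ corresponds to some $v' \in Q \cap N_G(v)$, so $v$ has at most $\deg_G(v)$ of them, and summing over $S_V$ yields $A$; and (ii) $e(V(G), S_R) \leq B/\eps$, because the modified expansion-layer construction guarantees $\deg_{H^*}(v_r) \geq \eps \deg_G(v)$ for every $v_r \in R$. Summing the four parts of $vol(S)$ gives $vol(S) \leq 2A + (1 + 1/\eps)\, B = O((A+B)/\eps)$ for constant $\eps$, so it suffices to prove $\delta(S) = \Omega(A+B)$.

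Following the template of Lemmas~\ref{lem:first-case} and \ref{lem:second-case}, apply Lemma~\ref{lem:domset} to $S_V$: either (I) $\delta_G(S_V) \geq c_\eps A$, or (II) $e(S_V, R) \geq c_\eps A$. Case (II) further splits into (IIa) at least half of the cross edges go to $R \setminus S_R$ and hence contribute directly to the cut $\delta(S)$, or (IIb) at least half lie in $E(S_V, S_R)$, which combined with (ii) forces $B \geq \Omega(\eps A)$. Combined with the $H^*$-expansion bound $\delta_{H^*}(T) = \Omega(\min(B, W - B))$, the "balanced" regime $B \leq W/2$ immediately yields $\delta(S) = \Omega(A+B)$ in every sub-case.

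The main obstacle is the regime $B > W/2$, where $\delta_{H^*}(T) = \Omega(W-B)$ may be much smaller than $B$. The plan is to use the construction-level identity $e(V(G), R) \geq \Omega(W/\eps)$ --- each $v_r$ contributes $\deg_G(v)$ cross edges but only $\eps\deg_G(v) + O(\log n)$ to $W$: of those cross edges, at most $(W-B)/\eps$ can land in $R \setminus S_R$ by (ii) and at most $A$ can be absorbed by $S_V$ by (i), so the leftover $\Omega(B/\eps) - O(A)$ cross edges must be cut and contribute to $\delta(S)$. The assumption $vol(S) \leq vol(\bar{S})$ enters here to force $A$ below this surplus, ensuring $\delta(S) = \Omega(vol(S))$ and hence $\phi(S) = \Omega(1)$.
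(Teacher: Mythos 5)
Your overall strategy is the same as the paper's: combine Lemma~\ref{lem:domset} with the expansion of the $R\cup U$ gadget (Lemma~\ref{lem:basic} plus the twins via Lemma~\ref{lem:twins}), bound $vol(S)$ by $O(vol_G(S_V)+vol(S_R))$, and run the trichotomy ``$\delta_G(S_V)$ large / $e(S_V,R\setminus S_R)$ large / $vol(S_R)$ large.'' What breaks your argument is the choice of normalization. The paper assumes $|S_U|\le|U|/2$ and only ever needs the one-sided bound $\phi(S)\ge\delta(S)/vol(S)$; under that normalization Lemma~\ref{lem:basic} directly gives $e(S_R,U\setminus S_U)=\Omega(vol(S_R))$, so the gadget always pays for $vol(S_R)$ in full and there is no unbalanced case. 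You instead normalize by $vol(S)\le vol(\bar S)$, which (as you note) leaves open the regime $B>W/2$ where the gadget expansion only yields $\Omega(W-B)\ll B$, and your rescue for that regime rests on the identity $e(V(G),R)=\Omega(W/\eps)$.

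That identity is false. Each $v_r\in R$ has exactly $\deg_G(v)$ edges into $V(G)$ but $\eps\deg_G(v)+C\log n$ edges into $U$ (the construction replaces the degree-padding step by $C\log n$ extra samples per vertex of $R$), and the twins contribute another $2|U|$ to $W$. Hence $W=\Theta(\eps\,vol_G(Q)+|Q|\log n)$ while $e(V(G),R)=vol_G(Q)$. For a bounded-degree input $G$ (a perfectly legitimate, even typical, hard case) one has $vol_G(Q)=\Theta(|Q|)$ and therefore $e(V(G),R)=O(W/\log n)$, nowhere near $\Omega(W/\eps)$: the gadget's volume is dominated by the $\Theta(|Q|\log n)$ padding edges, which have no $V(G)$-side counterpart. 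Consequently there is no ``surplus'' of cross edges forced into the cut, and the final paragraph of your argument does not go through; the appeal to $vol(S)\le vol(\bar S)$ to ``force $A$ below the surplus'' is not substantiated either. The fix is simply to adopt the paper's normalization: replace $S$ by $\bar S$ if necessary so that $|S_U|\le|U|/2$, lower-bound $\delta(S)/vol(S)$ rather than $\delta(S)/\min\{vol(S),vol(\bar S)\}$, and then the trichotomy you already set up closes the proof with no unbalanced regime to handle.
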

\begin{proof}
Assume w.l.o.g. that $|S_U| \leq |U|/2$.  Observe that the number of internal edges in $S$ is at most some constant fraction of $vol_G(S_V) + vol(S_R)$ and therefore it suffices to show that $\delta(S) = \Omega(vol_G(S_V) + vol(S_R))$. Note that $\delta(S) \geq \delta_G(S_V) + e(S_V,R \setminus S_R) + e(S_R,U \setminus S_U)$. It follows from the proof of Lemma \ref{lem:basic} that $e(S_R,U \setminus S_U) = \Omega( vol(S_R) )$ with high probability and therefore it suffices to show that either $\delta_G(S_V) + e(S_V,R \setminus S_R) = \Omega(vol_G(S_V) )$ or that $vol(S_R) = \Omega vol_G(S_V)$. 

Therefore, if $\delta_G(S_V) = \Omega(vol_G(S_V))$ then we are done. 
Otherwise, by Lemma \ref{lem:domset} we have $e(S_V,R) = \Omega(vol_G(S_V))$.
If at least half those edges belong to $e(S_V,R \setminus S_R)$ then we are done again. 
Otherwise, we have $vol(S_R) = \Omega( vol_G(S_V) )$ and we are done. 
\end{proof}
\end{proof}

\section{Definition}
\label{sec:definition}

Let us conclude by formalizing the notion of a (fine-grained) WTER that captures Definitions \ref{def:direct} and \ref{def:ed-wter}.
\begin{definition}[$(\phi(n),a(n,m),b(n,m))$-WTER]
Problem $\mathcal{A}$ has a worst-case to expander-case self-reduction with conductance $\phi(n)$ and times $a(n,m),b(n,m)$ 
, if there exists a randomized algorithm with oracle access to $\mathcal{A}$, that solves $\mathcal{A}$ on any $n$-nodes, $m$-edges graph by making at most $k=k(n)$ calls to the oracle on instances $G_1,\ldots,G_{k}$, such that:
\begin{itemize}
    \item each $G_i$ is a $\phi(n)$-expander.
    \item for every $\eps>0$ there is a $\delta>0$ such that the reduction runs in time $O(a(n,m)^{1-\delta})$ and $\sum_{i=1}^{k} b(n_i,m_i)^{1-\eps} = O(a(n,m))^{1-\delta}$, where $n_i,m_i$ are the number of nodes and edges in $G_i$, respectively.
\end{itemize}
\end{definition}
\noindent 
We will abbreviate by saying that $\AG$ admits a $(\phi(n),a(n,m),b(n,m))$-WTER.
The intuition behind this definition can be described as follows. If a problem admits a $(\phi(n),a(n,m),b(n,m))$-WTER, then any ``fast'' (i.e. $b(n,m)^{1-\eps}$ time) algorithm on $\phi(n)$-expanders can be used to obtain a ``faster'' ($a(n,m)^{1-\delta}$ time) algorithm on worst-case graphs. For example, if a problem admits a $((\log n)^{-1},n^3,n^2)$-WTER. Then the existence of any $O(n^{2(1-\eps)})$-time algorithm that solves the problem on $n$-vertex $(\log n)^{-1}$-expanders implies that the problem can be solved on worst-case graphs in time $O(n^{3(1-\delta)})$. In this work we mainly focused on $(\Omega(1),a(n,m),a(n,m))$-WTERs, where $a(n,m)$ is the best known upper bound to solve the problem, essentialy showing the problem is as easy on worst-case graphs as on expanders.

For completeness, we formalize the intutition in the next theorem:
\begin{theorem}
If problem $\mathcal{A}$ has a $(\phi(n),a(n,m),b(n,m))$-WTER, then for all $\eps>0$ there is a $\delta>0$ such that if there is an algorithm that solves $\mathcal{A}$ on $\phi(n)$-expanders in time $O(b(n,m)^{1-\eps})$, then $\mathcal{A}$ can be solved on worst-case graphs in time $O(a(n,m)^{1-\delta})$.
\end{theorem}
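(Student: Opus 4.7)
The plan is to prove the theorem by a direct unfolding of the definition of a $(\phi(n),a(n,m),b(n,m))$-WTER: simply plug the hypothesized fast expander algorithm into the oracle slot of the reduction and bound the total running time. There is essentially nothing clever going on; the entire content is that the definition was crafted exactly so this implication holds, and the argument is a careful matching of quantifiers.

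Concretely, I would first fix an arbitrary $\eps>0$ and obtain from the WTER the corresponding $\delta>0$ for which the two time bounds in the definition hold. Next, I would describe the worst-case algorithm for $\mathcal{A}$: run the reduction, which takes $O(a(n,m)^{1-\delta})$ time and produces at most $k(n)$ expander instances $G_1,\ldots,G_k$ with sizes $(n_i,m_i)$; whenever the reduction would call the oracle on some $G_i$, instead invoke the hypothesized algorithm that solves $\mathcal{A}$ on $\phi(n)$-expanders in time $O(b(n_i,m_i)^{1-\eps})$. Correctness is immediate from the correctness of the reduction (which assumes oracle answers) together with the fact that each $G_i$ is indeed a $\phi(n)$-expander, so the hypothesized algorithm is applicable.

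For the time analysis, I would add up the cost of the oracle replacements as
\[
\sum_{i=1}^{k} O\!\left( b(n_i,m_i)^{1-\eps} \right) = O\!\left( \sum_{i=1}^{k} b(n_i,m_i)^{1-\eps} \right) = O\!\left( a(n,m)^{1-\delta} \right),
\]
where the last equality is exactly the summation guarantee in the WTER definition for the chosen $(\eps,\delta)$. Adding the $O(a(n,m)^{1-\delta})$ cost of running the reduction itself yields a total of $O(a(n,m)^{1-\delta})$, as desired. Since the reduction is randomized and the hypothesized expander algorithm may also be, I would note in passing that correctness is with high probability, inherited from the WTER and the expander algorithm.

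The only thing that requires any care is the order of quantifiers: the definition promises that \emph{for every} $\eps>0$ \emph{there exists} a $\delta>0$ making both inequalities hold simultaneously, so I must first commit to $\eps$ (the speedup available on expanders), then extract the matching $\delta$ (the resulting speedup on worst-case graphs), and only then set up the algorithm. I do not expect any real obstacle here; the lemma is essentially definitional, and its purpose is to justify treating WTERs as bona fide fine-grained reductions.
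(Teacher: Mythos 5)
Your proposal is correct and matches the paper's own (much terser) proof exactly: both simply replace the oracle calls with the hypothesized expander algorithm and invoke the two time bounds from the WTER definition. Your version is in fact more careful about the quantifier order and the summation guarantee, but the underlying argument is identical.
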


\begin{proof}
Suppose that problem $\mathcal{A}$ has a $(\phi(n),a(n,m),b(n,m))$-WTER and that there is an algorithm that runs in time $O(b(n,m)^{1-\eps})$ that solves the problem on $\phi(n)$-expanders. Then we can simulate the algorithm in the WTER on worst-case graphs by replacing the oracle with the fast algorithm on expanders. The total running time is $O(a(n,m)^{1-\delta})$, for some $\delta>0$.
\end{proof}

\section*{Acknowledgments}
We thank the anonymous reviewers for their helpful comments that helped us improve the paper.

\bibliographystyle{plainurl}
\bibliography{references}

\begin{thebibliography}{10}

\bibitem{ABW18}
Amir Abboud, Arturs Backurs, and Virginia~Vassilevska Williams.
\newblock If the current clique algorithms are optimal, so is {V}aliant's
  parser.
\newblock {\em {SIAM} J. Comput.}, 47(6):2527--2555, 2018.
\newblock \href {https://doi.org/10.1137/16M1061771}
  {\path{doi:10.1137/16M1061771}}.

\bibitem{ABKZ22}
Amir Abboud, Karl Bringmann, Seri Khoury, and Or~Zamir.
\newblock Hardness of approximation in {P} via short cycle removal: cycle
  detection, distance oracles, and beyond.
\newblock In Stefano Leonardi and Anupam Gupta, editors, {\em {STOC} '22: 54th
  Annual {ACM} {SIGACT} Symposium on Theory of Computing, Rome, Italy, June 20
  - 24, 2022}, pages 1487--1500. {ACM}, 2022.
\newblock \href {https://doi.org/10.1145/3519935.3520066}
  {\path{doi:10.1145/3519935.3520066}}.

\bibitem{ACK20}
Amir Abboud, Vincent Cohen{-}Addad, and Philip~N. Klein.
\newblock New hardness results for planar graph problems in {P} and an
  algorithm for sparsest cut.
\newblock In Konstantin Makarychev, Yury Makarychev, Madhur Tulsiani, Gautam
  Kamath, and Julia Chuzhoy, editors, {\em Proccedings of the 52nd Annual {ACM}
  {SIGACT} Symposium on Theory of Computing, {STOC} 2020, Chicago, IL, USA,
  June 22-26, 2020}, pages 996--1009. {ACM}, 2020.
\newblock \href {https://doi.org/10.1145/3357713.3384310}
  {\path{doi:10.1145/3357713.3384310}}.

\bibitem{AD16}
Amir Abboud and S{\o}ren Dahlgaard.
\newblock Popular conjectures as a barrier for dynamic planar graph algorithms.
\newblock In Irit Dinur, editor, {\em {IEEE} 57th Annual Symposium on
  Foundations of Computer Science, {FOCS} 2016, 9-11 October 2016, Hyatt
  Regency, New Brunswick, New Jersey, {USA}}, pages 477--486. {IEEE} Computer
  Society, 2016.
\newblock \href {https://doi.org/10.1109/FOCS.2016.58}
  {\path{doi:10.1109/FOCS.2016.58}}.

\bibitem{AGV15}
Amir Abboud, Fabrizio Grandoni, and Virginia~Vassilevska Williams.
\newblock Subcubic equivalences between graph centrality problems, apsp and
  diameter.
\newblock In {\em Proceedings of the twenty-sixth annual ACM-SIAM symposium on
  Discrete algorithms}, pages 1681--1697. SIAM, 2014.

\bibitem{AKLPST21}
Amir Abboud, Robert Krauthgamer, Jason Li, Debmalya Panigrahi, Thatchaphol
  Saranurak, and Ohad Trabelsi.
\newblock Gomory-hu tree in subcubic time.
\newblock {\em arXiv preprint arXiv:2111.04958}, 2021.

\bibitem{AKT21}
Amir Abboud, Robert Krauthgamer, and Ohad Trabelsi.
\newblock Subcubic algorithms for gomory-hu tree in unweighted graphs.
\newblock In Samir Khuller and Virginia~Vassilevska Williams, editors, {\em
  {STOC} '21: 53rd Annual {ACM} {SIGACT} Symposium on Theory of Computing,
  Virtual Event, Italy, June 21-25, 2021}, pages 1725--1737. {ACM}, 2021.
\newblock \href {https://doi.org/10.1145/3406325.3451073}
  {\path{doi:10.1145/3406325.3451073}}.

\bibitem{AKT22friendly}
Amir Abboud, Robert Krauthgamer, and Ohad Trabelsi.
\newblock Friendly cut sparsifiers and faster {G}omory-{H}u trees.
\newblock In {\em Proceedings of the 2022 Annual ACM-SIAM Symposium on Discrete
  Algorithms (SODA)}, pages 3630--3649. SIAM, 2022.

\bibitem{DBLP:conf/focs/AbboudW14}
Amir Abboud and Virginia~Vassilevska Williams.
\newblock Popular conjectures imply strong lower bounds for dynamic problems.
\newblock In {\em 55th {IEEE} Annual Symposium on Foundations of Computer
  Science, {FOCS} 2014, Philadelphia, PA, USA, October 18-21, 2014}, pages
  434--443. {IEEE} Computer Society, 2014.
\newblock \href {https://doi.org/10.1109/FOCS.2014.53}
  {\path{doi:10.1109/FOCS.2014.53}}.

\bibitem{AVW16}
Amir Abboud, Virginia~Vassilevska Williams, and Joshua~R. Wang.
\newblock Approximation and fixed parameter subquadratic algorithms for radius
  and diameter in sparse graphs.
\newblock In Robert Krauthgamer, editor, {\em Proceedings of the Twenty-Seventh
  Annual {ACM-SIAM} Symposium on Discrete Algorithms, {SODA} 2016, Arlington,
  VA, USA, January 10-12, 2016}, pages 377--391. {SIAM}, 2016.
\newblock \href {https://doi.org/10.1137/1.9781611974331.ch28}
  {\path{doi:10.1137/1.9781611974331.ch28}}.

\bibitem{AR18}
Udit Agarwal and Vijaya Ramachandran.
\newblock Fine-grained complexity for sparse graphs.
\newblock In {\em Proceedings of the 50th Annual ACM SIGACT Symposium on Theory
  of Computing}, pages 239--252, 2018.

\bibitem{Kaplan22}
Daniel Agassy, Dani Dorfman, and Haim Kaplan.
\newblock Expander decomposition with fewer inter-cluster edges using a
  spectral cut player.
\newblock {\em CoRR}, abs/2205.10301, 2022.
\newblock \href {http://arxiv.org/abs/2205.10301} {\path{arXiv:2205.10301}},
  \href {https://doi.org/10.48550/arXiv.2205.10301}
  {\path{doi:10.48550/arXiv.2205.10301}}.

\bibitem{alev2017graph}
Vedat~Levi Alev, Nima Anari, Lap~Chi Lau, and Shayan~Oveis Gharan.
\newblock Graph clustering using effective resistance.
\newblock {\em arXiv preprint arXiv:1711.06530}, 2017.

\bibitem{AlmanW20}
Josh Alman and Virginia~Vassilevska Williams.
\newblock A refined laser method and faster matrix multiplication.
\newblock In {\em Proceedings of the 2021 {ACM-SIAM} Symposium on Discrete
  Algorithms, {SODA} 2021}, pages 522--539, 2021.
\newblock \href {https://doi.org/10.1137/1.9781611976465.32}
  {\path{doi:10.1137/1.9781611976465.32}}.

\bibitem{alon1995color}
Noga Alon, Raphael Yuster, and Uri Zwick.
\newblock Color-coding.
\newblock {\em Journal of the ACM (JACM)}, 42(4):844--856, 1995.

\bibitem{AYZ97}
Noga Alon, Raphael Yuster, and Uri Zwick.
\newblock Finding and counting given length cycles.
\newblock {\em Algorithmica}, 17(3):209--223, 1997.
\newblock \href {https://doi.org/10.1007/BF02523189}
  {\path{doi:10.1007/BF02523189}}.

\bibitem{asadi2022worst}
Vahid~R Asadi, Alexander Golovnev, Tom Gur, and Igor Shinkar.
\newblock Worst-case to average-case reductions via additive combinatorics.
\newblock {\em arXiv preprint arXiv:2202.08996}, 2022.

\bibitem{BRSV17}
Marshall Ball, Alon Rosen, Manuel Sabin, and Prashant~Nalini Vasudevan.
\newblock Average-case fine-grained hardness.
\newblock In Hamed Hatami, Pierre McKenzie, and Valerie King, editors, {\em
  Proceedings of the 49th Annual {ACM} {SIGACT} Symposium on Theory of
  Computing, {STOC} 2017, Montreal, QC, Canada, June 19-23, 2017}, pages
  483--496. {ACM}, 2017.
\newblock \href {https://doi.org/10.1145/3055399.3055466}
  {\path{doi:10.1145/3055399.3055466}}.

\bibitem{BW12}
Nikhil Bansal and Ryan Williams.
\newblock Regularity lemmas and combinatorial algorithms.
\newblock {\em Theory Comput.}, 8(1):69--94, 2012.
\newblock \href {https://doi.org/10.4086/toc.2012.v008a004}
  {\path{doi:10.4086/toc.2012.v008a004}}.

\bibitem{bast2006matching}
Holger Bast, Kurt Mehlhorn, Guido Schafer, and Hisao Tamaki.
\newblock Matching algorithms are fast in sparse random graphs.
\newblock {\em Theory of Computing Systems}, 39(1):3--14, 2006.

\bibitem{beigel1999finding}
Richard Beigel.
\newblock Finding maximum independent sets in sparse and general graphs.
\newblock In {\em SODA}, volume~99, pages 856--857. Citeseer, 1999.

\bibitem{BEPR12}
Nicolas Bourgeois, Bruno Escoffier, Vangelis~T Paschos, and Johan~MM van Rooij.
\newblock Fast algorithms for max independent set.
\newblock {\em Algorithmica}, 62(1):382--415, 2012.

\bibitem{CPS21}
Yi{-}Jun Chang, Seth Pettie, Thatchaphol Saranurak, and Hengjie Zhang.
\newblock Near-optimal distributed triangle enumeration via expander
  decompositions.
\newblock {\em J. {ACM}}, 68(3):21:1--21:36, 2021.
\newblock \href {https://doi.org/10.1145/3446330} {\path{doi:10.1145/3446330}}.

\bibitem{chen2006improved}
Jianer Chen, Iyad~A Kanj, and Ge~Xia.
\newblock Improved parameterized upper bounds for vertex cover.
\newblock In {\em International symposium on mathematical foundations of
  computer science}, pages 238--249. Springer, 2006.

\bibitem{MFlinear22}
Li~Chen, Rasmus Kyng, Yang~P. Liu, Richard Peng, Maximilian~Probst Gutenberg,
  and Sushant Sachdeva.
\newblock Maximum flow and minimum-cost flow in almost-linear time.
\newblock {\em CoRR}, abs/2203.00671, 2022.
\newblock \href {http://arxiv.org/abs/2203.00671} {\path{arXiv:2203.00671}},
  \href {https://doi.org/10.48550/arXiv.2203.00671}
  {\path{doi:10.48550/arXiv.2203.00671}}.

\bibitem{chuzhoy2019deterministic}
Julia Chuzhoy, Yu~Gao, Jason Li, Danupon Nanongkai, Richard Peng, and
  Thatchaphol Saranurak.
\newblock A deterministic algorithm for balanced cut with applications to
  dynamic connectivity, flows, and beyond.
\newblock In {\em 61st {IEEE} Annual Symposium on Foundations of Computer
  Science, {FOCS} 2020}, pages 1158--1167, 2020.
\newblock \href {https://doi.org/10.1109/FOCS46700.2020.00111}
  {\path{doi:10.1109/FOCS46700.2020.00111}}.

\bibitem{CK19}
Julia Chuzhoy and Sanjeev Khanna.
\newblock A new algorithm for decremental single-source shortest paths with
  applications to vertex-capacitated flow and cut problems.
\newblock In Moses Charikar and Edith Cohen, editors, {\em Proceedings of the
  51st Annual {ACM} {SIGACT} Symposium on Theory of Computing, {STOC} 2019,
  Phoenix, AZ, USA, June 23-26, 2019}, pages 389--400. {ACM}, 2019.
\newblock \href {https://doi.org/10.1145/3313276.3316320}
  {\path{doi:10.1145/3313276.3316320}}.

\bibitem{Cygan+16}
Marek Cygan, Holger Dell, Daniel Lokshtanov, D{\'{a}}niel Marx, Jesper
  Nederlof, Yoshio Okamoto, Ramamohan Paturi, Saket Saurabh, and Magnus
  Wahlstr{\"{o}}m.
\newblock On problems as hard as {CNF-SAT}.
\newblock {\em {ACM} Trans. Algorithms}, 12(3):41:1--41:24, 2016.
\newblock \href {https://doi.org/10.1145/2925416} {\path{doi:10.1145/2925416}}.

\bibitem{DLW20}
Mina Dalirrooyfard, Andrea Lincoln, and Virginia~Vassilevska Williams.
\newblock New techniques for proving fine-grained average-case hardness.
\newblock In Sandy Irani, editor, {\em 61st {IEEE} Annual Symposium on
  Foundations of Computer Science, {FOCS} 2020, Durham, NC, USA, November
  16-19, 2020}, pages 774--785. {IEEE}, 2020.
\newblock \href {https://doi.org/10.1109/FOCS46700.2020.00077}
  {\path{doi:10.1109/FOCS46700.2020.00077}}.

\bibitem{EvaldD16}
Jacob Evald and S{\o}ren Dahlgaard.
\newblock Tight hardness results for distance and centrality problems in
  constant degree graphs.
\newblock {\em CoRR}, abs/1609.08403, 2016.
\newblock URL: \url{http://arxiv.org/abs/1609.08403}, \href
  {http://arxiv.org/abs/1609.08403} {\path{arXiv:1609.08403}}.

\bibitem{FGK06}
Fedor~V Fomin, Fabrizio Grandoni, and Dieter Kratsch.
\newblock Measure and conquer: A simple ${O}(2^{0.288n})$ independent set
  algorithm.
\newblock In {\em SODA}, volume~6, pages 18--25. Citeseer, 2006.

\bibitem{fomin2004exact}
Fedor~V Fomin, Dieter Kratsch, and Gerhard~J Woeginger.
\newblock Exact (exponential) algorithms for the dominating set problem.
\newblock In {\em International Workshop on Graph-Theoretic Concepts in
  Computer Science}, pages 245--256. Springer, 2004.

\bibitem{oded_expander}
Oded Goldreich.
\newblock Basic facts about expander graphs.
\newblock In {\em Studies in Complexity and Cryptography. Miscellanea on the
  Interplay between Randomness and Computation}, pages 451--464. Springer,
  2011.

\bibitem{HLS22}
Monika Henzinger, Andrea Lincoln, and Barna Saha.
\newblock The complexity of average-case dynamic subgraph counting.
\newblock In Joseph~(Seffi) Naor and Niv Buchbinder, editors, {\em Proceedings
  of the 2022 {ACM-SIAM} Symposium on Discrete Algorithms, {SODA} 2022, Virtual
  Conference / Alexandria, VA, USA, January 9 - 12, 2022}, pages 459--498.
  {SIAM}, 2022.
\newblock \href {https://doi.org/10.1137/1.9781611977073.23}
  {\path{doi:10.1137/1.9781611977073.23}}.

\bibitem{henzinger2022fine}
Monika Henzinger, Ami Paz, and AR~Sricharan.
\newblock Fine-grained complexity lower bounds for families of dynamic graphs.
\newblock {\em arXiv preprint arXiv:2208.07572}, 2022.

\bibitem{KVV04}
Ravi Kannan, Santosh Vempala, and Adrian Vetta.
\newblock On clusterings: Good, bad and spectral.
\newblock {\em Journal of the ACM}, 51(3):497--515, 2004.
\newblock \href {https://doi.org/10.1145/990308.990313}
  {\path{doi:10.1145/990308.990313}}.

\bibitem{KLOS14}
Jonathan~A Kelner, Yin~Tat Lee, Lorenzo Orecchia, and Aaron Sidford.
\newblock An almost-linear-time algorithm for approximate max flow in
  undirected graphs, and its multicommodity generalizations.
\newblock In {\em Proceedings of the 25th Annual ACM-SIAM Symposium on Discrete
  Algorithms}, pages 217--226. SIAM, 2014.

\bibitem{KM22}
Alexander~S. Kulikov and Ivan Mihajlin.
\newblock Polynomial formulations as a barrier for reduction-based hardness
  proofs.
\newblock {\em CoRR}, abs/2205.07709, 2022.
\newblock \href {http://arxiv.org/abs/2205.07709} {\path{arXiv:2205.07709}},
  \href {https://doi.org/10.48550/arXiv.2205.07709}
  {\path{doi:10.48550/arXiv.2205.07709}}.

\bibitem{LS21}
Jason Li and Thatchaphol Saranurak.
\newblock Deterministic weighted expander decomposition in almost-linear time.
\newblock {\em CoRR}, abs/2106.01567, 2021.
\newblock URL: \url{https://arxiv.org/abs/2106.01567}, \href
  {http://arxiv.org/abs/2106.01567} {\path{arXiv:2106.01567}}.

\bibitem{LWW18}
Andrea Lincoln, Virginia~Vassilevska Williams, and R.~Ryan Williams.
\newblock Tight hardness for shortest cycles and paths in sparse graphs.
\newblock In Artur Czumaj, editor, {\em Proceedings of the Twenty-Ninth Annual
  {ACM-SIAM} Symposium on Discrete Algorithms, {SODA} 2018, New Orleans, LA,
  USA, January 7-10, 2018}, pages 1236--1252. {SIAM}, 2018.
\newblock \href {https://doi.org/10.1137/1.9781611975031.80}
  {\path{doi:10.1137/1.9781611975031.80}}.

\bibitem{YS22}
Yaowei Long and Thatchaphol Saranurak.
\newblock Near-optimal deterministic vertex-failure connectivity oracles.
\newblock {\em CoRR}, abs/2205.03930, 2022.
\newblock \href {http://arxiv.org/abs/2205.03930} {\path{arXiv:2205.03930}},
  \href {https://doi.org/10.48550/arXiv.2205.03930}
  {\path{doi:10.48550/arXiv.2205.03930}}.

\bibitem{MV80}
Silvio Micali and Vijay~V. Vazirani.
\newblock An ${O}(\sqrt{|v|} \cdot |{E}|)$ algorithm for finding maximum
  matching in general graphs.
\newblock In {\em 21st Annual Symposium on Foundations of Computer Science,
  Syracuse, New York, USA, 13-15 October 1980}, pages 17--27. {IEEE} Computer
  Society, 1980.
\newblock \href {https://doi.org/10.1109/SFCS.1980.12}
  {\path{doi:10.1109/SFCS.1980.12}}.

\bibitem{motwani1994average}
Rajeev Motwani.
\newblock Average-case analysis of algorithms for matchings and related
  problems.
\newblock {\em Journal of the ACM (JACM)}, 41(6):1329--1356, 1994.

\bibitem{MS04}
Marcin Mucha and Piotr Sankowski.
\newblock Maximum matchings via gaussian elimination.
\newblock In {\em 45th Annual IEEE Symposium on Foundations of Computer
  Science}, pages 248--255. IEEE, 2004.

\bibitem{NSW17}
Danupon Nanongkai, Thatchaphol Saranurak, and Christian {Wulff-Nilsen}.
\newblock Dynamic minimum spanning forest with subpolynomial worst-case update
  time.
\newblock In {\em 2017 IEEE 58th Annual Symposium on Foundations of Computer
  Science (FOCS)}, pages 950--961. IEEE, 2017.

\bibitem{OSV12}
Lorenzo Orecchia, Sushant Sachdeva, and Nisheeth~K. Vishnoi.
\newblock Approximating the exponential, the {L}anczos method and an
  {$\tilde{O}(m)$}-time spectral algorithm for balanced separator.
\newblock In {\em Proceedings of the 44th Annual ACM Symposium on Theory of
  Computing}, pages 1141--1160, 2012.
\newblock \href {https://doi.org/10.1145/2213977.2214080}
  {\path{doi:10.1145/2213977.2214080}}.

\bibitem{OV11}
Lorenzo Orecchia and Nisheeth~K Vishnoi.
\newblock Towards an {SDP}-based approach to spectral methods: A
  nearly-linear-time algorithm for graph partitioning and decomposition.
\newblock In {\em Proceedings of the 22nd Annual ACM-SIAM Symposium on Discrete
  Algorithms}, pages 532--545. SIAM, 2011.

\bibitem{razgon2009faster}
Igor Razgon.
\newblock Faster computation of maximum independent set and parameterized
  vertex cover for graphs with maximum degree 3.
\newblock {\em Journal of Discrete Algorithms}, 7(2):191--212, 2009.

\bibitem{SW19}
Thatchaphol Saranurak and Di~Wang.
\newblock Expander decomposition and pruning: Faster, stronger, and simpler.
\newblock In {\em Proceedings of the 30th Annual {ACM-SIAM} Symposium on
  Discrete Algorithms, {SODA} 2019}, pages 2616--2635, 2019.
\newblock \href {https://doi.org/10.1137/1.9781611975482.162}
  {\path{doi:10.1137/1.9781611975482.162}}.

\bibitem{ST13}
Daniel~A Spielman and Shang-Hua Teng.
\newblock A local clustering algorithm for massive graphs and its application
  to nearly linear time graph partitioning.
\newblock {\em SIAM Journal on computing}, 42(1):1--26, 2013.

\bibitem{ST14}
Daniel~A Spielman and Shang-Hua Teng.
\newblock Nearly linear time algorithms for preconditioning and solving
  symmetric, diagonally dominant linear systems.
\newblock {\em SIAM Journal on Matrix Analysis and Applications},
  35(3):835--885, 2014.

\bibitem{van2011exact}
Johan~MM Van~Rooij and Hans~L Bodlaender.
\newblock Exact algorithms for dominating set.
\newblock {\em Discrete Applied Mathematics}, 159(17):2147--2164, 2011.

\bibitem{williams2018some}
Virginia~Vassilevska Williams.
\newblock On some fine-grained questions in algorithms and complexity.
\newblock In {\em Proceedings of the International Congress of Mathematicians:
  Rio de Janeiro 2018}, pages 3447--3487. World Scientific, 2018.

\bibitem{WW18}
Virginia~Vassilevska Williams and R.~Ryan Williams.
\newblock Subcubic equivalences between path, matrix, and triangle problems.
\newblock {\em J. {ACM}}, 65(5):27:1--27:38, 2018.
\newblock \href {https://doi.org/10.1145/3186893} {\path{doi:10.1145/3186893}}.

\bibitem{woeginger2003exact}
Gerhard~J Woeginger.
\newblock Exact algorithms for {NP}-hard problems: A survey.
\newblock In {\em Combinatorial optimization—eureka, you shrink!}, pages
  185--207. Springer, 2003.

\bibitem{YZ97}
Raphael Yuster and Uri Zwick.
\newblock Finding even cycles even faster.
\newblock {\em {SIAM} J. Discret. Math.}, 10(2):209--222, 1997.
\newblock The conference version appeared in ICALP 1994.
\newblock \href {https://doi.org/10.1137/S0895480194274133}
  {\path{doi:10.1137/S0895480194274133}}.

\end{thebibliography}
\end{document}